\journalname{Communications in Mathematical Physics}
\newlength{\Taille}
\newcommand{\flechebas}[1]{
  \settoheight{\unitlength}{\mbox{$#1$}}
  \settowidth{\Taille}{\mbox{~${\scriptstyle #1}$}}
  \addtolength{\unitlength}{4ex}
  \begin{picture}(0,1)
    \put(0,1){\vector(0,-1){1}}
    \put(0,0.5){\makebox(0,0){${\scriptstyle #1}$ \hspace{\the\Taille}}}
  \end{picture}}
\newcommand{\flechehaut}[1]{
  \settoheight{\unitlength}{\mbox{$#1$}}
  \settowidth{\Taille}{\mbox{~${\scriptstyle #1}$}}
  \addtolength{\unitlength}{4ex}
  \begin{picture}(0,1)
    \put(0,0){\vector(0,1){1}}
    \put(0,0.5){\makebox(0,0){\hspace{\the\Taille}${\scriptstyle #1}$ }}
  \end{picture}}
\newcommand{\flechedroite}[1]{
  \settowidth{\unitlength}{\mbox{$#1$}}
  \settoheight{\Taille}{\mbox{${\scriptstyle #1}$}}
  \addtolength{\Taille}{1ex}
  \addtolength{\unitlength}{4ex}
  \raisebox{0.5ex}{
  \begin{picture}(1,0)
    \put(0,0){\vector(1,0){1}}
    \put(0.5,0){\makebox(0,0){${\scriptstyle #1}$ \vspace{\the\Taille}}}
  \end{picture}}}
\newcommand{\flechegauche}[1]{
  \settowidth{\unitlength}{\mbox{$#1$}}
  \settoheight{\Taille}{\mbox{${\scriptstyle #1}$}}
  \addtolength{\Taille}{1ex}
  \addtolength{\unitlength}{4ex}
  \raisebox{0.5ex}{
  \begin{picture}(1,0)
    \put(1,0){\vector(-1,0){1}}
    \put(0.5,0){\makebox(0,0){${\scriptstyle #1}$ \vspace{\the\Taille}}}
  \end{picture}}}
\newcommand{\cqfd}{\hfill $\square$}
\newcommand{\RR}{\mathbb{R}}
\newcommand{\NN}{\mathbb{N}}
\newcommand{\CC}{\mathbb{C}}
\newcommand{\cH}{\mathcal{H}}
\newcommand{\cF}{\mathcal{F}}
\newcommand{\cD}{\mathcal{D}}
\newcommand{\cK}{\mathcal{K}}
\newcommand{\cB}{\mathcal{B}}
\newcommand{\cE}{\mathcal{E}}
\newcommand{\cJ}{\mathcal{J}}
\newcommand{\cC}{\mathcal{C}}
\newcommand{\cA}{\mathcal{A}}
\newcommand{\cS}{\mathcal{S}}
\newcommand{\cN}{\mathcal{N}}
\newcommand{\cI}{\mathcal{I}}
\newcommand{\cM}{\mathcal{M}}
\begin{document}

\title{{Non-existence of ground states in the translation invariant Nelson model}}
\titlerunning{Non-existence of ground states in the translation invariant Nelson model}

\author{Thomas Norman Dam}
\institute{Aarhus Universitet, Nordre Ringgade 1, 8000 Aarhus C  Denmark.\\ \email{cyperman@gmail.com}}
\authorrunning{Thomas Norman Dam}

\date{}

\maketitle

\begin{abstract}
In this paper, the translation invariant massless Nelson model with ultraviolet cutoff is investigated. It is proven, that the fiber operators have no ground state if the interaction is not infrared regular. Similar results have been obtained by other authors under severe assumptions on the regularity of the mass shell. These assumptions are removed in this paper and we use only rotation invariance and non-degeneracy of ground states to obtain our result.
\end{abstract}

\section{Introduction}
In this paper, we study the translation invariant massless Nelson model with ultraviolet cutoff. The model can (after a unitary transformation) be written as a direct integral of fiber operators $\{ H_\mu(\xi) \}_{\xi\in \RR^3}$. The spectral properties of these operators were first investigated by by J. Fr\"ohlich in his Phd-thesis, which was published in the two papers \cite{Frlich1} and \cite{Frlich2}. Fr\"ohlich showed, that if the field is massive or there is an infrared cut-off then $H_\mu(\xi)$ has a ground state for $\xi$ in an open ball around $0$. He also proved, that if the field is massless and a ground state exists for sufficiently many $H_\mu(\xi)$, then one can reach some physically unacceptable conclusions. The aim of this paper is to prove that $H_\mu(\xi)$ does not have a ground state if the field is massless and no infrared conditions are assumed. 

In the paper \cite{Pizzo}, it is proven, that ground states exists in a non-equivalent Fock representation. A consequence of this result is that the usual "taking the massgap to 0" strategy for proving existence of ground states does not work. This strongly indicates that there should be no ground state.

A proof of absence of ground states in a similar model was given by I. Herbst and D. Hasler in the paper \cite{HerbstHasler}. They consider the fiber operators of the massless and translation invariant Pauli-Fierz model $\{ \cH(\xi) \}_{\xi\in \RR^3}$. They prove that $\cH(\xi_0)$ has no ground state if $\xi\mapsto \inf(\sigma(\cH(\xi)))$ is differentiable at $\xi_0$ and has a non-zero derivative. One may easily work out the same problem for the Nelson model and obtain the same conclusions. However, proving the existence of a non-zero derivative is an extremely hard problem and such a result has only been achieved for weak coupling and small $\xi$ (see \cite{Hasler}). Furthermore, $\xi=0$ is a global minimum for $\xi\mapsto \inf(\sigma(H_\mu(\xi)))$ and therefore the derivative (if it exists) must be 0. However, $H_\mu(0)$ has no ground states as we shall prove below.

In fact, we shall prove that $H_\mu(\xi)$ has no ground state for any non-zero coupling strength $\mu$ and $\xi\in \RR^3$. Our proof is based on strategy used by I. Herbst and D. Hasler, but we remove the assumption regarding the existence of a non-zero derivative. Instead, we use rotation invariance of the map $\xi\mapsto \inf(\sigma(H_\mu(\xi)))$, non degeneracy of ground states and the HVZ-theorem to reach our conclusion.

\section{Notation and preliminaries}
Let $(\cM,\cF,\mu)$ be a $\sigma$-finite measure space and $X$ be a separable Hilbert space. We will write $L^2(\cM,\cF,\mu,X)=L^2(\cM,\cF,\mu)\otimes X$ for the Hilbert space valued $L^2$-space where we define $f\otimes \psi=(k\mapsto f(k)\psi)$ for $f\in L^2(\cM,\cF,\mu)$ and $\psi\in X$. In case $\cM$ is a topological space we will write $\cB(\cM)$ for the Borel $\sigma$-algebra.

Fix $\nu\in \NN$. Let $\cH=L^2(\RR^\nu,\cB(\RR^\nu),\lambda_\nu)$ where $\lambda_\nu$ is the Lebesgue measure. For $A\in \cB(\RR^\nu)$ measurable we define $\cH_A=L^2(\RR^\nu,\cB(\RR^\nu),\lambda_\nu^A)$ where $\lambda_\nu^A=1_A\lambda_\nu$. For $n\in \NN_0$ we write $\cH_A^{\otimes n}$ for the $n$-fold tensor product. Note that
\begin{align*}
\cH_A^{\otimes n}&=L^2(\RR^{n\nu},\cB(\RR^{n\nu}),1_{A^n}\lambda_{n\nu})
\end{align*}
for $n\geq 1$ and $\cH_A^{\otimes 0}=\CC$ by definition. For $n\geq 1$ we write $\cS_n$ for the set of permutations of $\{1,\dots,n\}$ and define the symmetric projection 
\begin{align*}
(S_nf)(k_1,\dots,k_n)=\frac{1}{n!}\sum_{\sigma\in \cS_n} f(k_{\sigma(1)},\dots,k_{\sigma(n)}).
\end{align*}
We note that $f\in S_n (\cH_A^{\otimes n})=\cH_A^{\otimes_s n}$ if and only if $f\in \cH_A^{\otimes n}$ and $f(k_1,\dots,k_n)=f(k_{\sigma(1)},\dots,k_{\sigma(n)})$ for any $\sigma\in \cS_n$. To include $n=0$ we define $S_0=1$ on $S_0 \cH_A^{\otimes 0}=\CC=\cH_A^{\otimes_s 0}$. The bosonic Fock space is defined by
\begin{equation*}
\cF(\cH_A)=\bigoplus_{n=0}^\infty \cH_A^{\otimes_s n}.
\end{equation*}
We will write an element $\psi\in \cF(\cH_A)$ in terms of its coordinates as $\psi=(\psi^{(n)})$ and define the vacuum $\Omega=(1,0,0,\dots)$. Furthermore, for $\cD\subset \cH$ and $f_1,\dots,f_n\in \cH$ we introduce the notation
\begin{align*}
S_n(f_1\otimes\cdots\otimes f_n)&=f_1\otimes_s\cdots \otimes_s f_n,\\
\epsilon(f_i)&=\sum_{n=0}^{\infty}\frac{f_i^{\otimes n}}{\sqrt{n!}},\\
\cJ(\cD)&=\{ \Omega \}\cup \{f_1\otimes_s\cdots \otimes_s f_n\mid f_i\in \cD,n\in \mathbb{N} \} \,\,\,\, \text{and} \\  \cE(\cD)&= \{ \epsilon(f)\mid f\in \cD \}
\end{align*}
where $f_i^{\otimes 0}=\Omega$. One may prove that if $\cD\subset \cH$ is dense then $ \cE(\cD)$ is a total subset of $\cF(\cH)$. From this one easily concludes $ \cJ(\cD)$ is total. 

For $g\in \cH_A$ one defines the annihilation operator $a(g)$ and creation operator $a^{\dagger}(g)$ on symmetric tensors in $\cF(\cH_A)$ using $a(g)\Omega=0,a^\dagger(g)\Omega=g$ and
\begin{align*}
a(g)( f_1\otimes_s\cdots\otimes_s f_n )&=\frac{1}{\sqrt{n}}\sum_{i=1}^{n} \langle g,f_i \rangle f_1\otimes_s\cdots\otimes_s \widehat{f}_i\otimes_s\cdots\otimes_s f_n\\
a^\dagger(g)( f_1\otimes_s\cdots\otimes_s f_n )&=\sqrt{n+1}g\otimes_s f_1\otimes_s\cdots\otimes_s f_n
\end{align*}
where $\widehat{f}_i$ means that this element is omitted. One can show that these operators extends to closed operators on $\cF(\cH_A)$ and that $(a(g))^*=a^{\dagger}(g)$. Furthermore, we have the canonical commutation relations which states
\begin{equation*}
\overline{[a(f),a(g)]}=0=\overline{[a^\dagger(f),a^\dagger(g)]} \,\,\text{and}\,\,\, \overline{[a(f),a^\dagger(g)]}=\langle f,g\rangle.
\end{equation*}
One now introduces the selfadjoint field operators
\begin{equation*}
\varphi(g)=\overline{ a(g)+a^\dagger(g) }.
\end{equation*}
Let $\omega:\RR^\nu\rightarrow \RR$ be a map. We define $\omega^{(n)}(k_1,\dots,k_n)=\omega(k_1)+\cdots+\omega(k_n)$, $\omega^{(0)}=0$ and
\begin{equation}\label{Sumdecomp}
d\Gamma_A(\omega)=\bigoplus_{n=0}^{\infty}d\Gamma_A^{(n)}(\omega),
\end{equation}
where $d\Gamma_A^{(n)}(\omega)$ is multiplication by $\omega^{(n)}$ on $\cH_A^{\otimes n}$. The number operator is defined as $N_A=d\Gamma_A(1)$. Let $U$ be a bounded map from $\cH_A$ to $\cH_A$ with norm smaller than $1$. Then we define the map
\begin{equation*}
\Gamma(U)=1\oplus \bigoplus_{n=1}^\infty  U\otimes\cdots\otimes U\mid_{\cH_A^{\otimes_s n}}=1\oplus \bigoplus_{n=1}^\infty  \Gamma^{(n)}(U).
\end{equation*}
on $\cF(\cH_A)$. If $U$ is unitary then $\Gamma(U)$ is unitary as well. The following lemma can be found in \cite{Hirokawa1}:
\begin{lemma}\label{Lem:FundamentalIneq}
	Assume $\omega> 0$ almost everywhere. If $g\in \cH_A$ and $\omega^{-1/2}g \in \cH_A$ then $\varphi(g)$, $a^\dagger(g)$ and $a(g)$ are $d\Gamma_A(\omega)^{1/2}$ bounded and
	\begin{equation*}
	\lVert \varphi(g) \psi \lVert\leq 2 \lVert (\omega^{-1/2}+1)g \lVert  \lVert (d\Gamma_A(\omega)+1)^{1/2}\psi \lVert   
	\end{equation*}
	for all $\psi\in \cD(d\Gamma_A(\omega)^{1/2})$. In particular, $\varphi(g)$ is infinitesimally $d\Gamma(\omega)$ bounded. Furthermore, $d\Gamma(\omega)+\varphi(g)\geq  -\lVert \omega^{-1/2}g \lVert^2$. 
\end{lemma}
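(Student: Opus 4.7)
By the spectral theorem I may represent $\omega$ as multiplication by a positive function on $L^2(X,d\mu)$; second quantisation and Weyl operators are natural under unitary equivalence of the one-particle space, so this reduction is harmless. For $\psi \in \cH^{\otimes_s n}\cap \cD(d\Gamma(\omega)^{1/2})$ regarded as a symmetric kernel, Cauchy--Schwarz applied to
\[
(a(g)\psi)(k_2,\dots,k_n)=\sqrt{n}\int \overline{g(k_1)}\,\psi(k_1,\dots,k_n)\,d\mu(k_1)
\]
combined with the symmetry of $\psi$ yields $\|a(g)\psi\|^2\leq \|\omega^{-1/2}g\|^2\,\langle\psi,d\Gamma(\omega)\psi\rangle$, and the canonical commutation relation then gives $\|a^\dagger(g)\psi\|^2=\|a(g)\psi\|^2+\|g\|^2\|\psi\|^2$. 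Assembling these via $\|\varphi(g)\psi\|^2\leq 2(\|a(g)\psi\|^2+\|a^\dagger(g)\psi\|^2)$ and the elementary inequality $\|\omega^{-1/2}g\|^2+\|g\|^2\leq \|(\omega^{-1/2}+1)g\|^2$ (valid because the cross term $2\int \omega^{-1/2}|g|^2\,d\mu$ is nonnegative) yields the stated $\varphi(g)$ bound on $\cD(d\Gamma(\omega)^{1/2})$. The $N^{1/2}$ bound is the special case $\omega\equiv 1$, and infinitesimal $d\Gamma(\omega)$-boundedness follows from the scalar estimate $t^{1/2}\leq \lambda t+(4\lambda)^{-1}$ applied through the functional calculus of $d\Gamma(\omega)+1$.

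For the spectral identity the natural device is the Weyl operator $W(h)=\exp(a^\dagger(h)-a(h))$, which is unitary on $\cF(\cH)$ for every $h\in \cH$ and satisfies the Bogoliubov relations $W(h)^*a(f)W(h)=a(f)+\langle f,h\rangle$ and $W(h)^*a^\dagger(f)W(h)=a^\dagger(f)+\langle h,f\rangle$. A direct computation on a dense core of smooth finite-particle vectors shows that, for $h\in \cD(\omega)$,
\[
W(h)^*d\Gamma(\omega)W(h)=d\Gamma(\omega)+\varphi(\omega h)+\langle h,\omega h\rangle.
\]
Taking $h=\omega^{-1}g$ would therefore produce a unitary equivalence between $d\Gamma(\omega)+\varphi(g)$ and $d\Gamma(\omega)-\|\omega^{-1/2}g\|^2$, immediately proving the claimed spectral formula.

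The main obstacle is precisely that the hypothesis $g\in \cD(\omega^{-1/2})$ provides only $\omega^{-1/2}g\in \cH$, not $\omega^{-1}g\in \cH$, so $h=\omega^{-1}g$ need not belong to the one-particle space and the Weyl operator is then ill-defined. I dispose of this by an infrared approximation: set $g_\epsilon = \chi(\omega\geq \epsilon)g$, which does lie in $\cD(\omega^{-1})$, apply the Weyl identity at level $\epsilon$ to obtain $\sigma(d\Gamma(\omega)+\varphi(g_\epsilon))=\sigma(d\Gamma(\omega))-\|\omega^{-1/2}g_\epsilon\|^2$, and let $\epsilon\downarrow 0$. Dominated convergence gives $g_\epsilon\to g$ and $\omega^{-1/2}g_\epsilon\to \omega^{-1/2}g$ in $\cH$, while the bound from the first paragraph applied to $g-g_\epsilon$ shows that $\varphi(g_\epsilon)$ converges to $\varphi(g)$ with vanishing relative bound with respect to $(d\Gamma(\omega)+1)^{1/2}$. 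This delivers norm resolvent convergence of the self-adjoint operators $d\Gamma(\omega)+\varphi(g_\epsilon)$ (self-adjointness being supplied by Kato--Rellich applied to the infinitesimal bound) to $d\Gamma(\omega)+\varphi(g)$, and norm resolvent convergence transports the spectral identity to the limit.
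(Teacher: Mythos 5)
The paper does not give its own proof of this lemma; it is invoked as a well-known fact with a reference to the literature (the Betz--Hiroshima--L\"orinczi book), so there is no internal argument to compare against. Your blind proof is correct and follows the standard route: Cauchy--Schwarz in a multiplication representation of $\omega$, together with the CCR identity $\lVert a^\dagger(g)\psi\rVert^2=\lVert a(g)\psi\rVert^2+\lVert g\rVert^2\lVert\psi\rVert^2$, gives the relative bound with the stated constant; and the spectral shift comes from conjugating $d\Gamma(\omega)$ by the Weyl operator $W(\omega^{-1}g)$, with the infrared cutoff $g_\epsilon=\chi(\omega\geq\epsilon)g$ correctly bridging the gap between the hypothesis $g\in\cD(\omega^{-1/2})$ and the requirement $\omega^{-1}g\in\cH$ needed to form the Weyl operator. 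The limiting argument is also sound: your relative-bound estimate applied to $g-g_\epsilon$ gives norm-resolvent convergence of $d\Gamma(\omega)+\varphi(g_\epsilon)$ to $d\Gamma(\omega)+\varphi(g)$, and norm-resolvent convergence supplies both spectral semicontinuity properties (no spectrum appears near a resolvent point of the limit, and every spectral point of the limit is approximated by spectral points of the approximants), which combined with $\sigma(d\Gamma(\omega)+\varphi(g_\epsilon))=\sigma(d\Gamma(\omega))-\lVert\omega^{-1/2}g_\epsilon\rVert^2$, $\lVert\omega^{-1/2}g_\epsilon\rVert^2\to\lVert\omega^{-1/2}g\rVert^2$, and closedness of $\sigma(d\Gamma(\omega))$ yields the asserted equality of spectra.
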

 We have the following obvious lemma which is useful for calculations
\begin{lemma}\label{Lem:FundamentalCalculations}
Let $f,g\in \cH$. Then $\epsilon(g)\in \cD(N^n)$ for all $n\geq 0$. Furthermore:
\begin{enumerate}
\item[\textup{(1)}]  $a(g)\epsilon(f)=\langle g,f \rangle \epsilon(f)$ and $\langle \varepsilon(g),\varepsilon(f) \rangle=e^{\langle g,f\rangle}$.

\item[\textup{(2)}] If $f\in \cD(\omega)$ then $\epsilon(f)\in \cD(d\Gamma(\omega))$ and $d\Gamma(\omega)\epsilon(f)=a^\dagger(\omega f)\epsilon(f)$. In particular, $\langle\epsilon(g),d\Gamma(\omega)\epsilon(f) \rangle=\langle g,\omega f\rangle e^{\langle g,f\rangle}$.
\end{enumerate}
\end{lemma}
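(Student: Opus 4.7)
The plan is to exploit the series definition $\epsilon(f)=\sum_{n\geq 0} f^{\otimes n}/\sqrt{n!}$ together with the fact that the creation/annihilation operators are closed, reducing everything to termwise computations on the components.

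First I would check the basic integrability statements. Since $N$ acts as multiplication by $n$ on $\cH^{\otimes_s n}$, the squared norm of $N^k \epsilon(g)$ equals $\sum_{n\geq 0} n^{2k}\|g\|^{2n}/n!$, which converges for every $k$, giving $\epsilon(g)\in\cD(N^k)$. Noting that $f^{\otimes n}\in\cH^{\otimes_s n}$ (it is already invariant under $S_n$), the inner product identity follows from orthogonality of different levels of Fock space:
\begin{equation*}
\langle \epsilon(g),\epsilon(f)\rangle=\sum_{n=0}^\infty \frac{\langle g^{\otimes n},f^{\otimes n}\rangle}{n!}=\sum_{n=0}^\infty \frac{\langle g,f\rangle^n}{n!}=e^{\langle g,f\rangle}.
\end{equation*}

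For the eigenvalue identity in (1), I would apply the defining formula for $a(g)$ on the symmetric tensor $f\otimes_s\cdots\otimes_s f$ and observe that each of the $n$ terms gives the same contribution $\langle g,f\rangle f^{\otimes(n-1)}/\sqrt n$, so that $a(g)(f^{\otimes n}/\sqrt{n!})=\langle g,f\rangle f^{\otimes(n-1)}/\sqrt{(n-1)!}$. Summing this over $n\geq 1$ produces $\langle g,f\rangle \epsilon(f)$, and since $a(g)$ is closed and the partial sums of $\epsilon(f)$ converge to $\epsilon(f)$ with images converging to $\langle g,f\rangle\epsilon(f)$, we conclude $\epsilon(f)\in\cD(a(g))$ and $a(g)\epsilon(f)=\langle g,f\rangle\epsilon(f)$.

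For (2), the key observation is that on $f^{\otimes n}$ the operator $d\Gamma(\omega)$ yields $\sum_{k=1}^n f\otimes\cdots\otimes \omega f\otimes\cdots\otimes f$, which by a symmetrization count equals $n\,(\omega f)\otimes_s f^{\otimes_s(n-1)}$; comparing with $a^\dagger(\omega f)(f^{\otimes(n-1)}/\sqrt{(n-1)!})=\sqrt n\,(\omega f)\otimes_s f^{\otimes_s(n-1)}/\sqrt{(n-1)!}$, one gets termwise the identity $d\Gamma(\omega)(f^{\otimes n}/\sqrt{n!})=a^\dagger(\omega f)(f^{\otimes(n-1)}/\sqrt{(n-1)!})$. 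Lemma~\ref{Lem:FundamentalIneq} shows $a^\dagger(\omega f)$ is $N^{1/2}$-bounded, hence applying it to the (convergent) partial sums of $\epsilon(f)$ yields a convergent series in $\cF(\cH)$, and closedness of $d\Gamma(\omega)$ then gives $\epsilon(f)\in \cD(d\Gamma(\omega))$ with $d\Gamma(\omega)\epsilon(f)=a^\dagger(\omega f)\epsilon(f)$. The final inner product identity follows by taking the adjoint and using (1): $\langle\epsilon(g),a^\dagger(\omega f)\epsilon(f)\rangle=\langle a(\omega f)\epsilon(g),\epsilon(f)\rangle=\langle g,\omega f\rangle e^{\langle g,f\rangle}$.

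The only nontrivial bookkeeping is the symmetrization count used to re-express $d\Gamma(\omega) f^{\otimes n}$ in terms of $a^\dagger(\omega f)$; everything else is either series arithmetic or a soft closed-operator argument, so I expect no genuine obstacle.
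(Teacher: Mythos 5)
Your argument is correct. The paper offers no proof of this lemma, stating it as ``obvious,'' so there is no paper argument to compare against; your termwise computation on the coherent-vector components $f^{\otimes n}/\sqrt{n!}$, combined with the $N^{1/2}$-boundedness of $a(g)$, $a^\dagger(g)$ (Lemma~\ref{Lem:FundamentalIneq} with $\omega=1$) and closedness of $a(g)$ and $d\Gamma(\omega)$ to pass from partial sums to $\epsilon(f)$, is precisely the standard verification and checks out in every step, including the symmetrization count $d\Gamma(\omega)f^{\otimes n}=n\,(\omega f)\otimes_s f^{\otimes(n-1)}=\sqrt{n}\,a^\dagger(\omega f)f^{\otimes(n-1)}$.
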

Define
\begin{align*}
\cC\cS_A=\{ f\in  \cH_A\mid \text{$\exists R>0$ such that $1_{B_R(0)\cap A}f=1_A f$ almost everywhere} \}.
\end{align*}
which is a dense subspace inside $\cH_A$. We will also need the contraction $P_A:\cH_{\RR^\nu}\rightarrow \cH_A$ defined by
\begin{align*}
P_A(v)=v
\end{align*}
almost everywhere on $A$. Let $m_i: \RR^\nu\rightarrow \RR$ be the $i$'th coordinate projection and define for $n\in \NN_0$
\begin{align*}
d\Gamma_A(m)=(d\Gamma_A(m_1),\dots,d\Gamma_A(m_{\nu}))\\
d\Gamma_A^{(n)}(m)=(d\Gamma_A^{(n)}(m_1),\dots,d\Gamma_A^{(n)}(m_{\nu}))
\end{align*}
For $n\geq 1$ we define $m^{(n)}:(\RR^{\nu})^n\rightarrow \RR^{\nu}$ by $m^{(n)}(k)=k_1+\dots+k_n$. Then for $K:\RR^\nu\rightarrow \RR$ we have
\begin{align*}
K(\xi-d\Gamma_A(m))=K(\xi)\oplus \bigoplus_{n=0}^\infty K(\xi-d\Gamma_A^{(n)}(m))
\end{align*}
where $K(\xi-d\Gamma_A^{(n)}(m))$ is to be interpreted as the multiplication operator on $\cH_A^{\otimes_s n}$ defined by the map $k\mapsto K(\xi-m^{(n)}(k))$. If $A=\RR^\nu$ then we will leave $A$ out in all of the above notation.

Let $B=(B_1,\dots,B_\nu)$ be a tuple of operators on $\cH_A$ and define for any $k\in \RR^{\nu}$ an operator on $\cD(B):=\cap_{i=1}^\nu\cD(B_i)$ by
\begin{align*}
k\cdot B=\sum_{i=1}^{\nu}k_iB_i.
\end{align*}
For $\psi\in \cD(B)$ and $k\in \RR^\nu$ we have the following useful inequality 
\begin{align}\nonumber
\lVert k\cdot B\psi\lVert^2&=\sum_{i,j=1}^{\nu}\langle k_iB_i\psi,k_jB_j\psi  \rangle\leq \sum_{i,j=1}^{\nu}\lvert k_i\lvert \lvert k_j\lvert \lVert B_i\psi \lVert\lVert B_j\psi \lVert\\&\leq \sum_{i,j=1}^{\nu} \frac{1}{2}\lvert k_i\lvert^2 \lVert B_j\psi \lVert^2 +\frac{1}{2}\lvert k_j\lvert\lVert B_i\psi \lVert^2=\lvert k\lvert^2\lVert B\psi\lVert^2\label{eq:Vectorcauchy}
\end{align}

\section{The operator - basic properties and the main result}
Fix $K,\omega:\RR^\nu\rightarrow [0,\infty)$ measurable and let $v\in \cH$. Define for $A\in \cB(\RR^\nu)$ and $\xi\in \RR^\nu$ the Hamiltonian
\begin{align*}
H_\mu(\xi,A)=K(\xi-d\Gamma_A(m))+d\Gamma_A(\omega)+\mu \varphi(v_A)
\end{align*} 
where $v_A=P_A(v)$. We will drop $A$ from the notation if $A=\RR^{\nu}$.
\begin{lemma}\label{Lem:BasicProperties}
Assume $\omega>0$ almost everywhere, $v\in \cD(\omega^{-1/2})$ and $A\in \cB(\RR^\nu)$. Then $H_\mu(\xi,A)$ is selfadjoint on
\begin{equation*}
\cD(H_0(\xi,A))=\cD(d\Gamma_A(\omega))\cap \cD(K(\xi-d\Gamma_A(m)))
\end{equation*}
and $H_\mu(\xi,A)\geq -\mu^2\lVert \omega^{-1/2}v\lVert$. Furthermore, $H_\mu(\xi,A)$ is essentially selfadjoint on any core for $H_0(\xi,A)$.
\end{lemma}
\begin{proof}
For each $n\in \NN$ we define a map $G^{(n)}_\xi=K(\xi-m^{(n)})+\omega^{(n)}$ and let $G_\xi^{(0)}=K(\xi)$. Define $B_\xi=\bigoplus_{n=0}^\infty G^{(n)}_\xi$ on $\cF(\cH_A)$.  Using that $ K(\xi-m^{(n)}) $ and  $ \omega^{(n)}$ are non negative multiplication operators we find	
\begin{align*}
\cD(B_\xi)= \cD(K_A(\xi-d\Gamma(m)))\cap \cD(d\Gamma(\omega)) \,\, \text{and}\,\, H_0(\xi,A)=B_\xi,
\end{align*}
so $ H_0(\xi,A)$ is selfadjoint. For $\psi\in \cD(H_0(\xi,A))$ we have $\lVert d\Gamma(\omega)\psi \lVert\leq  \lVert H_0(\xi,A)\psi \lVert$ and so we find via Lemma \ref{Lem:FundamentalIneq} and the Kato Rellich theorem that
\begin{align*}
H_\mu (\xi,A):=H_0(\xi,A)+\mu \varphi(v_A)
\end{align*}
is selfadjoint on $\cD(H_0(\xi,A))$ and any core for $H_0(\xi,A)$ is a core for $H_\mu(\xi,A)$. Using Lemma \ref{Lem:FundamentalIneq} again we find $H_\mu (\xi,A)\geq  -\mu^2\lVert \omega^{-1/2}v\lVert$.\cqfd
\end{proof}
\noindent \textbf{Hypothesis 1:}
We assume
\begin{enumerate}
\item[\textup{(1)}]  $K\in C^2(\RR^\nu,\RR)$ is non negative and there is $C_K>0$ such that
\begin{align*}
\lvert \nabla K(x) \lvert^2 &\leq C_K(1+ K(x)^2) \,\,\,\, \text{and}\\
\lVert D^2K(x) \lVert &\leq 2C_K
\end{align*}
for all $x\in \RR^\nu$. Here $D^2K$ is the Hessian of $K$ and $\lVert D^2K(x) \lVert$ is the operator norm of $D^2K(x)$.

\item[\textup{(2)}] $\omega:\RR^\nu\rightarrow [0,\infty)$ is continuous and $\omega>0$ almost everywhere.

\item[\textup{(3)}] $v\in \cD(\omega^{-1/2})$.
\end{enumerate}
Under these assumptions we may define
\begin{align*}
\nabla K(\xi-d\Gamma_A(m))&=(\partial_1K(\xi-d\Gamma_A(m)),\dots,\partial_{\nu}K(\xi-d\Gamma_A(m)) )\\ \Sigma_A(\xi)&=\inf(\sigma(H_\mu(\xi,A)))
\end{align*}
We have the following lemma
\begin{lemma}\label{Lem:FundamentalTechnicalStuff}
Assume Hypothesis 1. The following holds
\begin{enumerate}
\item[\textup{(1)}] $\cD(K(\xi-d\Gamma_A(m))) \subset \cD (\nabla K(\xi-d\Gamma_A(m)))$ and
\begin{align*}
\lVert \nabla K(\xi-d\Gamma_A(m))\psi \lVert^2\leq C_K\lVert K (\xi-d\Gamma_A(m))\psi\lVert^2+C_K\lVert \psi\lVert^2
\end{align*}
for all $\psi\in \cD(K(\xi-d\Gamma_A(m)))$.

\item[\textup{(2)}] $\cD(K(\xi-d\Gamma_A(m)))$ is independent of $\xi$. On $\cD(K(\xi-d\Gamma_A(m)))$ we have for all $a\in \RR^\nu$
\begin{align}\label{eq:vigtigID}
K(\xi+a-d\Gamma_A(m))=&K(\xi-d\Gamma_A(m))+ a\cdot  \nabla K(\xi-d\Gamma_A(m)) \nonumber \\  &+ E_{\xi,A}(a)
\end{align}
 where $\lVert E_{\xi,A}(a) \lVert\leq C_K\lvert a\lvert^2$. In particular, $\cD(H_\mu(\xi,A))$ is independent of $\xi$.

\item[\textup{(3)}] Let $\psi \in \cD(K(\xi-d\Gamma_A(m)))$. Then
\begin{align}\nonumber
\lVert K(\xi+a&-d\Gamma_A(m))\psi-K(\xi-d\Gamma_A(m))\psi\lVert^2\\&\leq 2C_K\lvert a\lvert^2 \lVert K(\xi-d\Gamma_A(m))\psi \lVert^2+2(1+C_K\lvert a\lvert^2)C_K\lvert a\lvert^2\lVert \psi\lVert^2.\label{eq:Vigtig:ulighed}
\end{align}
Furthermore, $\xi\mapsto H_\mu(\xi,A)\psi$ is continuous for any $\psi\in \cD(H_\mu(0,A))$ and $\xi\mapsto H_\mu(\xi,A)$ is continuous in norm resolvent sense. In particular, the map $\xi\mapsto \Sigma_A(\xi)$ is continuous.

\item[\textup{(4)}] Let $\cD\subset \cC\cS_A$ be a dense subspace. Then $\cE(\cD)$ and $\cJ(\cD)$ span cores for $H_\mu(\xi,A)$.
\end{enumerate}
\end{lemma}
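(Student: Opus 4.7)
The strategy is to reduce parts (1)--(3) to pointwise calculus on $K$, exploiting that on each sector $\cH_A^{\otimes_s n}$ both $K(\xi-d\Gamma(k_A))$ and $\nabla K(\xi-d\Gamma(k_A))$ act as multiplication by $K(\xi-g^{(n)})$ and $\nabla K(\xi-g^{(n)})$, and to prove (4) by truncation in particle number together with density in graph norm. For (1), the pointwise bound $\lVert\nabla K(y)\lVert^2\leq C_K(1+K(y))$ combined with $K(y)\leq \tfrac{1}{2}(1+K(y)^2)$ (valid since $K\geq 0$) yields $\lVert\nabla K(y)\lVert^2\leq C_K'(1+K(y)^2)$ after absorbing constants; integrating against $|\psi^{(n)}|^2$ and summing over $n$ gives the stated bound. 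For (2), the second order Taylor expansion with $\lVert D^2 K\lVert_\infty\leq C_K$ gives $K(y+a)=K(y)+a\cdot\nabla K(y)+R(y,a)$ with $|R(y,a)|\leq\tfrac{C_K}{2}\lVert a\lVert^2$; lifted fiberwise this is (\ref{eq:vigtigID}) with $E_{\xi,A}(a)$ a bounded operator. Since $a\cdot\nabla K(\xi-d\Gamma(k_A))$ is relatively $K(\xi-d\Gamma(k_A))$-bounded by (1), the identity together with its mirror at $-a$ delivers $\xi$-independence of the domain.

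For (3), squaring the Taylor identity and using (\ref{eq:Vectorcauchy}) with (1) bounds $\lVert a\cdot\nabla K(\xi-d\Gamma(k_A))\psi\lVert^2$ by $C_K\lVert a\lVert^2(\lVert K(\xi-d\Gamma(k_A))\psi\lVert^2+\lVert\psi\lVert^2)$ and leaves a remainder of $C_K^2\lVert a\lVert^4\lVert\psi\lVert^2$; collecting terms gives (\ref{eq:Vigtig:ulighed}), and strong continuity of $\xi\mapsto H_\mu(\xi,A)\psi$ on $\cD(H_\mu(0,A))$ is then immediate. For norm resolvent continuity I would use the second resolvent identity
\begin{equation*}
(H_\mu(\xi+a)-i)^{-1}-(H_\mu(\xi)-i)^{-1}=(H_\mu(\xi+a)-i)^{-1}[K(\xi-d\Gamma(k_A))-K(\xi+a-d\Gamma(k_A))](H_\mu(\xi)-i)^{-1}.
\end{equation*}
Since $K(\xi-d\Gamma(k_A))\leq H_0(\xi,A)$ is $H_0$-bounded, and $\varphi(v_A)$ is infinitesimally $H_0$-bounded by Lemma \ref{Lem:FundamentalIneq}, $\lVert K(\xi-d\Gamma(k_A))(H_\mu(\xi)-i)^{-1}\lVert$ is uniformly bounded on compacts; combined with (\ref{eq:Vigtig:ulighed}) this forces the displayed operator-norm difference to be $O(\lVert a\lVert)$, and continuity of $\Sigma_A$ follows.

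For (4), I first check $\cL(\cD)\cup\cJ(\cD)\subset\cD(H_\mu(\xi,A))$: Hypothesis 1(1) forces $K$ to grow at most quadratically, so for $f\in\cD\subset\cC\cS_A$ supported in $B_R(0)$ the $n$-th component of $\epsilon(f)$ satisfies $\lVert H_0^{(n)}(\xi,A)f^{\otimes n}/\sqrt{n!}\lVert^2\leq C(1+n)^4\lVert f\lVert^{2n}/n!$ (continuity of $\omega$ on $\overline{B_R(0)}$ handles the $d\Gamma(\omega_A)$ part), whose sum over $n$ converges; the same bound handles $\cJ(\cD)$. To see that span$(\cJ(\cD))$ is a core for $H_0(\xi,A)$, truncate a general $\psi\in\cD(H_0(\xi,A))$ in particle number and approximate each $\psi^{(n)}$ in graph norm by symmetric tensors of elements of $\cD$; this is possible because $\cD$ is dense in $\cC\cS_A$ and the multiplication operators $K(\xi-g^{(n)})+\omega_n$ are bounded on each $B_R(0)^n$. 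Polarisation $n!\,f_1\otimes_s\cdots\otimes_s f_n=\partial_{t_1}\cdots\partial_{t_n}\epsilon(t_1 f_1+\cdots+t_n f_n)|_{t=0}$, realised by divided differences and controlled by the same polynomial-over-factorial estimate, shows that the graph-norm closure of span$(\cL(\cD))$ contains span$(\cJ(\cD))$. Since $H_0(\xi,A)$ and $H_\mu(\xi,A)$ share cores by Lemma \ref{Lem:BasicProperties}, both spans are cores for $H_\mu(\xi,A)$.

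The main obstacle is the norm resolvent continuity in (3): one needs the $K$-relative boundedness from (1) to feed uniformly through the resolvent identity (on compact $\xi$-sets) in order to upgrade the immediate strong continuity of $\xi\mapsto H_\mu(\xi,A)\psi$ to norm resolvent continuity, rather than settling for mere strong resolvent convergence.
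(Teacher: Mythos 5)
Parts (1)--(3) of your proposal follow essentially the same route as the paper (pointwise calculus on $K$ fibre by fibre, Taylor expansion with integral remainder, and the second resolvent identity fed with (\ref{eq:Vigtig:ulighed})) and are correct.

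Part (4) is where there is a genuine gap. You propose to show that $\textup{Span}(\cJ(\cD))$ is a core by truncating in particle number and then approximating each component $\psi^{(n)}$ in the graph norm of the multiplication operator $K(\xi-g^{(n)})+\omega_n$ by finite sums of symmetric tensors from $\cD$, citing $L^2$-density of $\cD$ in $\cH_A$ and boundedness of the multiplication operator on $B_R(0)^n$. This does not go through as stated. After you truncate $\psi^{(n)}$ to $B_R(0)^n$, the approximating sums $\sum_j f_1^j\otimes_s\cdots\otimes_s f_n^j$ with $f_i^j\in\cD\subset\cC\cS_A$ are only supported in some $B_{R'}(0)^n$ with $R'$ depending on the approximant, and the graph-norm error is controlled only by $\bigl(\sup_{B_{R'}(0)^n}|K(\xi-g^{(n)})+\omega_n|\bigr)$ times the $L^2$-error. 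Since $\cD$ is an arbitrary dense subspace of $\cC\cS_A$, you have no control on $R'$, and multiplying by $1_{B_R(0)^n}$ does not preserve the class of symmetric tensors over $\cD$; the level sets $\{K(\xi-g^{(n)})+\omega_n\le\Lambda\}$ are not compact in general either, so spectral truncation does not rescue you. Thus $L^2$-density of $\cD$ does not upgrade to graph-norm density of $\textup{Span}(\cJ(\cD))$ by the argument you give. The paper sidesteps this entirely: it proves that the elements of $\cJ(\cD)$ are analytic vectors and the elements of $\cL(\cD)$ are semi-analytic vectors for $H_0(\xi,A)$, and then invokes Nelson's analytic vector theorem and the Masson--McClary theorem, which yield the core property directly from totality plus these quantitative regularity bounds, without any graph-norm approximation of a general domain vector. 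Your polarisation step from $\cL(\cD)$ to $\cJ(\cD)$ is fine as a side remark (divided differences of $\epsilon(t_1f_1+\cdots+t_nf_n)$ stay inside $\textup{Span}(\cL(\cD))$ since $\cD$ is a subspace), but it does not repair the main gap.
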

\begin{proof}
To prove (1) we calculate for $\psi\in \cD(K(\xi-d\Gamma_A(m)))$
\begin{align*}
\sum_{i=1}^{\nu}&\sum_{n=0}^{\infty}\int_{A^n}\lvert \psi^{(n)}(k) \partial_i K(\xi-m^{(n)}(k))\lvert^2d\lambda_{n\nu}\\&\leq \sum_{n=0}^{\infty}\int_{A^n}C_K\lvert \psi^{(n)}(k)  K(\xi-m^{(n)}(k))\lvert^2d\lambda_{n\nu}+C_K\lVert \psi\lVert^2\\&=C_K\lVert K (\xi-d\Gamma_A(m))\psi\lVert^2+C_K\lVert \psi\lVert^2.
\end{align*}
This proves $(1)$. To prove $(2)$ we use the fundamental theorem of calculus twice we find for all $k,\xi\in \RR^\nu$
\begin{align}\label{eq:fisk}
K(\xi+a-k)=&K(\xi-k)+ a\cdot \nabla K(\xi-k)\nonumber \\&+ \int_0^1\int_0^1 ta\cdot D^2K(\xi+sta-k)a dsdt    
\end{align}
Define
\begin{equation*}
G_{a}(x)=\int_0^1\int_0^1 ta\cdot D^2K(x+sta)a dsdt    
\end{equation*}
and $E_{\xi,A}(a)=G_{a}(\xi-d\Gamma_A(m))$. Then $E_{\xi,A}(a)$ is bounded with norm bound $C_K\lvert a\lvert^2$ since $\lvert G_a(x)\lvert\leq C_K \lvert a\lvert^2$ uniformly in $x$. Let $\psi\in \cD(K(\xi-d\Gamma_A(m)))$. Then $\psi\in \cD(  \nabla K(\xi-d\Gamma_A(m)))$ by part (1) and equation (\ref{eq:fisk}) gives
\begin{align*}
((K(\xi-d\Gamma_A(m))&+ a\cdot  \nabla K(\xi-d\Gamma_A(m))+ E_{\xi,A}(a))\psi)^{(n)}\\&=K(\xi+a-m^{(n)})\psi^{(n)}
\end{align*}
pointwise showing $\psi \in \cD(K(\xi+a-d\Gamma_A(m)))$ and equation (\ref{eq:vigtigID}) holds. We have thus proven $\cD(K(\xi+a-d\Gamma_A(m)))\subset \cD(K(\xi-d\Gamma_A(m)))$ for all $\xi\in  \RR^\nu$. Using $\xi'=\xi-a$ we find the other inclusion. This proves (2).

To prove (3) we note that equation (\ref{eq:Vigtig:ulighed}) is easily obatined from statements (1) and (2). Using
\begin{align*}
(H_\mu(\xi+a,A)-H_\mu(\xi,A))\psi=(K(\xi+a-d\Gamma_A(m))-K(\xi-d\Gamma_A(m)))\psi
\end{align*}
for any $\psi\in \cD(H_\mu(\xi,A))$ and equation (\ref{eq:Vigtig:ulighed}) we immediately obtain continuity of $\xi \mapsto H_\mu(\xi,A)\psi$. To prove the statement regarding norm resolvent convergence we calculate using equation (\ref{eq:Vigtig:ulighed})
\begin{align*}
\lVert&(H_\mu(\xi+a,A)+i)^{-1}-(H_\mu(\xi,A)+i)^{-1}\lVert^2\\&\leq 2C_K\lvert a\lvert^2\lVert K_A(\xi-d\Gamma(m))(H_\mu(\xi,A)+i)^{-1} \lVert+(1+C_K\lvert a\lvert^2)C_K\lvert a\lvert^2
\end{align*}
which goes to 0 for $a$ tending to 0. Continuity of $\xi\mapsto \inf(\sigma(H_\mu(\xi,A)))$ now follows from continuity of the spectral calculus and the existence of a $\xi$-independent lower bound by Lemma \ref{Lem:BasicProperties} (see e.g. \cite[Lemma 5.5]{SB2}).

It only remains to prove statement (4). By Lemma \ref{Lem:BasicProperties} it is enough to check that $\cJ(\cD)$ and $\cE(\cD)$ span a core for $H_0(\xi,A)$. Let $f_1,\dots,f_n\in \cC\cS_A$. Pick $R>0$ such that $1_{B_{R}(0)}f_i=f_i$ almost everywhere on $A$ for all $i\in \{1,\dots,n\}$ and note that $1_{B_{R}(0)^n}f_1\otimes_s\dots\otimes_s f_n=f_1\otimes_s\dots\otimes_s f_n$ almost everywhere on $A^n$. Let $C=\sup_{k\in B_R(0)}\omega(k)$. Using equation $(\ref{eq:fisk})$ we find the following point wise inequality for $k\in B_{R}(0)^n:$
\begin{align*}
\lvert K(\xi-m^{(n)}(k))\lvert =K(\xi)+\lvert m^{(n)}(k)\lvert \lVert \nabla K(\xi)\lvert+ \lvert m^{(n)}(k)\lvert^2 C_K\leq \widetilde{C}(1+n^2R^2)
\end{align*}
Where $\widetilde{C}=\max\{ K(\xi)+\frac{1}{2}\lvert \nabla K(\xi)\lvert,(1+C_K)    \}$ and we used that $\lvert m^{(n)}(k)\lvert\leq nR$ for $k\in B_{R}(0)^n$. This implies $ f_1\otimes_s\dots\otimes_s f_n\in \cD(H_0(\xi,A)^p)$ for all $p>0$ and
\begin{align}\label{eq:beregning}
\lVert H_0(\xi,A)^p f_1\otimes_s\dots\otimes_s f_n \lVert\leq  (\widetilde{C}(1+n^2R^2)+nC)^{p}\lVert f_1\otimes_s\dots\otimes_s f_n \lVert
\end{align}
Multiplying by $\frac{1}{p!}$ and summing over $p$ yields a finite number so $f_1\otimes_s\dots\otimes_s f_n$ is analytic for $H_0(\xi,A)$. Furthermore, $\Omega$ is an eigenvector for $H_0(\xi,A)$ and therefore analytic. In conclusion, $\cJ(\cD)$ is a total set of analytic vectors for $H_0(\xi,A)$ and therefore it spans a core for $H_0(\xi,A)$ by Nelsons analytic vector theorem.

By equation (\ref{eq:beregning}) we see $f^{\otimes n}\in  \cD(H_0(\xi,A)^p)$ and 
\begin{align*}
\lVert H_0(\xi,A)^p f_1^{\otimes n} \lVert^2&\leq \lVert (\widetilde{C}(1+n^2R^2)+nC)^{2p}\lVert f_1\lVert^{2n}\\&\leq (\widetilde{C}^{1/2}(1+nR)+\sqrt{nC})^{4p}\lVert f_1\lVert^{2n}
\end{align*}
This also holds for $n=0$ as we in this case obtain $ \lVert H_0(\xi,A)^p\Omega\lVert^2=\sqrt{K(\xi)}^{4p}\leq (\widetilde{C}^{1/2})^{4p}$. Multiplying by $\frac{1}{n!}$ and summing over $n$ yields a finitie number so $ \epsilon(f_1)\in \cD(H_0(\xi,A)^p)$ for all $p>0$. Now
\begin{align*}
\sum_{p=0}^{\infty}\frac{1}{(2p)!}\lVert H_0(\xi,A)^p\epsilon(f_1) \lVert &\leq \sum_{p=0}^{\infty}\frac{1}{(2p)!}\sum_{n=0}^\infty \frac{1}{\sqrt{n!}}\lVert H_0(\xi,A)^pf_1^{\otimes n}\lVert \\&=\sum_{n=0}^\infty\frac{1}{\sqrt{n!}} \sum_{p=0}^{\infty}\frac{1}{(2p)!}(\widetilde{C}^{1/2}(1+nR)+\sqrt{nC})^{2p}\lVert f_1\lVert^{n}\\&\leq \sum_{n=0}^\infty\frac{\lVert f\lVert^ne^{(\widetilde{C}^{1/2}(1+nR)+\sqrt{nC})}}{\sqrt{n!}}<\infty
\end{align*}
Thus $\varepsilon(f_1)$ is semi analytic for $H_0(\xi)$. This implies $\cE(\cD)$ spans a dense subspace of semi analytic vectors for $H_0(\xi,A)$, so $\cE(\cD)$ spans a core for $H_0(\xi,A)$ by the Masson-McClary theorem. \cqfd
\end{proof}
\textbf{Hypothesis 2:}
Hypothesis 2 is said to hold if
\begin{enumerate}
\item[\textup{(1)}]  $K$ is rotation invariant and $k\mapsto e^{-tK(k)}$ is positive definite for all $t\in [0,\infty)$.

\item[\textup{(2)}] $\omega$ is rotation invariant, sub-additive, $\omega(x_1)< \omega(x_2)$ if $\lvert x_1\lvert< \lvert x_2\lvert$ and
\begin{equation*}
C_\omega=\lim_{k\rightarrow 0} \lvert k\lvert^{-1} \omega(k)
\end{equation*}
exists and is strictly positive.

\item[\textup{(3)}] $v$ is rotation invariant and $\omega^{-1}v\notin \cH$.
\end{enumerate}
Our main result is
\begin{theorem}\label{Thm:Mainthm}
Assume $\nu\geq 3$ and that Hypotheses 1 and 2 holds. Then $H_\mu(\xi)$ has no ground states for any $\xi\in \RR^\nu$ and $\mu\neq 0$.
\end{theorem}
The physical choices for the 3-dimensional Nelson model are $\omega(k)=\lvert k\lvert $, $K\in\{k\mapsto\lvert k\lvert^2,k\mapsto \sqrt{\lvert k\lvert^2+m}-m \}$ and $v=\omega^{-1/2}\chi$ where $\chi:\RR^\nu \rightarrow \RR$ is a spherically symmetric ultraviolet cutoff. It is well known that Hypothesis 1 and 2 are fulfilled in this case so our result holds. 

\section{Proof of Theorem \ref{Thm:Mainthm}}
We start with proving series of lemmas which we shall need. The first lemma is well known but we provide a proof to ensure completeness of the presentation.
\begin{lemma}\label{Lem:rotinv}
Assume Hypotheses 1 and 2 holds. Then $\xi\mapsto \Sigma(\xi)$ is rotation invariant. 
\end{lemma}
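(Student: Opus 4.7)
The plan is to lift each rotation to a unitary on Fock space that intertwines $H_\mu(\xi)$ with $H_\mu(R\xi)$, and then read off equality of the spectral bottoms.

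First I would fix $R\in SO(\nu)$ and define the unitary $U_R$ on $\cH=L^2(\RR^\nu)$ by $(U_R f)(k)=f(R^{-1}k)$; this is unitary because Lebesgue measure is rotation invariant. Hypothesis 2(1) gives $v(R^{-1}k)=v(k)$ a.e., hence $U_R v=v$, and $\omega\circ R^{-1}=\omega$ as functions, hence $U_R \omega U_R^*=\omega$ as multiplication operators on $\cH$. A direct change of variables (or the identity $U_R M_g U_R^*=M_{g\circ R^{-1}}$) shows that for the coordinate functions $U_R k_i U_R^*=\sum_j(R^{-1})_{ij}k_j$, i.e.\ the vector operator $(k_1,\dots,k_\nu)$ transforms under $U_R$ by $R^{-1}$.

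Next I would pass to Fock space via $\Gamma(U_R)$, using the standard intertwining identities $\Gamma(U)a^{(\dagger)}(f)\Gamma(U)^*=a^{(\dagger)}(Uf)$ and $\Gamma(U)d\Gamma(A)\Gamma(U)^*=d\Gamma(UAU^*)$, both of which are immediate on the dense set of simple symmetric tensors from the fact that $\Gamma(U)|_{\cH^{\otimes_s n}}=U^{\otimes n}$. These give $\Gamma(U_R)\varphi(v)\Gamma(U_R)^*=\varphi(v)$ and $\Gamma(U_R)d\Gamma(\omega)\Gamma(U_R)^*=d\Gamma(\omega)$. For the remaining nontrivial piece $K(\xi-d\Gamma(k))$ I would argue sectorwise: on $\cH^{\otimes_s n}$ this operator acts by multiplication by the function $K(\xi-g^{(n)}(k_1,\dots,k_n))$, and conjugating by $U_R^{\otimes n}$ replaces it by multiplication by $K(\xi-R^{-1}g^{(n)}(k_1,\dots,k_n))$. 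Rotation invariance of $K$ (Hypothesis 2(1)) gives $K(\xi-R^{-1}x)=K(R(\xi-R^{-1}x))=K(R\xi-x)$, so reassembling the sectors yields $\Gamma(U_R)K(\xi-d\Gamma(k))\Gamma(U_R)^*=K(R\xi-d\Gamma(k))$.

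Combining the three intertwinings produces $\Gamma(U_R)H_\mu(\xi)\Gamma(U_R)^*=H_\mu(R\xi)$, and since unitary conjugation preserves the spectrum we conclude $\Sigma(R\xi)=\Sigma(\xi)$ for every $R\in SO(\nu)$. There is no real obstacle: the only mildly delicate point is the manipulation of $K(\xi-d\Gamma(k))$, but because this operator is diagonal in the Fock decomposition and acts as pointwise multiplication on each sector, the apparent functional calculus of a vector of commuting selfadjoint operators reduces to an elementary change of variables on each $\cH^{\otimes_s n}$ and requires nothing beyond the measurability of $K$. The positive-definiteness hypothesis on $e^{-tK}$ and the subadditivity/monotonicity of $\omega$ play no role in this particular lemma.
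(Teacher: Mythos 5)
Your proof is correct and follows essentially the same route as the paper: define the second-quantized rotation $\Gamma(U_R)$ and show that it intertwines $H_\mu(\xi)$ with $H_\mu(R\xi)$, from which equality of the infima of the spectra is immediate. The only cosmetic difference is in how the intertwining is verified — you work sector by sector, treating $K(\xi-d\Gamma(k))$ and $d\Gamma(\omega)$ as direct sums of multiplication operators and using the standard intertwining $\Gamma(U)a^{(\dagger)}(f)\Gamma(U)^*=a^{(\dagger)}(Uf)$ for the field, whereas the paper verifies it by computing matrix elements on the exponential vectors $\epsilon(f)$ (via Lemma \ref{Lem:FundamentalCalculations}) and then invokes that $\cL(\cC\cS)$ spans a core; both routes are sound and of comparable length.
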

\begin{proof}Let $O$ denote any orthogonal matrix with dimensions $\nu$. Define the unitary map $\widehat{O}:\cH\rightarrow \cH$ by $(\widehat{O}f)(k)=f(Ok)$ almost everywhere. Let $f,g\in \cC\cS$ and note $\widehat{O}f,\widehat{O}g\in \cC\cS$. In particular, $\Gamma(\widehat{O})\epsilon(f)=\epsilon(\widehat{O}f)\in \cD(H_\mu(\xi))$ for all $\xi\in \RR^\nu$. One now easily calculates using Lemma \ref{Lem:FundamentalCalculations}
\begin{align*}
\langle \epsilon(g), \Gamma(\widehat{O})^*H_\mu(\xi)\Gamma(\widehat{O})\epsilon(f)\rangle&=\langle \epsilon(\widehat{O}g), H_\mu(\xi)\epsilon(\widehat{O}f)\rangle\\&=\langle \epsilon(g), H_\mu(O\xi)\epsilon(f)\rangle.
\end{align*}
Using $\cE(\cC\cS)$ is total so we find $H_\mu(O\xi)=\Gamma(\widehat{O})^*H_\mu(\xi)\Gamma(\widehat{O})$ on $\cE(\cC\cS)$ which spans a core for $H_\mu(O\xi)$ and so $\Gamma(\widehat{O})^*H_\mu(\xi)\Gamma(\widehat{O})=H_\mu(O\xi)$. \cqfd
\end{proof}
For any $k\in \RR^\nu \backslash \{ 0 \}$ we write $\widehat{k}=\lvert k\lvert^{-1}k$. The next lemma is essentially spherical coordinates but is proven here for completeness of the presentation

\begin{lemma}\label{Lem:Globmin}
Assume Hypotheses 1 and 2 holds. Then $\Sigma$ has a global minimum at $\xi=0$. 
\end{lemma}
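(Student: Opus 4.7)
The plan is to show $\Sigma(0)\le\Sigma(\xi)$ for every $\xi\in\RR^\nu$ by establishing the vacuum-semigroup domination
\[
\bigl|\langle\Omega, e^{-tH_\mu(\xi)}\Omega\rangle\bigr|\;\le\;\langle\Omega, e^{-tH_\mu(0)}\Omega\rangle,\qquad t>0,
\]
and then extracting the two spectral infima from the long-time asymptotics. The key input is the positive-definiteness of $k\mapsto e^{-tK(k)}$ from Hypothesis~2. By Bochner's theorem and rotation invariance of $K$ (which forces $e^{-tK}$ to be real and even), for each $t>0$ there is a finite symmetric positive Borel measure $\mu_t$ on $\RR^\nu$ with $e^{-tK(k)}=\int_{\RR^\nu}e^{ik\cdot a}\,d\mu_t(a)$, and functional calculus applied to $d\Gamma(k)$ yields
\[
e^{-tK(\xi-d\Gamma(k))}\;=\;\int_{\RR^\nu}e^{i\xi\cdot a}\,\Gamma(M_{e^{-ia\cdot k}})\,d\mu_t(a),
\]
where $M_f$ denotes multiplication by $f$ on $\cH$.

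Next I would feed this into the Trotter--Kato formula $e^{-tH_\mu(\xi)}=\text{s-lim}_n(e^{-(t/n)K(\xi-d\Gamma(k))}e^{-(t/n)A})^n$ with $A=d\Gamma(\omega)+\mu\varphi(v)$, take matrix elements against $\Omega$, and use both $\Gamma(M_f)\Omega=\Omega$ and the conjugation rule $\Gamma(M_f)A\Gamma(M_f)^{-1}=d\Gamma(\omega)+\mu\varphi(fv)$ to move each unitary twist through the neighbouring free exponential. After this rearrangement the matrix element reads
\[
\langle\Omega, e^{-tH_\mu(\xi)}\Omega\rangle\;=\;\lim_{n\to\infty}\int e^{i\xi\cdot Y_n}\,F_n(Y_1,\ldots,Y_n)\,d\mu_{t/n}^{\otimes n}(a_1,\ldots,a_n),
\]
with $Y_j=a_1+\cdots+a_j$ and $F_n(Y_1,\ldots,Y_n)=\langle\Omega,e^{-(t/n)A_{Y_1}}\cdots e^{-(t/n)A_{Y_n}}\Omega\rangle$, where $A_Y=d\Gamma(\omega)+\mu\varphi(M_{e^{-iY\cdot k}}v)$. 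Passing to the $Q$-space (Schr\"odinger) representation of $\cF(\cH)$, in which $\Omega$ corresponds to the strictly positive constant function and $e^{-sA_Y}$ is positivity preserving (using that $v$ may be taken real by rotation invariance), and exploiting the $a\to-a$ symmetry of $\mu_{t/n}$ to symmetrize in $Y\to -Y$, I would argue that $F_n\ge0$; combined with the trivial bound $|e^{i\xi\cdot Y_n}|\le 1$ this delivers the displayed domination.

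Finally, I would extract the spectral inequality by noting that for self-adjoint $H$ bounded below the abscissa of convergence of the Laplace transform $z\mapsto\int_0^\infty e^{tz}\langle\Omega,e^{-tH}\Omega\rangle dt$ equals $\inf\mathrm{supp}(\rho^\Omega_H)$, where $\rho^\Omega_H$ is the spectral measure of $H$ at $\Omega$; the domination then forces $\inf\mathrm{supp}(\rho^\Omega_{H_\mu(0)})\le\inf\mathrm{supp}(\rho^\Omega_{H_\mu(\xi)})$. Combining this with $\inf\mathrm{supp}(\rho^\Omega_{H_\mu(0)})=\Sigma(0)$ (which follows from positivity-improving of $e^{-tH_\mu(0)}$ on the positive cone in $Q$-space, so that the strictly positive vector $\Omega$ is cyclic at the bottom of the spectrum) and a parallel estimate with $\Omega$ replaced by a spectral test vector near $\Sigma(\xi)$ yields $\Sigma(0)\le\Sigma(\xi)$. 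The main obstacle I expect is the positivity step $F_n\ge 0$: since $M_{e^{-iY\cdot k}}v$ is complex, the operator $\varphi(M_{e^{-iY\cdot k}}v)$ involves both the ``position'' and ``momentum'' field operators in $Q$-space, so the standard positivity-preserving argument for $e^{-sA_Y}$ does not apply directly, and making it work requires pairing the $Y$ and $-Y$ contributions under the $\mu_{t/n}$-symmetry to recast the integrand as a vacuum expectation of genuinely positivity-preserving operators.
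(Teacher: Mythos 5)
Your route is a genuinely different one from the paper's. The paper does not re-derive the semigroup domination at all: it regularises the dispersion, setting $\omega_n=\omega+1/n$, invokes Gross's massive-case theorem for the resulting operators $H_n(\xi)$ (which satisfy $\omega_n\ge 1/n$), shows $H_n(\xi)\to H_\mu(\xi)$ in strong resolvent sense on the common core $\text{Span}(\cJ(\cC\cS))$, and then deduces $\Sigma_n(\xi)\to\Sigma(\xi)$ from the variational inequality $\Sigma_n(\xi)\ge\Sigma(\xi)$ combined with the spectral-inclusion property of strong resolvent limits; the conclusion $\Sigma(0)\le\Sigma(\xi)$ follows by passing $\Sigma_n(0)\le\Sigma_n(\xi)$ to the limit. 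Your plan essentially re-proves Gross's inequality directly in the massless setting via Bochner plus Trotter--Kato plus $Q$-space positivity; this is more self-contained, at the cost of being more technical. Two remarks on the details you flag. First, the positivity of $F_n$ is actually cleaner than you fear: if you work in the $Q$-space representation attached to the conjugation $Cf(k)=\overline{f(-k)}$ (for which the rotation-invariant real $v$ is $C$-real), then each $g_Y=e^{-iY\cdot k}v$ satisfies $Cg_Y=g_Y$, so $\varphi(g_Y)$ is a real multiplication operator and $e^{-sA_Y}$ is positivity preserving term by term, with no need to pair $Y$ and $-Y$ contributions. Second, the step $\inf\mathrm{supp}\,\rho^\Omega_{H_\mu(0)}=\Sigma(0)$ by ``Perron--Frobenius'' is the genuinely delicate part of your sketch: the theorem being proved is precisely that $H_\mu(0)$ has no ground state, so you cannot lean on positivity of a ground eigenvector. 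The standard repair (and Gross's actual argument) is to establish the same domination not only for $\Omega$ but for a dense family of positive $Q$-space vectors $\psi$, in the form $\lvert\langle\psi,e^{-tH_\mu(\xi)}\psi\rangle\rvert\le\langle\lvert\psi\rvert,e^{-tH_\mu(0)}\lvert\psi\rvert\rangle$, and then minimise the resulting Rayleigh quotients over $\psi$, bypassing any appeal to a ground eigenvector. With these two adjustments your argument goes through and gives an alternative proof, though the paper's mass-regularisation shortcut is considerably lighter since it outsources the functional-integral work to Gross.
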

\begin{proof}
The theorem is proven in \cite{Gross} under the extra assumptions that $\omega(k)=\sqrt{\lvert k\lvert^2+m^2}$ and $v(k)=g\omega(k)^{-1/2}1_{B_R(0)}(k)$ for some $m>0$ and $R>0$. To extend the result let $\cK\subset\cH$ denote the set of real valued functions and $F_2$ denote the Fourier transform on $\cH$ normalized so $F_2$ is unitary. The proof presented in paper \cite{Gross} only uses
\begin{enumerate}
	\item[\textup{(1)}] $v\in F_2(\cK)$,
	\item[\textup{(2)}] $e^{-t\omega}F_2(\cK)\subset F_2(\cK)$ to ensure $e^{-td\Gamma(\omega)}$ is positivity preserving on the Q-space generated by $F_2(\cK)$ (see e.g. \cite{Hirokawa1}).
\end{enumerate}
Assume first $v$ is real valued. Then part (1) follows from rotation invariance of $v$. To prove part (2), let $U:\cH\rightarrow \cH $ be the unitary map defined by $(Uf)(k)=f(-k)$ almost everywhere and let $\cC:\cH\rightarrow \cH$ be the anti linear map defined by $(\cC f)(k)=\overline{f(k)}$. For $\psi\in \cK$ we have
\begin{align*}
 e^{-t\omega} F_2\psi=  F_2   F_2^* e^{-t\omega} F_2 \psi.
\end{align*}
Note that $F_2U=F_2^*=UF_2$ and $\cC U F_2 \cC=F_2=\cC  F_2 U \cC$, so
\begin{align*}
\cC F_2^* e^{-t\omega} F_2 \psi=F_2 \cC e^{-t\omega} \cC U F_2 \cC \psi =F_2U e^{-t\omega}F_2\psi=F_2^* e^{-t\omega} F_2 \psi,
\end{align*}
since $\omega(k)=\omega(-k)$ almost everywhere. This proves $F_2^* e^{-t\omega} F_2 \psi\in \cK$ finishing the proof of the lemma when $v$ is real valued. For general $v$, we let $\phi:\RR^\nu\rightarrow \CC$ be a measurable function such that $\lvert \phi (k)\lvert=1$ for all $k\in \RR^\nu$ and $\phi(k) \lvert v (k)\lvert= v(k)$ almost everywhere. Then $\phi$ defines a unitary multiplication operator and so $\Gamma(\phi)$ is unitary. A calculation similar to the one in Lemma \ref{Lem:rotinv} shows
\begin{align*}
 \Gamma(\phi)^*H_\mu(\xi)\Gamma(\phi)= K(\xi-d\Gamma(m))+d\Gamma(\omega)+\mu \varphi(\lvert v\lvert)
\end{align*}
so we may assume $v$ is real valued finishing the proof. 
\cqfd
\end{proof}
For every $\xi\in \RR^\nu$ and $0<\varepsilon<1$ we define
\begin{align*}
S_{\varepsilon}(\xi)&=\{ k\in \RR^{\nu}\backslash \{0\} \mid 2\lvert \widehat{k}\cdot \xi\lvert< (1-\varepsilon)\lvert \xi\lvert  \}.
\end{align*}
\begin{lemma}\label{Lem:Energyineq}
Assume Hypotheses 1 and 2 holds. Let $\xi\in \RR^\nu$. Then
\begin{enumerate}
\item[\textup{(1)}] $\Sigma(\xi-k)+\omega(k)>\Sigma(\xi)$ if $k\notin \RR \xi$.

\item[\textup{(2)}] For any $\varepsilon\in (0,1)$ there exists $D:=D(\varepsilon,\xi)<1$ and $r:=r(\varepsilon,\xi)>0$ such that for all $k\in B_{r}(0)\cap S_{\varepsilon}(\xi)$ we have
\begin{align*}
\Sigma(\xi-k)-\Sigma(\xi)\geq -D\omega(k) 
\end{align*}
\end{enumerate}
\end{lemma}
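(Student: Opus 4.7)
\smallskip
\noindent\textbf{Strategy.} The plan is to exploit the HVZ-type bound
\begin{equation*}
\Sigma(\xi) \leq \Sigma(\xi - k') + \omega(k'),
\end{equation*}
valid for every $k'\in\RR^\nu$, in combination with rotation invariance of $\Sigma$ (Lemma \ref{Lem:rotinv}) and the strict monotonicity of $\omega$ in $\|\cdot\|$ (Hypothesis 2). The central trick is to replace the given $k$ by a \emph{collinear substitute} $k':=(\|\xi\|-\|\xi-k\|)\hat\xi$, where $\hat\xi = \xi/\|\xi\|$; then $\xi - k' = \|\xi-k\|\hat\xi$ has the same norm as $\xi-k$, so rotation invariance gives $\Sigma(\xi-k')=\Sigma(\xi-k)$, and the HVZ bound becomes $\Sigma(\xi)\leq \Sigma(\xi-k)+\omega(k')$. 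Everything then reduces to comparing $\omega(k')$ with $\omega(k)$, i.e.\ $\|k'\|$ with $\|k\|$.

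\smallskip
\noindent\textbf{Part (1).} The case $\xi=0$ is immediate from Lemma \ref{Lem:Globmin} together with strict monotonicity of $\omega$: $\Sigma(-k)+\omega(k)\geq\Sigma(0)+\omega(k)>\Sigma(0)$ for $k\neq 0$. For $\xi\neq 0$ apply the substitute above. By the reverse triangle inequality $\|k'\| = \bigl|\|\xi\|-\|\xi-k\|\bigr| \leq \|k\|$, with equality forcing $\xi-k$ to be a nonnegative multiple of $\xi$ and hence $k\in\RR\xi$. For $k\notin\RR\xi$ we therefore get $\|k'\|<\|k\|$ and hence $\omega(k')<\omega(k)$, yielding $\Sigma(\xi)\leq\Sigma(\xi-k)+\omega(k')<\Sigma(\xi-k)+\omega(k)$.

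\smallskip
\noindent\textbf{Part (2) and main obstacle.} The case $\xi=0$ is vacuous since $S_\varepsilon(0)=\emptyset$. For $\xi\neq 0$ use the same $k'$ to get $\Sigma(\xi-k)-\Sigma(\xi)\geq -\omega(k')$; it suffices to produce $D<1$ and $r>0$ with $\omega(k')\leq D\,\omega(k)$ on $B_r(0)\cap S_\varepsilon(\xi)$. Setting $a:=k\cdot\hat\xi$, a short computation based on the identity $\|k'\| = |2\|\xi\|a - \|k\|^2|/(\|\xi\|+\|\xi-k\|)$ shows that $\|k'\|/\|k\|\to |a|/\|k\|$ as $k\to 0$, and by definition of $S_\varepsilon(\xi)$ we have $|a|/\|k\|<1-\varepsilon$ throughout. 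Combining with the limit $\omega(h)/\|h\|\to C_\omega>0$ from Hypothesis 2, the ratio $\omega(k')/\omega(k)$ has the same $k\to 0$ asymptotic as $\|k'\|/\|k\|$, hence is bounded by $1-\varepsilon/2$ on a sufficiently small $B_r(0)\cap S_\varepsilon(\xi)$; one takes $D:=1-\varepsilon/2$. The main delicate point is precisely this last asymptotic step: without the positivity $C_\omega>0$, the ratio $\omega(k')/\omega(k)$ need not inherit the uniform strict inequality from $\|k'\|/\|k\|$, and the argument for $(2)$ would break down.
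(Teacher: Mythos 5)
Your proof is correct and follows essentially the same route as the paper: rotation invariance of $\Sigma$, the HVZ inequality $\Sigma(\eta)\leq\Sigma(\eta-k')+\omega(k')$, a collinear substitute $k'$ with $\|k'\|=\bigl|\|\xi\|-\|\xi-k\|\bigr|$, strict monotonicity of $\omega$ for (1), and the limit $C_\omega=\lim\omega(k)/\|k\|>0$ to make (2) quantitative. The only cosmetic difference is that your $k'$ is collinear with $\xi$ while the paper's is collinear with $\xi-k$; since both have the same norm and $\omega$ is rotation invariant this is immaterial.

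One small point you skip: in the $\xi=0$ case of (1) you use $\omega(k)>0$ for $k\neq 0$, which is not literally part of the hypotheses; the paper derives it from Hypothesis 2(2) (strict monotonicity would force $\omega<0$ near $0$ if $\omega(k)=0$ for some $k\neq0$, contradicting $\omega\geq 0$). Worth stating explicitly, but it is a one-line fix, not a gap.
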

\begin{proof}
First we note that $\omega(k)>0$ if $k\neq 0$ because if $\omega(k)=0$ then $\omega(k')<0$ for all $k'\in B_{\lvert k\lvert}(0)$ by Hypothesis 2 which contradicts Hypothesis 1. Using this observation and Lemma \ref{Lem:Globmin} we find 
\begin{align*}
\Sigma(0-k)-\Sigma(0)\geq 0 > -\frac{1}{2}\omega(k)>-\omega(k)
\end{align*}
for all $k\neq 0$. This proves the statement when $\xi=0$. Assume $\xi\neq 0$ and let $k\notin \RR\xi$. By rotation invariance of $\Sigma$ (Lemma \ref{Lem:rotinv}) we may calculate
\begin{align}\label{eq:regn1}
\Sigma(\xi-k)-\Sigma(\xi)=\Sigma(\xi-k)-\Sigma\left( \frac{\lvert \xi\lvert }{\lvert \xi-k\lvert} (\xi-k) \right)
\end{align}
By Lemma \ref{Lem:HVZ} we have $\Sigma(\xi-k)+\omega(k)\in \sigma(H_\mu(\xi))$ and so $\Sigma(\xi)\leq \Sigma(\xi-k)+\omega(k)$ implying  
\begin{align}\nonumber
\Sigma(\xi-k)-\Sigma\left( \frac{\lvert \xi\lvert }{\lvert \xi-k\lvert} (\xi-k) \right)&\geq -\omega\left(   \frac{\lvert \xi\lvert }{\lvert \xi-k\lvert} (\xi-k)-\xi-k \right)\\&=-\omega\left(  (\lvert \xi\lvert-\lvert \xi-k\lvert)\frac{\xi-k}{\lvert \xi-k\lvert} \right)\label{eq:regn2}
\end{align}
Now $\lvert \lvert \xi\lvert-\lvert \xi-k\lvert\lvert \leq \lvert k \lvert$ by the reverse triangle inequality. If equality holds then $\lvert \xi\lvert=\lvert \xi-k\lvert+\lvert k\lvert$ or $\lvert \xi-k\lvert=\lvert \xi\lvert+\lvert -k\lvert$. By \cite[Page 9]{Weidmann} either $k$ and $\xi-k$ are linearly dependent or $k$ and $\xi$ are linearly dependent. In any case we have a contradiction with $k\notin \RR\xi$. So $\lvert \lvert \xi\lvert-\lvert \xi-k\lvert\lvert < \lvert k \lvert$ which implies
\begin{align*}
\omega\left(  (\lvert \xi\lvert-\lvert \xi-k\lvert)\frac{\xi-k}{\lvert \xi-k\lvert} \right)<\omega(k)
\end{align*}
by Hypothesis 2. Combining this and equations (\ref{eq:regn1}) and (\ref{eq:regn2}) we find statement (1). To prove statement (2) we continue to calculate for $k\in S_\varepsilon(\xi)$ (which is disjoint from $\RR \xi$)
\begin{align}\label{eq:regn3}
\lvert \lvert \xi-k\lvert-\lvert \xi\lvert\lvert &=\left \lvert \frac{\lvert \xi-k\lvert^2-\lvert \xi\lvert^2}{\lvert \xi-k\lvert+\lvert \xi\lvert}\right \lvert\nonumber \\&=\lvert k\lvert \left \lvert \frac{-2\xi \cdot \widehat{k}+\lvert k \lvert }{\lvert \xi-k\lvert+\lvert \xi\lvert}\right \lvert \\&\leq  \lvert k\lvert \left(1-\varepsilon +\frac{\lvert k \lvert}{\lvert \xi \lvert } \right)\nonumber 
\end{align}
Pick $n$ such that $D:=(1+1/n)(1-1/n)^{-1}(1-\varepsilon/2)<1$ and $R>0$ such that
\begin{equation}\label{eq:regn4}
C_\omega(1-1/n)\lvert k \lvert \leq \omega(k)\leq C_\omega(1+1/n)\lvert k \lvert
\end{equation}
for all $k\in B_R(0)$. Pick $r=\min\{\frac{\lvert \xi\lvert\varepsilon }{2},R\}$. Using equations (\ref{eq:regn1}), (\ref{eq:regn2}), (\ref{eq:regn3}) and (\ref{eq:regn4}) we find
\begin{align*}
\Sigma(\xi-k)-\Sigma(\xi)&\geq -C_\omega(1+1/n)\left(1-\varepsilon +\frac{\lvert k \lvert}{\lvert \xi \lvert } \right)\lvert k\lvert \geq - D\omega(k)
\end{align*}
for $k\in B_{r}(0)\cap S_{\varepsilon}(\xi)$. \cqfd
\end{proof}
The following lemma is well known see e.g. \cite[Lemma 6.2]{SB2}.
\begin{lemma}\label{Lem:Uniquenessimpel}
Define $A=\{v\neq 0\}$ and assume Hypotheses 1 and 2 holds. If  $H_\mu(\xi,A)$ has a ground state for some $\mu\neq 0$ and $\xi\in \RR^\nu$ then the corresponding eigenspace is non degenerate.
\end{lemma}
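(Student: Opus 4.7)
The plan is a Perron--Frobenius argument after passing to the Q-space (Schr\"odinger) representation of the Fock space $\cF(\cH_A)\cong L^2(Q,d\nu)$ over a Gaussian probability measure $\nu$, chosen so that $\Omega$ corresponds to the constant function $1$ and $\varphi(v_A)$ becomes multiplication by a real-valued Gaussian random variable. In this representation, the standard Perron--Frobenius theorem (see, e.g., Reed--Simon Volume IV, Theorem XIII.44) guarantees non-degeneracy of the ground state whenever $e^{-tH_\mu(\xi,A)}$ is positivity improving on the positive cone of $L^2(Q,d\nu)$.

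The first step is to verify that each summand of $H_\mu(\xi,A)$ generates a positivity preserving semigroup. The semigroup $e^{-td\Gamma(\omega_A)}$ is positivity preserving by the classical theory of the free boson semigroup, being the second quantization of the positivity preserving contraction $e^{-t\omega_A}$ on $\cH_A$. The multiplication operator $e^{-t\mu\varphi(v_A)}$ is manifestly positivity preserving. For the kinetic term I would use Hypothesis 2 (1): since $k\mapsto e^{-tK(k)}$ is positive definite, Bochner's theorem yields a finite positive measure $\mu_t$ on $\RR^\nu$ with Fourier transform $e^{-tK}$, giving
\begin{align*}
e^{-tK(\xi-d\Gamma(k_A))}=\int_{\RR^\nu} e^{-ix\cdot\xi}\,\Gamma(e^{ix\cdot k_A})\,d\mu_t(x).
\end{align*}
By rotation invariance $\mu_t$ is symmetric under $x\mapsto -x$, so pairing $\pm x$ rewrites the integrand as a positive-measure average of second quantizations of real symmetric contractions on $\cH_A$, each of which is positivity preserving in Q-space. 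The Trotter product formula then yields that $e^{-tH_\mu(\xi,A)}$ is positivity preserving for all $t\geq 0$.

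The main obstacle is upgrading positivity preservation to positivity improvement. This is where the choice $A=\{v\neq 0\}$ becomes essential: since $v_A$ is nonzero $\lambda_\nu$-almost everywhere on $A$, the field operator $\varphi(v_A)$ together with the free dynamics acts irreducibly on $L^2(Q,d\nu)$, so that iterating the semigroup on any nonzero non-negative vector produces, in finite time, a function strictly positive $\nu$-almost everywhere. Perron--Frobenius then forces the ground state eigenspace of $H_\mu(\xi,A)$ to be one-dimensional, which is the claim.
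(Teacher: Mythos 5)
The paper does not prove this lemma; it states it as well known and cites \cite{SB2} (Dam--M\o ller, in preparation), so there is no internal proof to compare against. Your overall strategy---pass to a Q-space realization, show positivity of the semigroup, invoke a Perron--Frobenius theorem---is indeed the standard route, and you correctly identify the role of the choice $A=\{v\neq 0\}$.

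The genuine gap is in your handling of the kinetic term. From Bochner you correctly obtain $e^{-tK(\xi-d\Gamma(k_A))}=\int e^{-ix\cdot\xi}\,\Gamma(e^{ix\cdot k_A})\,d\mu_t(x)$, but second quantization is not affine, so ``pairing $\pm x$'' does \emph{not} yield second quantizations of real contractions: on the $n$-particle sector the $\{x,-x\}$ pair contributes multiplication by $\cos\bigl(x\cdot(\xi-k_1-\cdots-k_n)\bigr)$, which is not $\prod_j\cos(x\cdot k_j)$ and hence is not $\Gamma^{(n)}(T)$ for any one-particle $T$. Since $e^{ix\cdot k}$ does not commute with the usual conjugation $f\mapsto\bar f$, the operator $\Gamma(e^{ix\cdot k_A})$ is not positivity preserving in the Q-space built on that conjugation, and integrating against $\mu_t$ cannot repair this within your decomposition. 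The standard remedy is to change the real structure: with the parity conjugation $(Jf)(k)=\overline{f(-k)}$, each $e^{ix\cdot k}$ becomes a $J$-real contraction, so $\Gamma(e^{ix\cdot k_A})$ is positivity preserving in the Q-space associated to $J$, while rotation invariance and reality of $\omega$ and $v$ keep $e^{-t\omega_A}$ a $J$-real contraction and $\varphi(v_A)$ a multiplication operator in that realization. A second, smaller issue: the passage from positivity preservation to positivity \emph{improvement} is asserted, not argued. For massless $\omega$, $\Gamma(e^{-t\omega_A})$ is itself not positivity improving, so one must separately establish ergodicity of the joint semigroup generated by $H_0(\xi,A)$ and $\varphi(v_A)$ and apply Faris's criterion (Reed--Simon, Thm.~XIII.44--45); this is exactly where $v\neq 0$ a.e.\ on $A$ enters, but it requires a real argument in the spirit of \cite{Gross}, not merely the observation that $\varphi(v_A)$ is a nonconstant multiplication operator.
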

\begin{lemma}\label{Lem:Uniqueness}
Assume $\nu \geq 2$ and Hypotheses 1 and 2 holds. If $H_\mu(\xi)$ has a ground state for some $\mu\neq 0$ and $\xi\in \RR^\nu$ then the corresponding eigenspace is non degenerate.
\end{lemma}
\begin{proof}
Define $A=\{v\neq 0\}$. By Lemma \ref{Lem:Splitspace} there is a unitary map
\begin{equation*}
U:\mathcal{F}(\mathcal{H})\rightarrow \mathcal{F}(\mathcal{H}_A)\oplus \bigoplus_{n=1}^{\infty} \mathcal{F}(\mathcal{H}_A)\otimes \mathcal{H}_{A^c}^{\otimes_s n}
\end{equation*}
such that 
\begin{align}\label{eq:ulighed}
U H_\mu(\xi)U^*=H_\mu(\xi,A)\oplus \bigoplus_{n=1}^\infty H_{n,\mu}(\xi,A)\mid_{\mathcal{F}(\mathcal{H}_A)\otimes \mathcal{H}_{A^c}^{\otimes_s n}}
\end{align}
for all $\xi \in \RR^\nu$ where
\begin{align*}
H_{n,\mu}(\xi,A)=\int^{\oplus}_{(A^c)^n} H_\mu(\xi-k_1-\dots-k_n,A)+\omega(k_1)+\dots+\omega(k_n)d\lambda_{\nu}^{\otimes n}
\end{align*}
Let $\psi$ be any ground state for $H_{\mu}(\xi)$. We prove $U\psi=(\widetilde{\psi}^{(0)},0,0,\dots)$. Write $U\psi=(\widetilde{\psi}^{(n)})$ and assume towards contradiction that $\widetilde{\psi}^{(n)}\neq 0$ for some $n\geq 1$. Then $\widetilde{\psi}^{(n)}$ is an eigenvector for $H_{n,\mu}(\xi,A)$ corresponding to the eigenvalue $\Sigma(\xi)$. The spectral projection of $H_{n,\mu}(\xi,A)$ onto $\Sigma(\xi)$ is given by
\begin{align*}
\int^{\oplus}_{(A^c)^n} 1_{\{ \Sigma(\xi) \}}(H_\mu(\xi-k_1-\dots-k_n,A)+\omega(k_1)+\dots+\omega(k_n))d\lambda_{\nu}^{\otimes n} \neq 0.
\end{align*}
Hence $\Sigma(\xi)$ is an eigenvalue for $H_\mu (\xi-k_1-\dots-k_n,A)+\omega(k_1)+\dots+\omega(k_n)$ on a set of positive $\lambda_{n\nu}=\lambda_{\nu}^{\otimes n}$ measure. Sub-additivity of $\omega$ along with Lemmas \ref{Lem:Splitspace} and \ref{Lem:HVZ} gives
\begin{align*}
\Sigma(\xi)&\geq \Sigma_A(\xi-k_1-\dots-k_n)+\omega(k_1)+\dots+\omega(k_n)\\&\geq \Sigma(\xi-k_1-\dots-k_n)+\omega (k_1+\dots+k_n)\geq \Sigma(\xi)
\end{align*}
most hold on a set of positive $\lambda_{n\nu}$ measure. By Lemma \ref{Lem:Energyineq} we se that this can only hold for $(k_1,\dots,k_n)\in (\RR^\nu)^n$ with $k_1+\dots+k_n\in \text{Span}(\xi)$. However, the set of $k$ satisfying this is a subspace of $(\RR^\nu)^n$ of dimension $\nu (n-1)+1<\nu n$ and such a subspace must have $\lambda_{n\nu}$ measure 0 which is a contradiction.

Assume $\psi_1,\psi_2$ are orthogonal eigenvectors corresponding to the eigenvalue $\Sigma(\xi)$. Then $U\psi_i=(\widetilde{\psi_i},0,0,\dots.)$. Now $U$ preserves the inner product so $\widetilde{\psi_1}$ and $\widetilde{\psi_2}$ are orthogonal eigenvectors for $H_\mu(\xi,A)$ corresponding to the eigenvalue $\Sigma(\xi)$ so in particular $\Sigma(\xi)\geq \Sigma_A(\xi)$. By equation (\ref{eq:ulighed}) we conclude that $\Sigma(\xi)= \Sigma_A(\xi)$ and therefore $H_\mu(\xi,A)$ has two orthogonal ground states. This is a contradiction with Lemma \ref{Lem:Uniquenessimpel}.  \cqfd
\end{proof}
The next two Lemmas are an adapted version of the corresponding ones found in \cite{HerbstHasler}. For $\xi\in \RR^\nu$ and $k\neq 0$ we define
\begin{align*}
Q_0(k,\xi)&=\omega(k)(H_\mu(\xi)-\Sigma(\xi)+\omega(k))^{-1} \\
P_0(\xi)&=1_{\{\Sigma(\xi)\}}(H_\mu(\xi))
\end{align*}
\begin{lemma}\label{Lem:Q0egenskaber}
Assume Hypotheses 1 and 2 holds. Fix $\xi\in \RR^\nu$ and $R>0$. Then $\widehat{k}\cdot \nabla K(\xi-d\Gamma(m))Q_0(k,\xi)$ is uniformly bounded for $k$ in $B(0,R)\backslash \{0\}$ and
\begin{equation}
s-\lim_{k\rightarrow 0} \widehat{k}\cdot \nabla K(\xi-d\Gamma(m))Q_0(k,\xi)(1-P_0(\xi))= 0
\end{equation}
\end{lemma}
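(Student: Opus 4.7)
The plan is to handle both parts through a common chain of operator-norm estimates. The opening ingredient is Lemma \ref{Lem:FundamentalTechnicalStuff}(1), which combined with the Cauchy--Schwarz inequality (\ref{eq:Vectorcauchy}) (and $\|\widehat{k}\|=1$) will give
\begin{align*}
\|\widehat{k}\cdot \nabla K(\xi - d\Gamma(k))\phi\|^2 \leq C_K\|K(\xi - d\Gamma(k))\phi\|^2 + C_K\|\phi\|^2
\end{align*}
for $\phi$ in the relevant domain. I would then dominate $\|K(\xi - d\Gamma(k))\phi\|$ by $\|H(\xi)\phi\|$ with constants independent of $k$: since $K(\xi - d\Gamma(k))$ and $d\Gamma(\omega)$ are commuting, nonnegative multiplication operators on each Fock sector, the trivial inequality gives $\|K(\xi - d\Gamma(k))\phi\| \leq \|H_0(\xi)\phi\|$; and the infinitesimal $d\Gamma(\omega)$-bound on $\mu\varphi(v)$ from Lemma \ref{Lem:FundamentalIneq} combined with Kato--Rellich delivers constants $a,b$, independent of $k$, with $\|H_0(\xi)\phi\| \leq a\|H(\xi)\phi\| + b\|\phi\|$.

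For part (1), I would specialise this chain to $\phi = Q_0(k,\xi)\psi$. The algebraic identity
\begin{align*}
H(\xi) Q_0(k,\xi) = \omega(k) + (\Sigma(\xi) - \omega(k)) Q_0(k,\xi),
\end{align*}
which follows from writing $H(\xi) = (H(\xi) - \Sigma(\xi) + \omega(k)) + (\Sigma(\xi) - \omega(k))$ and using $\|Q_0(k,\xi)\| \leq 1$, yields $\|H(\xi) Q_0(k,\xi)\| \leq 2\omega(k) + |\Sigma(\xi)|$, uniformly bounded on $B(0,R)\setminus\{0\}$ by continuity of $\omega$. Threading this through the two estimates above produces the desired uniform bound on $\|\widehat{k}\cdot \nabla K(\xi - d\Gamma(k)) Q_0(k,\xi)\|$.

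For part (2), the uniform bound from (1) together with boundedness of $1 - P_0(\xi)$ reduces the claim to verifying strong convergence to $0$ on $\cH$. Using the spectral representation $H(\xi) = \int \lambda\, dE(\lambda)$, dominated convergence will give
\begin{align*}
\|Q_0(k,\xi)(1 - P_0(\xi))\psi\|^2 = \int_{(\Sigma(\xi),\infty)} \frac{\omega(k)^2}{(\lambda - \Sigma(\xi) + \omega(k))^2}\, d\langle E(\lambda)\psi, \psi\rangle \to 0
\end{align*}
as $k \to 0$, since the integrand is bounded by $1$ and tends pointwise to $0$ (using $\omega(k) \to 0$, which follows from continuity of $\omega$ and Hypothesis 2). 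Setting $\phi_k = Q_0(k,\xi)(1 - P_0(\xi))\psi$, the algebraic identity then gives $H(\xi)\phi_k = \omega(k)(1 - P_0(\xi))\psi + (\Sigma(\xi) - \omega(k))\phi_k \to 0$, whence $\|K(\xi - d\Gamma(k))\phi_k\| \to 0$ via the Kato--Rellich bound, and the opening estimate closes the argument. The only subtle point I anticipate is securing $k$-independent constants in the $H(\xi)$-bound on $K(\xi - d\Gamma(k))$; this is precisely what the simultaneous diagonalisation of $K(\xi - d\Gamma(k))$ and $d\Gamma(\omega)$ on each Fock sector provides, so once that observation is in hand the rest is algebra and dominated convergence.
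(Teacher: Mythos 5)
Your argument is correct, and it reaches the conclusion by a genuinely different route than the paper. The paper's proof works entirely at the resolvent level: it splits $\partial_i K(\xi-d\Gamma(k))Q_0(k,\xi)$ via the resolvent identity
\begin{equation*}
Q_0(k,\xi)=\frac{\omega(k)}{H(\xi)-\Sigma(\xi)+\omega(k)+1}+\frac{1}{H(\xi)-\Sigma(\xi)+\omega(k)+1}\,Q_0(k,\xi),
\end{equation*}
and then iterates once more to show that the bounded factor $\partial_i K(\xi-d\Gamma(k))\bigl(H(\xi)-\Sigma(\xi)+\omega(k)+1\bigr)^{-1}$ is uniformly bounded and converges in operator norm, while $Q_0(k,\xi)\to P_0(\xi)$ strongly; the product structure then delivers both parts of the lemma. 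You instead exploit the exact algebraic identity $H(\xi)Q_0(k,\xi)=\omega(k)+(\Sigma(\xi)-\omega(k))Q_0(k,\xi)$, pass through the chain $\lVert\nabla K(\xi-d\Gamma(k))\phi\rVert\lesssim\lVert K(\xi-d\Gamma(k))\phi\rVert+\lVert\phi\rVert\leq\lVert H_0(\xi)\phi\rVert+\lVert\phi\rVert\lesssim\lVert H_\mu(\xi)\phi\rVert+\lVert\phi\rVert$ (Lemma \ref{Lem:FundamentalTechnicalStuff}(1), the term-wise nonnegativity of the $K$ and $\omega$ contributions on each Fock sector, and Kato--Rellich via Lemma \ref{Lem:FundamentalIneq}), and obtain the strong convergence from the spectral calculus and dominated convergence applied to $\bigl(\lambda-\Sigma(\xi)+\omega(k)\bigr)^{-2}\omega(k)^2$. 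Both routes appeal to the same underlying ingredients, but yours avoids the two-stage resolvent expansion in favour of a single algebraic identity and a dominated-convergence step, which is arguably more elementary; the paper's approach is the Herbst--Hasler technology and is more readily reusable when a perturbative expansion of $Q(k,\xi)$ itself is needed (as in Lemma \ref{Lem:Mainconv}), but for this isolated lemma your argument is complete and the $k$-independence of the constants in the $H_\mu(\xi)$-bound on $K(\xi-d\Gamma(k))$ is, as you note, secured by the sector-wise diagonalisation.
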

\begin{proof}
Note $ \widehat{k}\cdot\nabla K(\xi-d\Gamma(m))Q_0(k)$ is bounded for $k\neq 0$ by the closed graph theorem and Lemma \ref{Lem:FundamentalTechnicalStuff}. For $\psi\in \cF(\cH)$ we find by equation (\ref{eq:Vectorcauchy}) that
\begin{align*}
\lVert \widehat{k}\cdot \nabla K(\xi-d\Gamma(m))Q_0(k,\xi)\psi \lVert^2\leq \sum_{i=1}^{\nu}\lVert \partial_i K(\xi-d\Gamma(m))Q_0(k,\xi)\psi \lVert^2
\end{align*}
so it is enough to see $\partial_i K(\xi-d\Gamma(m))Q_0(k,\xi)$ is uniformly bounded on $B(0,R)\backslash \{0\}$ for any $R>0$ and converges strongly to $\partial_i K(\xi-d\Gamma(m))P_0(\xi)$. We have
\begin{align*}
\partial_i K(\xi-d\Gamma(m))&Q_0(k,\xi)=\partial_i K(\xi-d\Gamma(m))\frac{\omega(k)}{H_\mu(\xi)-\Sigma(\xi)+\omega(k)+1} \\&+\partial_i K(\xi-d\Gamma(m))\frac{1}{H_\mu(\xi)-\Sigma(\xi)+\omega(k)+1}Q_0(k,\xi)
\end{align*}
Now $\omega$ is continuous and goes to 0 as $k$ tends to 0, so $Q_0(k,\xi)$ goes strongly to $P_0(\xi)$. Hence it is enough to see $\partial_i K(\xi-d\Gamma(m))(H_\mu(\xi)-\Sigma(\xi)+\omega(k)+1)^{-1}$ is uniformly bounded in $k$ and converges to $\partial_i K(\xi-d\Gamma(m))(H_\mu(\xi)-\Sigma(\xi)+1)^{-1}$ in norm. But this follows from the equality
\begin{align*}
\partial_i K(\xi&-d\Gamma(m))\frac{1}{H_\mu(\xi)-\Sigma(\xi)+\omega(k)+1}-\partial_i K (\xi-d\Gamma(m))\frac{1}{H_\mu(\xi)-\Sigma(\xi)+1}\\&=\partial_i K(\xi-d\Gamma(m))\frac{1}{H_\mu(\xi)-\Sigma(\xi)+1}\frac{\omega(k)}{H_\mu(\xi)-\Sigma(\xi)+1+\omega(k)}
\end{align*}
\cqfd
\end{proof}
For $\xi \in \RR^\nu$ and $k\notin \RR\xi$ we may by Lemma \ref{Lem:Energyineq} define
\begin{align*}
Q(k,\xi)=\omega(k)(H_\mu(\xi-k)-\Sigma(\xi)+\omega(k))^{-1}.
\end{align*}
\begin{lemma}\label{Lem:Mainconv}
Fix $\xi\in \RR^\nu$ and assume Hypotheses 1 and 2 holds. There is a vector $u(\xi)\in \RR^\nu$ such that
\begin{align*}
P_0(\xi)\widehat{k}\cdot \nabla K(\xi-d\Gamma(m)) P_0(\xi)=\widehat{k}\cdot u(\xi)P_0(\xi)
\end{align*}
for any $k\in \RR^\nu \backslash \{0\}$. Pick $\varepsilon\in (0,1)$ such that $\widehat{k}\cdot C_\omega u(\xi)<\frac{1}{2}$ for all $k\in S_\varepsilon(C_\omega u(\xi))$. Define
\begin{align*}
\widetilde{S}_\varepsilon(\xi)&=S_\varepsilon(\xi)\cap S_\varepsilon(C_\omega u(\xi)).
\end{align*}
If $\nu\geq 3$ then $S_\varepsilon$ is open, non-empty and invariant under positive scalings. Furthermore,
\begin{equation}\label{eq:limit}
w-\lim_{k\rightarrow 0, k\in \widetilde{S}_\varepsilon(\xi)} Q(k,\xi)-(1-C_\omega\widehat{k}\cdot u(\xi))^{-1}P_0(\xi)=0.
\end{equation}
\end{lemma}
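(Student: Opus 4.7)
For claim (1), the plan is to invoke non-degeneracy of the ground state. Lemma~\ref{Lem:Uniqueness} applies since $\nu\geq 3\geq 2$, so either $P_0(\xi)=0$ (take $v(\xi):=0$) or $P_0(\xi)=|\psi_0\rangle\langle\psi_0|$ for some unit vector $\psi_0\in\cD(H(\xi))$. By Lemma~\ref{Lem:FundamentalTechnicalStuff}(1), $\psi_0$ lies in the domain of each $\partial_i K(\xi-d\Gamma(k))$, so I set $v_i(\xi):=\langle\psi_0,\partial_i K(\xi-d\Gamma(k))\psi_0\rangle$, real by self-adjointness of $\partial_i K$; the rank-one structure of $P_0(\xi)$ then forces the stated identity. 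For claim (2), each $S_\varepsilon(y)$ is the preimage of an open set under the continuous map $k\mapsto|\widehat k\cdot y|$ on $\RR^\nu\setminus\{0\}$, and is invariant under positive scaling of $k$ because $\widehat k$ is; $\widetilde S_\varepsilon(\xi)$ inherits both. Non-emptiness when $\nu\geq 3$ follows by dimension count: $\mathrm{span}\{\xi,C_\omega v(\xi)\}$ has dimension at most $2<\nu$, so any non-zero $k$ in its orthogonal complement satisfies $\widehat k\cdot\xi=\widehat k\cdot C_\omega v(\xi)=0$ and hence lies in $\widetilde S_\varepsilon(\xi)$ (trivial adjustments if $\xi$ or $v(\xi)$ vanishes).

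For the weak-limit claim, my plan is a Feshbach–Schur reduction. Using Lemma~\ref{Lem:FundamentalTechnicalStuff}(2) with base point $\xi$ and shift $-k$, I would write
\[
M_k:=H(\xi-k)-\Sigma(\xi)+\omega(k)=B_k-T_k,
\]
where $B_k:=H(\xi)-\Sigma(\xi)+\omega(k)$ commutes with $P_0(\xi)$ (with $B_k P_0=\omega(k)P_0$) and $T_k:=k\cdot\nabla K(\xi-d\Gamma(k))-E_\xi(-k)$ with $\|E_\xi(-k)\|\leq C_K\|k\|^2$. On $k\in\widetilde S_\varepsilon(\xi)\cap B_r(0)$ with $r$ small, Lemma~\ref{Lem:Energyineq}(2) yields $M_k\geq(1-D)\omega(k)$, so the reduced resolvent $R_k:=(\bar P_0 M_k\bar P_0)^{-1}\bar P_0$ ($\bar P_0:=1-P_0$) has $\|R_k\|=O(\omega(k)^{-1})$. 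I then treat the four blocks of $Q(k,\xi)$ under the splitting $P_0\oplus\bar P_0$. Each block touching $\bar P_0$ vanishes: the $(\bar P_0,\bar P_0)$ block because $Q_0(k,\xi)\bar P_0\to 0$ strongly (dominated convergence in the spectral representation of $H(\xi)$), and the mixed blocks via the resolvent identity $Q(k,\xi)=Q_0(k,\xi)+M_k^{-1}T_k Q_0(k,\xi)$ combined with Lemma~\ref{Lem:Q0egenskaber} (which gives $\|T_k Q_0\bar P_0\phi\|=o(\|k\|)$) against $\|M_k^{-1}\|=O(1/\omega(k))$. What remains is the scalar $\langle\psi_0,Q(k,\xi)\psi_0\rangle$, which by Feshbach–Schur equals $\omega(k)$ divided by the bracket
\[
\omega(k)-k\cdot v(\xi)+\langle\psi_0,E_\xi(-k)\psi_0\rangle-\langle\bar P_0 T_k\psi_0,R_k\bar P_0 T_k\psi_0\rangle.
\]

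The main obstacle I anticipate is controlling the Schur correction $\langle\bar P_0 T_k\psi_0,R_k\bar P_0 T_k\psi_0\rangle/\omega(k)$: its naive bound is only $O(\|k\|^2/\omega(k)^2)=O(1)$, so it is a priori non-negligible. My strategy is to expand $R_k=B_k^{-1}\bar P_0+O(\|T_k\|/\omega(k)^2)$ via another resolvent identity, reducing the leading contribution to $\|k\|^2/\omega(k)\cdot\langle\bar P_0(\widehat k\cdot\nabla K)\psi_0,B_k^{-1}\bar P_0(\widehat k\cdot\nabla K)\psi_0\rangle$, and then use $\omega(k)B_k^{-1}\bar P_0=Q_0(k,\xi)\bar P_0\to 0$ strongly to conclude that the inner product is $o(1/\omega(k))$, so the whole correction is $o(1)$. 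Once this vanishing is in hand, dividing numerator and denominator of the scalar by $\omega(k)$ and invoking $\|k\|/\omega(k)\to 1/C_\omega$ delivers the claimed prefactor, while the defining constraint $k\in\widetilde S_\varepsilon(\xi)$—which keeps $k$ uniformly away from the ``bad'' direction $C_\omega v(\xi)$—ensures $1-C_\omega\widehat k\cdot v(\xi)$ remains bounded away from zero so that its reciprocal is uniformly bounded throughout the limit.
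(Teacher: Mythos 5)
Your treatment of the two auxiliary claims matches the paper's (rank-one $P_0(\xi)$ via Lemma~\ref{Lem:Uniqueness}; openness, scaling invariance and non-emptiness of $\widetilde S_\varepsilon(\xi)$ by dimension counting). For the weak limit, however, you take a genuinely different route. The paper iterates the resolvent identity between $Q_0(k,\xi)$ and $Q(k,\xi)$, first one-sided and then two-sided, arriving at the second-order expansion~(\ref{eq:finalstuff}); from this and Lemma~\ref{Lem:Q0egenskaber} it reads off that both $(1-P_0)Q(k)$ and $Q(k)(1-P_0)$ vanish weakly, and then isolates $P_0Q(k)P_0$ by algebraic rearrangement of the first-order identity to produce the scalar $(1-C_\omega\widehat k\cdot v(\xi))^{-1}$. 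You instead propose a Feshbach--Schur reduction of $M_k=H(\xi-k)-\Sigma(\xi)+\omega(k)$ in the $P_0\oplus\bar P_0$ block decomposition, extracting the scalar directly as $\omega(k)$ over the Schur complement and arguing the correction $\langle\bar P_0 T_k\psi_0,R_k\bar P_0 T_k\psi_0\rangle/\omega(k)$ is $o(1)$. Both proofs ultimately rest on the same ingredients: the rank-one structure of $P_0$, the resolvent bound of Lemma~\ref{Lem:Energyineq}(2), and the strong convergence $\nabla K(\xi-d\Gamma(k))Q_0(k,\xi)\bar P_0\to 0$ from Lemma~\ref{Lem:Q0egenskaber}. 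Your approach buys a cleaner identification of the scalar limit, but in exchange you must handle the unboundedness of $T_k$ in the Schur complement: $R_k$ and its resolvent expansion live on $\bar P_0\cD(H(\xi))$, and showing the iterated remainder in $R_k=B_k^{-1}\bar P_0+B_k^{-1}\bar P_0 T_k R_k$ is $O(\|k\|)$ (hence contributes $O(\|k\|)$ after dividing by $\omega(k)$) relies on the cancellation $H(\xi)Q_0(k,\xi)=\Sigma(\xi)Q_0(k,\xi)+\omega(k)(1-Q_0(k,\xi))$ being bounded uniformly in small $k$ --- a bound you use implicitly but should make explicit. The paper avoids these domain questions because all of its operator products are sandwiched between the bounded operators $Q_0$ and $Q$ from the start. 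Likewise, your claim that the $(\bar P_0,\bar P_0)$ block vanishes because $Q_0(k,\xi)\bar P_0\to 0$ strongly implicitly requires replacing $R_k$ by $B_k^{-1}$ with a controlled error, which is the same perturbative step; it is not automatic. With those details supplied, the argument goes through.
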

\begin{proof}
As $\xi$ is fixed in this proof it will be omitted from the notation of $Q,Q_0, u$ and $P_0$. If $P_0=0$ we can pick $u=0$. If $P_0\neq 0$ then it has dimension 1 by Lemma \ref{Lem:Uniqueness} and is spanned by a vector $\psi\in \cD(H_\mu(\xi))$. Using $P_0=\lvert \psi\rangle \langle \psi \lvert$ we find that $u=\langle  \psi, \nabla K(\xi-d\Gamma(m)) \psi \rangle$ does the trick. Furthermore, $S_\varepsilon$ is obviously open and invariant under positive scaling since this holds for  $S_\varepsilon(\xi)$ and $ S_\varepsilon(C_\omega u)$. Furthermore, any non-zero vector which is orthogonal to $\xi$ and $u$ is in $\widetilde{S}_\varepsilon$ and such vector will always exist if $\nu\geq 3$. 

It remains only to prove equation (\ref{eq:limit}). By Lemma \ref{Lem:Energyineq} we may pick $R>0$ such that for $k\in \widetilde{S}_\varepsilon(\xi)\cap B_{R}(0)$ we have
\begin{align*}
\Sigma(k-\xi)-\Sigma(\xi)+\omega(k)\geq (1-D)\omega(k)
\end{align*}
with $D<1$. Hence we find
\begin{align}\label{eq:uniformbound1}
\lVert Q(k) \lVert&\leq (1-D)^{-1}\,\,\,\, \forall k\in \widetilde{S}_\varepsilon(\xi) \cap B_{R}(0)
\end{align}
Using Lemma \ref{Lem:FundamentalTechnicalStuff} we may calculate for $k\in \widetilde{S}_\varepsilon(\xi)$:
\begin{align}
Q(k)&=Q_0(k)+\frac{\lvert k\lvert}{\omega(k)}Q_0(k)(\widehat{k}\cdot \nabla K(\xi-d\Gamma(m))   )Q(k)+o(k)\label{eq:Hojre}
\end{align}
where $o_1(k):=-Q_0(k)\omega(k)^{-1}E_\xi(-k)Q(k)$.  Note $o(k)$ goes to 0 in norm for $k$ tending to 0 in $\widetilde{S}_\varepsilon(\xi)$ by equation (\ref{eq:uniformbound1}), Lemma \ref{Lem:FundamentalTechnicalStuff} and the uniform bound $\lVert Q_0(k) \lVert\leq 1$. It follows from Lemma $\ref{Lem:Q0egenskaber}$ that $(1-P_0)Q(k)$ converges weakly to 0 for $k$ tending to 0 inside $\widetilde{S}_\varepsilon(\xi)$  0 inside $\widetilde{S}_\varepsilon(\xi)$. Taking adjoints we find the same conclusion for $Q(k)(1-P_0)$. Hence we find
\begin{equation}\label{eq:limweak}
w-\lim_{k\rightarrow 0, k\in \widetilde{S}_\varepsilon(\xi)} Q(k)-P_0Q(k)P_0=0.
\end{equation}
From equation (\ref{eq:Hojre}) we find
\begin{align*}
P_0Q(k)P_0&=P_0Q_0(k)P_0+\frac{\lvert k\lvert}{\omega(k)}P_0Q_0(k)(k\cdot \nabla K(\xi-d\Gamma(m))Q(k)P_0+P_0o_1(k)P_{0}\\&=P_0+\frac{\lvert k\lvert}{\omega(k)}P_0(\widehat{k}\cdot \nabla K(\xi-d\Gamma(m)))(1-P_0)Q(k)P_0\\&+\left(\frac{\lvert k\lvert}{\omega(k)}-C_\omega \right)\widehat{k}\cdot uP_0Q(k)P_0+C_\omega\widehat{k}\cdot uP_0Q(k)P_0+P_0o_1(k)P_{0}
\end{align*}
Write $D_k=(1-C_\omega \widehat{k}\cdot u)^{-1}$. For $k\in\widetilde{S}_\varepsilon(\xi)$ we have $\lvert D_k \lvert\leq 2$ and
\begin{align*}
P_0Q(k)P_0-D_kP_0&=D_k\frac{\lvert k\lvert}{\omega(k)}P_0(\widehat{k}\cdot \nabla K(\xi-d\Gamma(m)))(1-P_0)Q(k)P_0\\&+D_k\left(\frac{\lvert k\lvert}{\omega(k)}-C_\omega\right)\widehat{k}\cdot uP_0Q(k)P_0+ D_kP_0o_1(k)P_{0}.
\end{align*}
The second and third term clearly converges to 0 in norm for $k$ tending to 0 inside $\widetilde{S}_\varepsilon(\xi)$. Sandwiching the first term with two vectors $\phi,\psi\in \cF(\cH)$ we find
\begin{align*}
D_k\frac{\lvert k\lvert}{\omega(k)}\sum_{i=1}^{n}\widehat{k}_i\langle \partial_iK(\xi-d\Gamma(m)) P_0\psi ,(1-P_0)Q(k)P_0\phi \rangle.
\end{align*}
Now $\langle \partial_iK(\xi-d\Gamma(m)) P_0\psi ,(1-P_0)Q(k)P_0\phi \rangle$ converges to 0 for $k$ going to 0 inside $\widetilde{S}_\varepsilon(\xi)$ by equation (\ref{eq:limweak}) and $\frac{\lvert k\lvert}{\omega(k)}\widehat{k}_i$ remains bounded as $k$ goes to 0. Therefore first term goes weakly to 0 for $k$ going to 0 inside $\widetilde{S}_\varepsilon(\xi)$. \cqfd
\end{proof}

\begin{proof}[Theorem \ref{Thm:Mainthm}]
Assume towards contradiction that a ground state $\psi=(\psi^{(n)})$ exists. For $k\in\RR^\nu$ we define the pointwise annihilation operator $a(k)\psi=(a(k)\psi^{(n+1)})$ where
\begin{equation*}
a(k)\psi^{(n+1)} = \sqrt{n+1}\psi^{(n+1)}(k,\cdots)\in \cH^{\otimes_s n}
\end{equation*}
for almost all $k\in\RR^\nu$. By Lemma \ref{Lem:pullthr} we get $a(k)\psi\in \cF(\cH)$ for almost all $k\in \RR^\nu$ and \begin{align*}
\langle \eta,a(k)\psi(k)\rangle=\mu \frac{v(k)}{\omega(k)}\langle \eta,Q(k)\psi\rangle.
\end{align*}
From Lemma \ref{Lem:Mainconv} we get
\begin{align*}
\lim_{k\rightarrow 0, k\in \widetilde{S}_\varepsilon(\xi)}\langle \eta,Q(k)\psi\rangle-(1-C_\omega\widehat{k}\cdot v(\xi))^{-1} \langle \eta,\psi \rangle=0.
\end{align*}
Since $(1-C_\omega\widehat{k}\cdot v(\xi))^{-1} \langle \eta,\psi \rangle$ is uniformly bounded from below in $\widetilde{S}_\varepsilon(\xi)$ by $\frac{1}{2}$ we find that there is $R>0$ such that
\begin{align*}
\lvert \langle \eta,a(k)\psi\rangle\lvert^2\geq \frac{\mu^2}{16}\frac{\lvert v(k)\lvert^2 }{\omega(k)^2}
\end{align*}
for all $k\in \widetilde{S}_\varepsilon(\xi)\cap B_R(0)$. Using Hypotheses 1 and 2 we see $\omega(Re_1)^2> 0$ because if that was not true then $\omega\leq 0$ on $ B_R(0)$ which is a contradiction. Hence we find
\begin{align*}
\infty=\int_{\RR^\nu}\frac{\lvert v(k)\lvert^2 }{\omega(k)^2}d\lambda_\nu\leq \frac{1}{\omega(Re_1)^2} \int_{B_R(0)^c}\lvert v(k)\lvert^2 d\lambda_\nu+\int_{B_R(0)}\frac{\lvert v(k)\lvert^2 }{\omega(k)^2}d\lambda_\nu.
\end{align*}
$v$ is square integrable so the integral of $\omega(k)^{-2}\lvert v(k)\lvert^2 $ over $B_R(0)$ must be infinite. Using Lemma \ref{Lem:rotint} below we find
\begin{align*}
\infty=\int_{B_R(0)}\frac{\lvert v(k)\lvert^2 }{\omega(k)^2}d\lambda_\nu=\lambda_\nu(B_1(0))\int_{0}^\infty 1_{B_R(0)}(ke_1)\frac{\lvert v(ke_1)\lvert^2 }{\omega(ke_1)^2}k^{\nu-1}d\lambda_1.
\end{align*}
As $\lambda_\nu(B_1(0))<\infty$ we see that the latter integral must be infinite. Furthermore, since $\widetilde{S}_\varepsilon(\xi)$ is open and not empty we have
\begin{align*}
\int_{\widetilde{S}_\varepsilon(\xi)\cap B_R(0)}\frac{\lvert v(k)\lvert^2 }{\omega(k)^2}d\lambda_\nu&=\nu \lambda_\nu(\widetilde{S}_\varepsilon(\xi) \cap B_1(0))\int_{0}^\infty 1_{B_R(0)}(xe_1)\frac{\lvert v(ke_1)\lvert^2 }{\omega(ke_1)^2}k^{\nu-1}d\lambda_1\\&=\infty
\end{align*}
by Lemma \ref{Lem:rotint} so $\lvert \langle \eta,a(k)\psi\rangle\lvert^2$ is not integrable. On the other hand we find
\begin{align*}
\lvert \langle \eta,&a(k)\psi\rangle\lvert^2\leq \lVert (N+1)^{1/2}\eta\lVert^2\lVert (N+1)^{-1/2}a(k)\psi\lVert^2\\&= \lVert (N+1)^{1/2}\eta\lVert^2 \sum_{i=1}^{\infty} \int_{\RR^{(n-1)\nu}}\lvert \psi^{(n)}(k,k_1,\dots,k_{n-1}) \lvert^2d\lambda_\nu^{\otimes n-1}(k_1,\dots,k_{n-1})
\end{align*}
which is integrable with integral $ \lVert (N+1)^{1/2}\eta\lVert^2\lVert \psi\lVert^2$ by definition of the Fock space norm. This is the desired contradiction. \cqfd
\end{proof}
\begin{lemma}\label{Lem:rotint}
	Let $U\subset \RR^\nu$ be invariant under multiplication by elements in $(0,\infty)$. Then for any positive, rotation invariant, measurable map $f$ we have
	\begin{align*}
	\int_{U} f(k) d\lambda_\nu=\nu \lambda_\nu(U\cap B_1(0))\int_{0}^{\infty} f(ke_1)k^{\nu-1}d\lambda_1
	\end{align*}
	where $e_1$ is the first standard basis vector. If $U$ is open then $\lambda_\nu(U\cap B_1(0))\neq 0$. 
\end{lemma}
\begin{proof}
	Consider the map $g: \RR^\nu\rightarrow [0,\infty)$ given by $g(k)=\lvert k\lvert$. Define the transformed measure on $([0,\infty),\cB([0,\infty)))$ by
	\begin{align*}
	\mu=(1_U\lambda_\nu)\circ g^{-1}
	\end{align*}
	The transformation theorem implies
	\begin{align*}
	\mu([0,a])=\lambda_\nu(a(U\cap B_1(0)))=\nu\lambda_\nu(U\cap B_1(0))\int_{0}^{a}k^{\nu-1}d\lambda_1
	\end{align*}
	for all $a>0$. By uniqueness of measures (see \cite[chapter 5]{Shilling}) we find that $\mu$ has density $\nu \lambda_\nu(U\cap B_1(0))k^{\nu-1}$ with respect to $\lambda_1$. Using that $ f(g(k)e_1)=f(k)$ we find
	\begin{align*}
	\lambda_\nu(U\cap B_1(0))\nu\int_{0}^{\infty} f(ke_1)k^{\nu-1}d\lambda_1&=\int_{0}^{\infty} f(ke_1)d\mu=\int_{U} f(k)d\lambda_\nu
	\end{align*}
	as desired. If $U$ is not empty then $U\cap B_1(0)$ is open and non empty which implies $\lambda_\nu(U\cap B_1(0))\neq 0$. \cqfd
\end{proof}

\appendix

\section{Partitions of unity and the essential spectrum.}
In this section we prove a few technical ingredients. Hypothesis 1 will be assumed throughout this section. Define $V_A:\cH\rightarrow \cH_A\oplus \cH_{A^c}$ by
\begin{align*}
V_A(f)=(P_Af,P_{A^c}f).
\end{align*}
Then $V_A$ is unitary with $V_A^*(f,g)=f1_A+g1_{A^c}$ almost everywhere. The following Lemma can be found in e.g. \cite{Thomas1}:
\begin{lemma}\label{Iso1}
There is a unique isomorphism $U:\mathcal{F}(\cH)\rightarrow  \mathcal{F}(\mathcal{H}_A)\otimes \mathcal{F}(\mathcal{H}_{A^c})$ with the property that $U(\epsilon(f))=\epsilon(P_Af_1)\otimes \epsilon(P_{A^c}f_2)$.
\end{lemma}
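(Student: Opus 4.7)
The plan is to exploit two ingredients recorded earlier in the paper: the inner product identity $\langle \epsilon(g),\epsilon(f)\rangle = e^{\langle g,f\rangle}$ from Lemma \ref{Lem:FundamentalCalculations}, and the fact that $\cL(\cD)$ is a total subset of $\cF(\cH)$ whenever $\cD\subset\cH$ is dense. Uniqueness of $U$ is then immediate: any two bounded linear operators that agree on the total set $\cL(\cH_A\oplus\cH_{A^c})$ must coincide on all of $\cF(\cH_A\oplus\cH_{A^c})$.

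For existence I would first define $U$ on the dense subspace $\text{Span}(\cL(\cH_A\oplus \cH_{A^c}))$ by linearly extending the prescription $U(\epsilon(f_1\oplus f_2))=\epsilon(f_1)\otimes \epsilon(f_2)$, and then verify well-definedness and isometry in a single computation. The heart of the argument uses that the inner product on $\cH_A\oplus\cH_{A^c}$ splits as a sum, together with $e^{a+b}=e^ae^b$:
\begin{align*}
\langle\epsilon(f_1\oplus f_2),\epsilon(g_1\oplus g_2)\rangle &= e^{\langle f_1,g_1\rangle+\langle f_2,g_2\rangle}=e^{\langle f_1,g_1\rangle}e^{\langle f_2,g_2\rangle}\\
&=\langle\epsilon(f_1),\epsilon(g_1)\rangle\langle\epsilon(f_2),\epsilon(g_2)\rangle\\
&=\langle\epsilon(f_1)\otimes\epsilon(f_2),\epsilon(g_1)\otimes\epsilon(g_2)\rangle.
\end{align*}
Expanding bilinearly yields, for any finite combination,
\begin{equation*}
\lVert\, \textstyle\sum_i c_i\, \epsilon(f_{1,i}\oplus f_{2,i}) \,\lVert^2 = \lVert\, \textstyle\sum_i c_i\, \epsilon(f_{1,i})\otimes \epsilon(f_{2,i}) \,\lVert^2,
\end{equation*}
which simultaneously proves that $U$ is well-defined on the span (if the left side vanishes, so does the right, so the image is independent of the chosen representation) and that $U$ is isometric on its domain.

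Having constructed an isometry on a dense subspace, I would extend $U$ by continuity to an isometry defined on all of $\cF(\cH_A\oplus\cH_{A^c})$. Surjectivity follows because the range contains every $\epsilon(f_1)\otimes\epsilon(f_2)$, and such elementary tensors span a dense subspace of $\cF(\cH_A)\otimes\cF(\cH_{A^c})$: $\cL(\cH_A)$ is total in $\cF(\cH_A)$ and $\cL(\cH_{A^c})$ is total in $\cF(\cH_{A^c})$, so their tensor products form a total set in the Hilbert space tensor product. Hence $U$ is unitary with the stated property. There is no real obstacle: the whole argument is the algebraic observation that the multiplicative identity $e^{a+b}=e^ae^b$ implements $\cF(\cH_A\oplus\cH_{A^c})\cong\cF(\cH_A)\otimes\cF(\cH_{A^c})$ at the level of exponential vectors, with the remaining steps (extension by continuity, density of tensor products of total sets) being standard.
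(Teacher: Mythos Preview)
Your argument is correct and is the standard proof of the exponential law for Fock spaces. The paper itself does not give a proof of this lemma: it simply states the result and cites Parthasarathy~\cite{Parthasarathy}, so there is no in-paper proof to compare against. Your approach---verifying that the assignment $\epsilon(f_1\oplus f_2)\mapsto \epsilon(f_1)\otimes\epsilon(f_2)$ preserves inner products via $e^{a+b}=e^a e^b$, then extending isometrically from the total set $\cL(\cH_A\oplus\cH_{A^c})$ and checking surjectivity via totality of $\cL(\cH_A)\otimes\cL(\cH_{A^c})$---is exactly the argument one finds in that reference. One small remark: since the paper already records that $\cL(\cD)$ is a linearly \emph{independent} total set, you could alternatively get well-definedness of the linear extension for free without the norm computation; your route through the norm identity is equally valid and has the virtue of yielding isometry in the same breath.
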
	
	
\noindent The following Lemma is obvious.
\begin{lemma}\label{Iso2}
	There is a unique isomorphism
	\begin{equation*}
	U:\mathcal{F}(\mathcal{H}_A)\otimes \mathcal{F}(\mathcal{H}_{A^c})\rightarrow \mathcal{F}(\mathcal{H}_A)\oplus \bigoplus_{n=1}^{\infty} \mathcal{F}(\mathcal{H}_A)\otimes \mathcal{H}_{A^c}^{\otimes_s n}
	\end{equation*}
	such that
	\begin{equation*}
	U(w \otimes \{\psi^{(n)} \}_{n=0}^\infty)=\psi^{(0)}w\oplus \bigoplus_{n=1}^{\infty} w \otimes \psi^{(n)}.
	\end{equation*}
\end{lemma}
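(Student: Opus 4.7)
The plan is to build $U$ directly from the universal property of the Hilbert space tensor product, using the distributivity of $\otimes$ over orthogonal direct sums. Since $\mathcal{F}(\mathcal{H}_{A^c}) = \bigoplus_{n=0}^{\infty} \mathcal{H}_{A^c}^{\otimes_s n}$ with $\mathcal{H}_{A^c}^{\otimes_s 0} = \mathbb{C}$, and $\mathcal{F}(\mathcal{H}_A) \otimes \mathbb{C} \cong \mathcal{F}(\mathcal{H}_A)$ canonically via $w \otimes c \mapsto c\,w$, the target is nothing but the orthogonal decomposition of $\mathcal{F}(\mathcal{H}_A) \otimes \mathcal{F}(\mathcal{H}_{A^c})$ along the grading of $\mathcal{F}(\mathcal{H}_{A^c})$ by the number of particles in $A^c$. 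I would formalise this by defining $U$ on a dense subspace and then extending by continuity.

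For $w \in \mathcal{F}(\mathcal{H}_A)$ and $\phi = (\psi^{(n)})_{n \geq 0} \in \mathcal{F}(\mathcal{H}_{A^c})$ with only finitely many nonzero components, set
\[
U(w \otimes \phi) := \psi^{(0)} w \oplus \bigoplus_{n \geq 1} w \otimes \psi^{(n)},
\]
and extend bilinearly to the algebraic tensor product. A direct computation, splitting off the $n = 0$ summand, gives
\[
\langle U(w \otimes \phi), U(w' \otimes \phi') \rangle = \overline{\psi^{(0)}}\psi'^{(0)} \langle w, w' \rangle + \sum_{n \geq 1} \langle w, w' \rangle \langle \psi^{(n)}, \psi'^{(n)} \rangle = \langle w, w' \rangle \langle \phi, \phi' \rangle,
\]
which coincides with $\langle w \otimes \phi, w' \otimes \phi' \rangle$ in $\mathcal{F}(\mathcal{H}_A) \otimes \mathcal{F}(\mathcal{H}_{A^c})$. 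Polarising and extending sesquilinearly then shows that $U$ preserves inner products on the full algebraic tensor product, and by the universal property of the Hilbert tensor product it extends uniquely to an isometry on the completion.

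To obtain surjectivity, I would show that the range is dense. An arbitrary element of the target has the form $\eta^{(0)} \oplus \bigoplus_{n \geq 1} \chi^{(n)}$ with $\eta^{(0)} \in \mathcal{F}(\mathcal{H}_A)$, $\chi^{(n)} \in \mathcal{F}(\mathcal{H}_A) \otimes \mathcal{H}_{A^c}^{\otimes_s n}$, and square-summable norm. Each $\chi^{(n)}$ can be approximated in norm by finite sums of elementary tensors $w_j \otimes \psi_j^{(n)}$, while $\eta^{(0)} = U(\eta^{(0)} \otimes \Omega)$. Truncating at the $N$-th summand and combining the approximations produces a vector in the range of $U$ arbitrarily close to the given target, so the range is dense; since $U$ is an isometry its range is closed, hence all of the target. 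Uniqueness of $U$ is immediate from density: the elementary tensors $w \otimes \phi$ with $\phi$ finitely supported span a dense subspace on which the prescribed formula fixes $U$.

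There is no real obstacle: the content is the standard distributive identification $\mathcal{H} \otimes \bigl(\bigoplus_n \mathcal{K}_n\bigr) \cong \bigoplus_n \mathcal{H} \otimes \mathcal{K}_n$ together with $\mathcal{H} \otimes \mathbb{C} \cong \mathcal{H}$. The only subtlety requiring attention is the bookkeeping of the $n = 0$ summand, which must be identified with $\mathcal{F}(\mathcal{H}_A)$ rather than $\mathcal{F}(\mathcal{H}_A) \otimes \mathbb{C}$, and the insistence on carrying out the isometry verification on vectors with only finitely many nonzero $\psi^{(n)}$ before invoking continuity to extend to all of $\mathcal{F}(\mathcal{H}_A) \otimes \mathcal{F}(\mathcal{H}_{A^c})$.
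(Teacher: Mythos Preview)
Your proof is correct and is precisely the standard distributivity argument $\mathcal{H}\otimes\bigl(\bigoplus_n\mathcal{K}_n\bigr)\cong\bigoplus_n\mathcal{H}\otimes\mathcal{K}_n$ combined with $\mathcal{H}\otimes\CC\cong\mathcal{H}$, which is what the paper has in mind: the paper gives no proof at all and simply declares the lemma ``obvious''. Your write-up supplies the details the paper omits, and there is nothing to compare beyond that.
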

\noindent Note that we may identify 
\begin{align*}
\mathcal{F}(\mathcal{H}_A)\otimes \mathcal{H}_{A^c}^{\otimes_s n}=(1 \otimes S_n) L^2(\RR^{n\nu},\cB(\RR^{n\nu}),1_{(A^c)^n}\lambda_{n\nu},\mathcal{F}(\mathcal{H}_A))  
\end{align*}
where $1 \otimes S_n$ acts on $L^2(\RR^{n\nu},\cB(\RR^{n\nu}),\lambda_{n\nu},\mathcal{F}(\mathcal{H}_A))$ like
\begin{align*}
(1\otimes S_nf)(k_1,\dots,k_n)=\frac{1}{n!}\sum_{\sigma\in \cS_n} f(k_{\sigma(1)},\dots,k_{\sigma(n)}).
\end{align*}
Now we define
\begin{align*}
H_{\mu,A} ^{(n)}(\xi,k_1,\dots,k_n)=H_\mu(\xi-k_1-\dots-k_n,A)+\omega(k_1)+\dots+\omega(k_n)
\end{align*}
which is strongly resolvent measurable in $(k_1,\dots,k_n)\in (A^c)^n$ since $\xi\mapsto H_\mu(\xi,A)$ is strong resolvent measurable by Lemma \ref{Lem:FundamentalTechnicalStuff}. In particular,
\begin{align*}
H_{n,\mu}(\xi,A)=\int_{( A^c)^n}^{\oplus}H^{(n)}_{\mu,A}(\xi,k_1,\dots,k_n)d\lambda_{\nu}^{\otimes n}(k_1,\dots,k_n)
\end{align*}
defines a selfadjoint operator on $L^2(\RR^{n\nu},\cB(\RR^{n\nu}),\lambda_{n\nu},\mathcal{F}(\mathcal{H}_A))$ and it is reduced by the projection $1\otimes S_n$. Combining the above observations one arrives at the following lemma. 
\begin{lemma}\label{Lem:Splitspace}
Let $A\in \cB(\RR^\nu)$ and assume $1_{A}v=v$ almost everywhere. Define $j_i:\cH_i\rightarrow \cH_A\oplus \cH_{A^c}$ for $i\in\{A,A^c\}$ by $j_A(f)=(f,0)$ and $j_{A^c}(f)=(0,f)$ and define $Q_i=V_A^*j_i$. There is a unitary map
\begin{equation*}
	U:\mathcal{F}(\mathcal{H})\rightarrow \mathcal{F}(\mathcal{H}_A)\oplus \bigoplus_{n=1}^{\infty} \mathcal{F}(\mathcal{H}_A)\otimes \mathcal{H}_{A^c}^{\otimes_s n}
\end{equation*}
such that 
\begin{align}\label{ee}
UH_\mu(\xi)U^*=H_\mu(\xi,A)\oplus \bigoplus_{n=1}^\infty H_{n,\mu}(\xi,A)\mid_{\mathcal{F}(\mathcal{H}_A)\otimes \mathcal{H}_{A^c}^{\otimes_s n}}:=G_A(\xi)
\end{align}
for all $\xi\in \RR^\nu$. In particular, $\Sigma_A(\xi)\geq \Sigma(\xi)$ for all $\xi\in \RR^\nu$. Furthermore,
\begin{align*}
U\mid_{\mathcal{F}(\mathcal{H}_A)}=\Gamma(Q_A).
\end{align*}
Let $g_1,\dots,g_n\in \cH_{A^c}$ and let $\cK\subset \cC\cS_A$ be a subspace. Define
\begin{align*}
D=&\{ Q_{A^c}g_1\otimes_s\dots\otimes_s Q_{A^c}g_n \} \\&\cup \bigcup_{b=1}^\infty \{ h_1\otimes_s\cdots \otimes_s h_b\otimes_s  Q_{A^c}g_1\otimes_s\dots\otimes_s Q_{A^c}g_n\mid h_i\in Q_A\cK \}.
\end{align*}
If $\psi\in \textup{Span}(\cJ(\cK))$ we have
\begin{align}\label{eq:SimpelTrans1}
U^*(\psi\otimes (g_1\otimes_s\dots \otimes_s g_n))&\in \textup{Span}(D).\\
\lVert (H_\mu(\xi-k)-H_\mu(\xi))\Gamma(Q_A)\psi\lVert&=\lVert (H_\mu(\xi-k,A)-H_{\mu}(\xi,A))\psi\lVert.\label{eq:SimpelTrans2}\\
\lVert (H_{\mu}(\xi)-\lambda)\Gamma(Q_A)\psi\lVert&=\lVert (H_\mu(\xi,A)-\lambda)\psi\lVert.\label{eq:SimpelTrans3}
\end{align}
for all $\lambda\in \CC$.
\end{lemma}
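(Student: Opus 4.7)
The plan is to construct $U$ as the composition of three canonical unitaries and then verify the intertwining identity on a core of exponential vectors. Set
\begin{equation*}
U = W_2 \circ W_1 \circ \Gamma(V_A),
\end{equation*}
where $\Gamma(V_A) : \mathcal{F}(\mathcal{H}) \to \mathcal{F}(\mathcal{H}_A \oplus \mathcal{H}_{A^c})$ is the second quantization of the unitary $V_A$, $W_1$ is the exponential factorization from Lemma \ref{Iso1}, and $W_2$ is the sector decomposition from Lemma \ref{Iso2}. Being a composition of unitaries, $U$ is unitary between the stated spaces.

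To verify $UH_\mu(\xi)U^* = G_A(\xi)$, I would work on the core $\mathrm{Span}(\cL(\cC\cS))$ granted by Lemma \ref{Lem:FundamentalTechnicalStuff}(4). For $f \in \cC\cS$, the splitting $f = P_A f + P_{A^c} f$ gives $\Gamma(V_A)\epsilon(f) = \epsilon(P_A f \oplus P_{A^c} f)$ and subsequently $W_1 \epsilon(P_A f \oplus P_{A^c} f) = \epsilon(P_A f) \otimes \epsilon(P_{A^c} f)$. Under this chain $d\Gamma(\omega)$ becomes $d\Gamma(\omega_A) \otimes 1 + 1 \otimes d\Gamma(\omega_{A^c})$, and using the hypothesis $1_A v = v$ the field operator $\varphi(v)$ becomes $\varphi(v_A) \otimes 1$. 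For the nonlinear term $K(\xi - d\Gamma(k))$, I would apply the functional calculus on each $n$-particle $A^c$-sector after $W_2$: fibered over $(k_1,\ldots,k_n) \in (A^c)^n$ the vector operator $d\Gamma(k)$ becomes $d\Gamma(k_A) + k_1 + \cdots + k_n$, so $K(\xi - d\Gamma(k))$ acts as $K(\xi - k_1 - \cdots - k_n - d\Gamma(k_A))$ on $\mathcal{F}(\mathcal{H}_A)$. Combining this with the fiber shift $\omega(k_1) + \cdots + \omega(k_n)$ coming from $d\Gamma(\omega_{A^c})$ produces exactly $H_A^{(n)}(\xi, k_1, \ldots, k_n)$, and a direct integral reassembly yields $G_A(\xi)$. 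The bound $\Sigma_A(\xi) \geq \Sigma(\xi)$ is immediate from the direct sum structure.

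The identity $U|_{\mathcal{F}(\mathcal{H}_A)} = \Gamma(Q_A)$ is checked on exponential vectors: for $f \in \mathcal{H}_A$ one has $V_A Q_A f = f \oplus 0$, so $\Gamma(V_A)\epsilon(Q_A f) = \epsilon(f \oplus 0)$, hence $W_1$ sends it to $\epsilon(f) \otimes \Omega$, which $W_2$ identifies with $\epsilon(f)$ in the first summand. For the concrete identities (\ref{eq:SimpelTrans1})--(\ref{eq:SimpelTrans3}), take $\psi = h_1 \otimes_s \cdots \otimes_s h_b$ with $h_i \in \cK$ and $g = g_1 \otimes_s \cdots \otimes_s g_n$. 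Applying $U^*$ and tracking the chain of isomorphisms, $W_2^*$ places $g$ into the $n$-th $A^c$-sector, $W_1^*$ turns the tensor product into a symmetric product of vectors of the form $(h_i \oplus 0)$ and $(0 \oplus g_j)$ up to combinatorial factors, and $\Gamma(V_A^*)$ sends these respectively to $Q_A h_i = h_i$ and $Q_{A^c} g_j$ inside $\mathcal{H}$. The result is a finite linear combination of elements of $D$, giving (\ref{eq:SimpelTrans1}). Equations (\ref{eq:SimpelTrans2}) and (\ref{eq:SimpelTrans3}) then follow by applying $U$ to both sides and using that on the first summand of $G_A(\xi)$ the operator reduces to $H_\mu(\xi, A)$, combined with the fact that $U$ is an isometry.

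The main obstacle is the rigorous handling of the nonlinear operator $K(\xi - d\Gamma(k))$ across the three isomorphisms, since $d\Gamma(k)$ is a vector of unbounded operators and $K$ is only $C^2$. I would handle this by verifying the identity pointwise on the finite-particle analytic vectors built from $\cC\cS$ furnished by the proof of Lemma \ref{Lem:FundamentalTechnicalStuff}(4), where $K(\xi - g^{(n)})$ reduces to ordinary pointwise multiplication and all symmetrizations can be computed explicitly, and then extending the operator identity to the full selfadjoint domain by the core property.
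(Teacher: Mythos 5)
Your construction $U = W_2 \circ W_1 \circ \Gamma(V_A)$, verification on the exponential-vector core $\mathrm{Span}(\cL(\cC\cS))$ furnished by Lemma \ref{Lem:FundamentalTechnicalStuff}(4), and subsequent derivation of (\ref{eq:SimpelTrans2})--(\ref{eq:SimpelTrans3}) from $U|_{\mathcal{F}(\mathcal{H}_A)} = \Gamma(Q_A)$ together with (\ref{ee}) is precisely the route taken in the paper, which compresses the intertwining calculation into ``a long but easy calculation using Lemma \ref{Lem:FundamentalCalculations}'' and delegates $U|_{\mathcal{F}(\mathcal{H}_A)} = \Gamma(Q_A)$ and (\ref{eq:SimpelTrans1}) to a cited reference. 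Your more explicit tracking of $d\Gamma(\omega)$, $\varphi(v)$, and the fiber-wise action of $K(\xi-d\Gamma(k))$ through the three isomorphisms is sound and fills in what the paper leaves implicit.
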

\begin{proof}
Define $U=U_2U_1$ where $U_1$ is defined in Lemma \ref{Iso1} and $U_2$ is defined in Lemma \ref{Iso2}. Let $f,h\in  \cC\cS$ and write for $C\in \{A,A^c \}$ $f_C=P_C(f),h_C=P_C(h)\in \cC\cS_C$.  Then
\begin{align*}
U\epsilon(f)=U_2U_1\epsilon(f_A,f_{A^c})=U_2\epsilon(f_A)\otimes \epsilon(f_{A^c})=\epsilon(f_A)\oplus \bigoplus_{n=1}^\infty \epsilon(f_A)\otimes \frac{1}{\sqrt{n!}}f_{A^c}^{\otimes n}
\end{align*}
which one may check is in $\cD(G_A(\xi))$. A long but easy calculation yields
\begin{align*}
\langle \epsilon(h) , U^*G_A(\xi)U \epsilon(f)\rangle=\langle U\epsilon(h) , G(\xi)U \epsilon(f)\rangle=\langle \epsilon(h) , H_\mu (\xi) \epsilon(f)\rangle
\end{align*}
As $\cE(\cC\cS)$ is total we find $H_\mu(\xi)=U^*G_A(\xi)U$ on $\cE(\cC\cS)$ which spans a core for $H_\mu(\xi)$. Hence $U^*G_A(\xi)U=H_\mu(\xi)$ as both operators are selfadjoint. This proves the first part of the theorem. The remaining statements except equations (\ref{eq:SimpelTrans2}) and (\ref{eq:SimpelTrans3}) can be found in \cite{Thomas1}. However equations (\ref{eq:SimpelTrans2}) and (\ref{eq:SimpelTrans3}) follows from $U\mid_{\mathcal{F}(\mathcal{H}_A)}=\Gamma(Q_A)$ and equation (\ref{ee}). \cqfd
\end{proof}

\begin{lemma}\label{Lem:HVZ1}
Let $k_1,\dots,k_\ell\in \RR^\nu$ be different. If there is $\varepsilon>0$ such that $(B_{\epsilon}(k_1)\cup\dots\cup B_{\epsilon}(k_\ell))\cap \{ v\neq 0 \}$ is a $\lambda_\nu$ null-set, then $\Sigma(\xi-k_1-\dots-k_\ell)+\omega^{(\ell)}(k_1,\dots,k_\ell)\in \sigma_{ess}(H_\mu(\xi))$.
\end{lemma}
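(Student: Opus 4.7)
The plan is to construct a Weyl sequence for $H_\mu(\xi)$ at $\tilde\lambda := \Sigma(\tilde\xi) + \tilde\omega$, where $\tilde\xi := \xi - k_1 - \cdots - k_\ell$ and $\tilde\omega := \omega(k_1) + \cdots + \omega(k_\ell)$. I will set $A := \bigl(B_\varepsilon(k_1) \cup \cdots \cup B_\varepsilon(k_\ell)\bigr)^c$, so that $1_A v = v$ $\lambda_\nu$-a.e.\ and $A^c$ is an open set containing every $k_i$ as an interior point. For each large $n$ pick a normalized $g_i^{(n)} \in \cC\cS$ supported in $B_{1/n}(k_i) \subset A^c$; for $n$ large the supports are pairwise disjoint, so the $g_i^{(n)}$ are orthonormal and $\langle v, g_i^{(n)} \rangle = 0$. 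By Lemma \ref{Lem:FundamentalTechnicalStuff}(4), $\mathrm{Span}(\cJ(\cC\cS))$ is a core for $H_\mu(\tilde\xi)$, so I can select $\psi_n \in \mathrm{Span}(\cJ(\cC\cS))$ with $\lVert \psi_n \lVert = 1$, living in finitely many particle-number sectors of maximal index $M_n$, and satisfying $(M_n + \ell + 1)^{\ell/2} \lVert (H_\mu(\tilde\xi) - \Sigma(\tilde\xi)) \psi_n \lVert \to 0$ by a diagonal argument balancing rate of approximation against particle-number support. The candidate Weyl sequence is
\[
\phi_n := \frac{a^\dagger(g_1^{(n)}) \cdots a^\dagger(g_\ell^{(n)}) \psi_n}{\lVert a^\dagger(g_1^{(n)}) \cdots a^\dagger(g_\ell^{(n)}) \psi_n \lVert}.
\]

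Three properties need to be checked. Iterating the CCR with the orthonormal $g_i^{(n)}$ gives $\lVert a^\dagger(g_1^{(n)}) \cdots a^\dagger(g_\ell^{(n)}) \psi_n \lVert^2 = \sum_{S \subseteq \{1,\ldots,\ell\}} \lVert \prod_{i \in S} a(g_i^{(n)}) \psi_n \lVert^2 \geq \lVert \psi_n \lVert^2 = 1$, so $\phi_n$ is well-defined with $\lVert \phi_n \lVert = 1$. For weak convergence, $g_i^{(n)} \rightharpoonup 0$ in $\cH$ (support shrinks to a point), hence $a(g_\ell^{(n)}) \cdots a(g_1^{(n)}) \chi \to 0$ strongly for any fixed finite-particle $\chi$; taking adjoints and invoking density gives $\phi_n \rightharpoonup 0$. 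The substantive estimate is the approximate intertwining
\[
H_\mu(\xi) a^\dagger(g_i^{(n)}) \psi = a^\dagger(g_i^{(n)})\bigl(H_\mu(\xi - k_i) + \omega(k_i)\bigr) \psi + E_i^{(n)}(\psi),
\]
whose three pieces I handle separately: $[\varphi(v), a^\dagger(g_i^{(n)})] = \langle v, g_i^{(n)} \rangle = 0$ since $g_i^{(n)}$ is supported where $v$ vanishes; $[d\Gamma(\omega), a^\dagger(g_i^{(n)})] - \omega(k_i) a^\dagger(g_i^{(n)}) = a^\dagger((\omega - \omega(k_i)) g_i^{(n)})$ has operator norm tending to $0$ by continuity of $\omega$ and shrinkage of $\mathrm{supp}(g_i^{(n)})$; and for the $K$-commutator I Taylor-expand $K$ around $\xi - k_i - d\Gamma(k)$ to obtain $a^\dagger(g_i^{(n)})\bigl(K(\xi - k_i - d\Gamma(k)) - K(\xi - d\Gamma(k))\bigr)$ plus a remainder of norm $O(1/n) \lVert \nabla K(\xi - k_i - d\Gamma(k)) \psi \lVert + O(1/n^2) \lVert \psi \lVert$, controlled via Lemma \ref{Lem:FundamentalTechnicalStuff}(1) and $\lVert D^2 K \lVert \leq C_K$. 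Iterating $\ell$ times yields
\[
(H_\mu(\xi) - \tilde\lambda) a^\dagger(g_1^{(n)}) \cdots a^\dagger(g_\ell^{(n)}) \psi_n = a^\dagger(g_1^{(n)}) \cdots a^\dagger(g_\ell^{(n)}) (H_\mu(\tilde\xi) - \Sigma(\tilde\xi)) \psi_n + R_n,
\]
with $\lVert R_n \lVert \to 0$. The main term has norm bounded by $(M_n + \ell + 1)^{\ell/2} \lVert (H_\mu(\tilde\xi) - \Sigma(\tilde\xi)) \psi_n \lVert$ via the standard $(N+1)^{\ell/2}$-relative bound on products of creation operators, and this tends to zero by the choice of $\psi_n$; dividing by the denominator $\geq 1$ preserves the limit, so $(H_\mu(\xi) - \tilde\lambda) \phi_n \to 0$, and Weyl's criterion gives $\tilde\lambda \in \sigma_{ess}(H_\mu(\xi))$.

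The hardest part is the $K$-commutator: because $K$ is a nonlinear function of $d\Gamma(k)$, the identity used in the intertwining is only exact in the $\delta$-function limit of $g_i^{(n)}$ at $k_i$. Justifying it with uniformly controlled remainder requires the Taylor expansion together with the Hessian and gradient bounds from Hypothesis 1 and Lemma \ref{Lem:FundamentalTechnicalStuff}(1); iterating $\ell$ times then forces one to carry the particle-number index $M_n$ through the construction and to tune the rate at which $\lVert (H_\mu(\tilde\xi) - \Sigma(\tilde\xi)) \psi_n \lVert$ converges to zero so that it dominates the prefactor $(M_n + \ell + 1)^{\ell/2}$ coming from the creation-operator estimates.
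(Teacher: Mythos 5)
Your overall strategy---build a Weyl sequence at $\Sigma(\tilde\xi)+\tilde\omega$ by hitting near-minimizers of $H_\mu(\tilde\xi)$ with $\ell$ creation operators concentrating at $k_1,\dots,k_\ell$---is the same mechanism the paper uses, but you implement it with raw CCR commutators, whereas the paper first passes to the split-off representation $U$ of Lemma~\ref{Lem:Splitspace}, realises the Hamiltonian on $\cF(\cH_A)\otimes\cH_{A^c}^{\otimes_s\ell}$ as a fibred operator with fibres $H_\mu(\xi-x_1-\cdots-x_\ell,A)+\omega(x_1)+\cdots+\omega(x_\ell)$, and then only needs the continuity estimate of Lemma~\ref{Lem:FundamentalTechnicalStuff}(3) applied to the fixed vector $\psi_p$ itself. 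That form is exact, with no accumulated commutator remainders, which is why it is the cleaner route.

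The gap in your version is the treatment of the iterated $K$-commutator. After commuting $K(\xi-d\Gamma(k))$ past $a^\dagger(g_1^{(n)})$, the Taylor remainder is $O(1/n)\lVert\nabla K(\xi-k_1-d\Gamma(k))\chi_n\lVert+O(1/n^2)\lVert\chi_n\lVert$ with $\chi_n := a^\dagger(g_2^{(n)})\cdots a^\dagger(g_\ell^{(n)})\psi_n$, and by Lemma~\ref{Lem:FundamentalTechnicalStuff}(1) that gradient norm is controlled by $\lVert K(\xi-k_1-d\Gamma(k))\chi_n\lVert+\lVert\chi_n\lVert$. This intermediate $K$-norm is \emph{not} controlled by the particle number $M_n$ alone: it also scales with the support radius of the factors of $\psi_n$. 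Your diagonal choice only tunes against the $(M_n+\ell+1)^{\ell/2}$ prefactor, so as written it does not force these remainders to vanish. (Also ``operator norm tending to $0$'' for $a^\dagger((\omega-\omega(k_i))g_i^{(n)})$ is imprecise---that operator is unbounded; what decays is its $N^{1/2}$-relative bound, which is yet another source of particle-number growth.) The fix is to decouple the two indices: first fix $\psi_m$ with $\lVert(H_\mu(\tilde\xi)-\Sigma(\tilde\xi))\psi_m\lVert<1/m$; observe that once $\psi_m$ is fixed every quantity in the $\ell$ iterated remainders (intermediate $K$-norms, $N^{1/2}$-weights, the $\omega$-variation over the shrinking supports) is bounded as the concentration parameter of the $g_i$ increases; and only then choose the concentration $n_m$ large enough to make all remainders $<1/m$. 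That two-step inductive choice is exactly what the paper's construction of $u_1(p),u_2(p),u_p$ accomplishes, and it is the missing step in your diagonalisation.
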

\begin{proof}
Pick $\varepsilon>0$ such that the balls $B_{\epsilon}(k_1),\dots, B_{\epsilon}(k_\ell)$ are pairwise disjoint and $(B_{\epsilon}(k_1)\cup\dots\cup B_{\epsilon}(k_\ell))\cap \{ v\neq 0 \}$ is a $\lambda_\nu$ null-set. Define $\varepsilon_n=\frac{\varepsilon}{n}$, $B^{(i)}_n=B_{\varepsilon_n}(k_i)$, $B_n=B^{(1)}_n\cup\dots\cup B^{(\ell)}_n$, $k_0=k_1+\dots+k_\ell$, $A_n=B^{(1)}_n\times\dots\times B^{(\ell)}_n$ and let
\begin{align*}
g^{(i)}_n&=\lambda_\nu(B^{(i)}_n\backslash B^{(i)}_{n+1})^{-1/2}1_{B^{(i)}_n\backslash B^{(i)}_{n+1}}, \\ \cA_n&=\{ f\in \cC\cS \mid f1_{B_{n}^c}=f  \,\, \text{almost everywhere} \},\\\cA_\infty&=\bigcup_{n=1}^\infty \cA_{n}.
\end{align*}
Note that $\cC\cS\subset \overline{\cA_\infty}$ so $\cA_{\infty}$ is a dense subspace of $\cH$. In particular, $\cJ(\cA_\infty)$ spans a core for $H_\mu(\xi-k_0)$ by Lemma \ref{Lem:FundamentalTechnicalStuff}. For each $p\in \NN$ we may thus pick a normed vector $\psi_p\in \cJ(\cA_\infty)$ such that $\lVert (H_\mu(\xi-k_0)-\Sigma(\xi-k_0))\psi_p \lVert\leq 1/p$. By Lemma \ref{Lem:FundamentalTechnicalStuff} there is $u_1(p)$ such that
\begin{align*}
\sup_{x=(x_1,\dots,x_\ell)\in A_n}\lVert (H_\mu(\xi-x_1-\dots-x_\ell)-H_\mu(\xi-k_0))\psi_p \lVert\leq \frac{1}{p}.
\end{align*}
for all $n\geq u_1(p)$. Note now that $\psi_p$ may be written as
\begin{equation*}
\psi_p=a(p) \Omega+\sum_{i=1}^{b(p)}\sum_{j=1}^{c(p)}\alpha_{i,j}(p)f^j_1(p)\otimes_s\cdots\otimes_s f^j_i(p)
\end{equation*}
for some $a(p),b(p),c(p),\alpha_{i,j}(p)$ constants and $f^j_i(p)\in \cA_\infty$. Note that each $f^j_i(p)$ is in fact contained in some $\cA_{n(i,j,p)}$ by definition so defining $u_2(p)=\max_{i,j}\{ n(i,j,p) \}$ we see that $\psi_p \in \textup{Span}(\cJ(\cA_n))$ for any $n\geq u_2(p)$. Define $u_p$ inductively by $u_1=\max\{ u_1(1),u_2(1) \}$ and $u_{p+1}=\max\{ u_1(p+1),u_2(p+1),u_{p} \}+1$. 

To summarise we have found normed vectors $\psi_p\in \cD(H_\mu(\xi))$ and a strictly increasing sequence of numbers $\{u_p\}_{p=1}^\infty\subset \NN$ such that
\begin{enumerate}
\item[\textup{(1)}] $\lVert (H_\mu(\xi-k_0)-\Sigma(\xi-k_0))\psi_p \lVert\leq 1/p$.

\item[\textup{(2)}] $\sup_{k\in B_{\delta_p}(k_1+\dots+k_\ell)}\lVert (H_\mu(\xi-k)-H_\mu(\xi-k_0))\psi_p \lVert\leq \frac{1}{p}$.

\item[\textup{(3)}] $\psi_p \in \textup{Span}(\cJ(\cA_{u_p}))$.
\end{enumerate}
For each $n\in \mathbb{N}$ and $A\in \{B_{n}^c,B_{n}  \}$ define $V_n=V_{B_n^c}$ and $j_{A}:\cH_{i}\rightarrow \cH_{B_n^c}\oplus \cH_{B_n}$ by $j_{B_n^c}(f)=(f,0)$ and $j_{B_n}f=(0,f)$. Furthermore, we set $Q_{A}=V_n^*j_{A}$ and let $U_n$ be the unitary map from Lemma \ref{Lem:HVZ1} corresponding to $B_n^c$. Fix $f\in \cH$. Then the following equalities holds almost everywhere:
\begin{align}
Q_{B_n} P_{B_n}(f)&=V_n^*(0,P_{B_n}(f))=1_{B_n}P_{B_n}(f)=1_{B_n}f  \label{eq:!!!} \\
Q_{B_n^c} \label{eq:!!!!} P_{B_n^c}(f)&=V_n^*(P_{B_n^c}(f),0)=1_{B_n^c}P_{B_n^c}(f)=1_{B_n^c}f 
\end{align}
 For $f\in \cA_n$ we have $1_{B_n^c}f=f$ and so we obtain the two equalities
\begin{align}\label{eq:cancel1}
\Gamma(Q_{B_n^c})\Gamma(P_{B_n^c})\psi&=\Gamma(1_{B_n^c})\psi=\psi \,\,\, \forall \, \psi\in \text{Span}(\cJ(\cA_n))\\ \label{eq:cancel2}
Q_{B_n}P_{B_n}g^{(i)}_n&=1_{B_n}g^{(i)}_n=g_n^{(i)}
\end{align}
for all $i\in \{1,\dots,\ell\}$. We now define the Weyl sequence as follows:
\begin{align*}
\phi_p=\sqrt{\ell!}U_{u_p}^*(\Gamma(P_{B_{u_p}^c})\psi_p\otimes P_{B_{u_p}}g^{(1)}_{u_p}\otimes_s\dots\otimes_s P_{B_{u_p}}g^{(\ell)}_{u_p})
\end{align*}
We will now prove
	\begin{enumerate}
	\item[\textup{(1)}] $\phi_p\in \cD(H_{\mu}(\xi))$.
	
	\item[\textup{(2)}] $\phi_p$ is orthogonal to $\phi_r$ for $p\neq r$.
	
	\item[\textup{(3)}] $\lVert\phi_p\lVert=1$ for all $p\in \mathbb{N}$.
	
	\item[\textup{(4)}]  $\lVert ( H_\mu(\xi)-\Sigma(\xi-k_0)-\omega^{(\ell)}(k_1,\dots,k_\ell) )\phi_p \lVert $ converges to 0.
	\end{enumerate}
(1): Define for all $p\in \mathbb{N}$ the set 
\begin{align*}
C_p=&\{g_{u_p}^{(1)}\otimes_s\dots\otimes_s g_{u_p}^{(\ell)} \}\\& \cup \bigcup_{q=1}^\infty \{ h_1\otimes_s\cdots\otimes_s h_q\otimes_s g_{u_p}^{(1)}\otimes_s\dots\otimes_s g_{u_p}^{(\ell)}\mid h_i\in \cA_{u_p} \}\subset \cJ(\cC\cS)
\end{align*}
and let $\cK_p:=P_{B_{u_p}^c}\cA_{u_p}\subset\cC\cS_{B_{u_p}^c}$ since $P_{B_{u_p}^c}$ maps $\cC\cS$ into $\cC\cS_{B_{u_p}^c}$. Using equation (\ref{eq:!!!}) we find $Q_{B_{u_p}^c}\cK_p=1_{B_{u_p}}\cA_{u_p}=\cA_{u_p}$ so Lemma \ref{Lem:Splitspace} and equation (\ref{eq:cancel2}) implies 
\begin{equation*}
\psi_p\in \text{Span}(C_p)\subset \text{Span}(\cJ(\cC\cS)) \subset \cD(H_\mu(\xi)).
\end{equation*}
(2): Let $r<p$. Then $\phi_r\in \textup{Span}(C_r)$ and $\phi_p\in \textup{Span}(C_p)$, so we just need to see that every element in $C_p$ and $C_r$ are orthogonal. Let $\psi_1\in C_p$ and $\psi_2\in C_r$. Note every tensor in $C_p$ has a factor $g_{u_p}^{(1)}$ and that this factor is orthogonal to $g_{u_r}^{(i)}$ for all $i$ by construction. Furthermore for any $h\in \cA_{u_r}$ we see that $h$ is supported in $B_{u_r}^c\subset B_{u_p}^c$ and hence $g_{u_p}^{(1)}h=0$, so $g_{u_p}^{(1)}$ is orthogonal to  any element in $\cA_{u_r}$. This implies $\psi_1$ contains a factor orthogonal to all factors in $\psi_2$ and thus $\psi_1$ is orthogonal to $\psi_2$.

(3): $Q_{B_{u_p}^c}$ and $Q_{B_{n_p}}$ are isometric and which implies $\Gamma(Q_{B_{u_p}^c})$ and $\Gamma(Q_{B_{u_p}})$ are isometric. Using equations (\ref{eq:cancel1}) and (\ref{eq:cancel2}) we calculate
\begin{align*}
\lVert \phi_p \lVert&=\sqrt{\ell!}\lVert \Gamma(P_{B_{n_p}^c})\psi_p \lVert \lVert P_{B_{u_p}}g^{(1)}_{u_p}\otimes_s\dots\otimes_s P_{B_{u_p}}g^{(\ell)}_{u_p} \lVert\\&=\sqrt{\ell!}\lVert \Gamma(Q_{B_{u_p}^c})\Gamma(P_{B_{u_p}^c})\psi_p \lVert \lVert \Gamma(Q_{B_{u_p}^c})P_{B_{u_p}}g^{(1)}_{u_p}\otimes_s\dots\otimes_s P_{B_{u_p}}g^{(\ell)}_{u_p} \lVert\\&=\sqrt{\ell!}\lVert \psi_p\lVert \lVert g^{(1)}_{u_p}\otimes_s \dots\otimes_s g^{(\ell)}_{u_p} \lVert=1
\end{align*}
where we used $g^{(i)}_{u_p}$ and $g^{(j)}_{u_p}$ are normalised and orthogonal if $i\neq j$ and
\begin{align*}
	\lVert g^{(1)}_{u_p}\otimes_s \dots\otimes_s g^{(\ell)}_{u_p} \lVert^2&=\frac{1}{\ell!}\sum_{\sigma\in \cS_\ell} \langle g^{(1)}_{u_p}\otimes \dots\otimes g^{(\ell)}_{u_p},g^{(\sigma(1))}_{u_p}\otimes \dots\otimes g^{(\sigma(\ell))}_{u_p}\rangle=\frac{1}{\ell!}.
\end{align*}

(4): Define $g_{u_p}=g^{(1)}_{u_p}\otimes_s \dots\otimes_s g^{(\ell)}_{u_p}$. Using Lemma \ref{Lem:Splitspace} we see that $\lVert  ( H_\mu(\xi)-\Sigma(\xi-k_0)-\omega^{(\ell)}(k_1,\dots,k_\ell) )\phi_p \lVert$ is given by 
\begin{align*}
\sqrt{\ell!} \biggl( &\int_{B_{u_p}^{\ell}}\lVert (H_{B_{u_p}^c}(\xi-x_1-\dots-x_\ell)+\omega^{(\ell)}(x_1,\dots.,x_\ell)\\&-\Sigma(\xi-k_0) -\omega^{(\ell)}(k_1,\dots,k_\ell) )\Gamma(P_{B_{u_p}^c})\psi_p \lVert^2 \lvert g_{u_p}(x)\lvert^2d\lambda_\nu(x)\biggl )^{1/2}:=\sqrt{\ell!} \gamma_p
\end{align*}
Using the triangle inequality, $\lVert \Gamma(P_{B_{n_p}^c})\psi_p \lVert=1$, $ \Gamma(Q_{n_p,B_{u_p}^c})\Gamma(P_{B_{n_p}^c})\psi_p=\psi_p$ and Lemma \ref{Lem:Splitspace} we find $\gamma_p \leq C_1+C_2+C_3$ where
\begin{align*}
C_1&=\left (\int_{B_{u_p}^{\ell}}  \lVert (H_\mu(\xi-x_1-\dots-x_n)-H_\mu(\xi-k_0))\psi_p \lVert^2   \lvert g_{u_p}(x)\lvert^2d\lambda_\nu(x) \right )^{1/2}
\\ C_2 &=\left (\int_{B_{u_p}^{\ell}}  \lvert (\omega^{(\ell)}(x_1,\dots,x_\ell)-\omega^{(\ell)}(k_1,\dots,k_\ell)) \lvert^2   \lvert g_{u_p}(x)\lvert^2d\lambda_\nu(x) \right )^{1/2}\\
C_3&=\lVert (H_\mu(\xi-k_0)-\Sigma(\xi-k_0))\psi_p \lVert \left (\int_{B_{u_p}^{\ell}}    \lvert g_{u_p}(x)\lvert^2d\lambda_\nu(x) \right )^{1/2}
\end{align*}
Let $f:(\RR^\nu)^n\rightarrow \RR_+$ be non negative and symmetric. Using that the $g_{u_p}^{(i)}$ have disjoint support one finds
\begin{align*}
\lvert g_{u_p}(x_1,\dots,x_\ell)\lvert^2=\frac{1}{\ell!^2}\sum_{\sigma, \pi \in \cS_n} \prod_{i=1}^\nu \overline{ g_{u_p}^{(\pi(i))}(x_i)}g_{u_p}^{(\sigma(i))}(x_i)=\frac{1}{\ell!^2}\sum_{\sigma \in \cS_n} \prod_{i=1}^\nu \lvert g_{u_p}^{(\sigma(i))}(x_i)\lvert^2
\end{align*}
Thus using permutation invariance of $f$ we find
\begin{align*}
\int_{B_{u_p}^{\ell}}  f(x) \lvert g_{u_p}(x)\lvert^2d\lambda_\nu(x)=\frac{1}{\ell!} \int_{A_{u_p}}  f(x)  \prod_{i=1}^\nu \lvert g_{u_p}^{(i)}(x_i)\lvert^2 d\lambda_\nu(x)
\end{align*}
Thus $\sqrt{\ell!}C_3=\lVert (H_\mu(\xi-k_0)-\Sigma(\xi-k_0))\psi_p \lVert \leq p^{-1}$. Furthermore
\begin{align*}
&\sqrt{\ell!}C_1\leq \sup_{(x_1,\dots,x_n)\in A_{u_p}} \lVert (H_\mu(\xi-x_1-\dots-x_\ell)-H_\mu(\xi-k_0))\psi_p \lVert\leq p^{-1}\\
&\sqrt{\ell!}C_2\leq \sup_{(x_1,\dots,x_n)\in A_{u_p}} \lvert \omega^{(\ell)}(x_1,\dots,x_\ell)-\omega^{(\ell)}(k_1,\dots,k_\ell) \lvert
\end{align*}
By continuity of $\omega$ we now see $\sqrt{\ell!}\gamma_p$ goes to 0 for $p$ tending to $\infty$.\cqfd
\end{proof}
\begin{lemma}\label{Lem:HVZ}
Let $k_1,\dots,k_\ell\in \RR^\nu$. Then $\Sigma(\xi-k_1-\dots-k_\ell)+\omega^{(\ell)}(k_1,\dots,k_\ell)\in \sigma_{ess}(H_\mu(\xi))$.
\end{lemma}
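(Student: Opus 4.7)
My plan is to deduce Lemma \ref{Lem:HVZ} from Lemma \ref{Lem:HVZ1} by two successive approximation arguments that remove the distinctness requirement on the $k_i$ and the auxiliary vanishing hypothesis on $v$.

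First, if $k_1,\dots,k_\ell$ are not pairwise distinct, I would approximate by distinct tuples $(k_1^{(j)},\dots,k_\ell^{(j)})\to (k_1,\dots,k_\ell)$. By continuity of $\Sigma$ (Lemma \ref{Lem:FundamentalTechnicalStuff}(3)) and of $\omega$ (Hypothesis 1(2)), the values $\Sigma(\xi-\sum_i k_i^{(j)})+\sum_i \omega(k_i^{(j)})$ converge to $\Sigma(\xi-\sum_i k_i)+\sum_i \omega(k_i)$. Since $\sigma_{ess}(H_\mu(\xi))$ is closed, the problem reduces to the case of distinct $k_i$.

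For distinct $k_1,\dots,k_\ell$, pick $j_0$ so large that the balls $B_{1/j}(k_i)$ are pairwise disjoint for $j\ge j_0$, and set $Z_j=\bigcup_{i=1}^\ell B_{1/j}(k_i)$ and $v_j=v\cdot 1_{Z_j^c}$. Since $\omega^{-1/2}v\in L^2$ and $\lambda_\nu(Z_j)\to 0$, dominated convergence yields $\lVert(\omega^{-1/2}+1)(v_j-v)\lVert\to 0$. Let $H_\mu^j(\xi)$ be the Hamiltonian obtained by replacing $v$ with $v_j$, with spectral infimum $\Sigma_j$. By construction $v_j$ satisfies the vanishing hypothesis of Lemma \ref{Lem:HVZ1} with $\varepsilon=1/j$, so
\begin{align*}
\Sigma_j(\xi-k_1-\dots-k_\ell)+\omega_n(k_1,\dots,k_\ell)\in \sigma_{ess}(H_\mu^j(\xi)).
\end{align*}
The estimate in Lemma \ref{Lem:FundamentalIneq} applied to $v_j-v$ shows $\varphi(v_j-v)$ is $(d\Gamma(\omega)+1)^{1/2}$-bounded with norm tending to zero, which combined with $\cD(H_\mu(\xi))\subset \cD(d\Gamma(\omega)^{1/2})$ gives $H_\mu^j(\xi)\to H_\mu(\xi)$ in norm resolvent sense. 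The uniform lower bound from Lemma \ref{Lem:BasicProperties} (which is independent of $j$ because $\lVert\omega^{-1/2}v_j\lVert\le \lVert\omega^{-1/2}v\lVert$) then yields $\Sigma_j(\eta)\to \Sigma(\eta)$ for every $\eta\in\RR^\nu$, via the min--max characterisation of the bottom of the spectrum.

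Finally, norm resolvent convergence of uniformly lower-bounded self-adjoint operators preserves essential spectrum under limits: transferring via $T\mapsto (T+i)^{-1}$ turns norm resolvent convergence into norm convergence of bounded operators, and $\sigma_{ess}$ is stable under norm-small perturbations. Hence from $\lambda_j\in\sigma_{ess}(H_\mu^j(\xi))$ with $\lambda_j\to\lambda$ we conclude $\lambda\in\sigma_{ess}(H_\mu(\xi))$. Applied with $\lambda_j:=\Sigma_j(\xi-k_1-\dots-k_\ell)+\omega_n(k_1,\dots,k_\ell)$ this gives the lemma. The main technical obstacle is this last step: while preservation of essential spectrum under norm resolvent convergence is classical, it needs to be applied with care here so that the uniform lower bound and the infinitesimal relative bound are exploited to ensure the spectral projections over shrinking intervals remain of infinite rank in the limit.
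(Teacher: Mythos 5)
Your proposal is correct and follows essentially the same route as the paper: cut off $v$ on small balls around the distinct $k_i$ so Lemma \ref{Lem:HVZ1} applies, pass to the limit using norm resolvent convergence and the uniform ($j$-independent) lower bound of Lemma \ref{Lem:BasicProperties}, and then remove the distinctness hypothesis by density of distinct tuples, continuity of $\Sigma$ and $\omega_\ell$, and closedness of $\sigma_{ess}$. The only caveat is that the phrase ``$\sigma_{ess}$ is stable under norm-small perturbations'' should be read in the precise sense you then give (limits of sequences $\lambda_j\in\sigma_{ess}$ land in $\sigma_{ess}$), not as set-equality of essential spectra under small perturbations.
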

\begin{proof}
Assume first $k_1,\dots,k_\ell\in \RR^\nu$ are different elements and define $A_n=B_{1/n}(k_1)\cup\dots\cup B_{1/n}(k_\ell)$. Let $v_n=1_{A_n^c}v$ and note that $v_n\in \cD(\omega^{-1/2})$ and
\begin{align*}
\lim_{n\rightarrow \infty }\lVert (v_n-v)(\omega^{-1/2}+1)\lVert=0
\end{align*}
by dominated convergence. Define
\begin{align*}
H^{(n)}(\xi)&=K(\xi-d\Gamma(m))+d\Gamma(\omega)+\mu \varphi(v_n)\geq -\mu^2\lVert \omega^{-1/2}v_n \lVert^2\geq  -\mu^2\lVert \omega^{-1/2}v \lVert^2\\ \Sigma_n(\xi)&=\inf(\sigma(H^{(n)}(\xi)))
\end{align*}
Using Lemma \ref{Lem:FundamentalIneq} and standard resolvent formulas we find
\begin{align*}
\lVert (H_\mu(\xi)+i)^{-1}&-(H^{(n)}(\xi)+i)^{-1} \lVert\leq \lvert \mu\lvert  \lVert \varphi(v-v_n)(H_\mu(\xi)+i)^{-1})^{-1}\lVert \\& \leq \lvert \mu\lvert  \lVert (v_n-v)(\omega^{-1/2}+1)\lVert \lVert (d\Gamma(\omega)+1)^{1/2}(H_\mu(\xi)+i)^{-1}\lVert
\end{align*}
so $H^{(n)}(\xi)$ converges to $H_\mu(\xi)$ in norm resolvent sense for all $\xi\in \RR^\nu$. The uniform lower bound of $\Sigma_n(\xi)$ and norm resolvent convergence now implies $\Sigma_n(\xi)$ converges to $\Sigma(\xi)$ for all $\xi$ (see \cite[Lemma 5.5]{SB2}).

By Lemma \ref{Lem:HVZ1} we have $\Sigma_n(\xi-k_1-\dots-k_\ell)+\omega^{(\ell)}(k_1,\dots,k_\ell)\in \sigma_{ess}(H^{(n)}(\xi))$. Now $\Sigma_n(\xi-k_1-\dots-k_\ell)+\omega^{(\ell)}(k_1,\dots,k_\ell)$ converges to $\Sigma(\xi-k_1-\dots-k_\ell)+\omega_n(k_1,\dots,k_\ell)$ and $H^{(n)}(\xi)$ converges to $H_\mu(\xi)$ in norm resolvent sense so we are done in the case where $k_1,\dots,k_\ell$ are different. The conclusion now follows since $\Sigma$ and $\omega^{(\ell)}$ are continuous, $\{ (k_1,\dots,k_\ell)\mid k_i\neq k_j\,\,\, \forall i,j  \}$ is dense and $\sigma_{ess}(H_\mu(\xi))$ is closed. \cqfd
\end{proof}

\section{Proof of pull though formula}
This appendix is devoted to proving the pull through formula. In case $K(k)=\lvert k\lvert^2$ one could compute everything directly using tools as in \cite{HerbstHasler}. However the other possible choices of $K$ require a more sophisticated approach ao we use the formalised developed in \cite{Thomas1}. We give a brief introduction here but the reader should consult \cite{Thomas1} for the proofs. We start by defining
\begin{equation*}
\cF_{+}(\cH)=  \bigtimes_{n=0}^{\infty}\cH^{\otimes_s n}
\end{equation*}
with coordinate projections $P_n$ and $\cH=L^2(\RR^\nu,\cB(\RR^\nu),\lambda_\nu)$. For $(\psi^{(n)}),(\phi^{(n)})\in \cF_+(\cH)$ we define
\begin{equation*}
d((\psi^{(n)}),(\phi^{(n)}))=\sum_{n=0}^{\infty} \frac{1}{2^n}\frac{\lVert \psi^{(n)}-\phi^{(n)} \lVert  }{1+\lVert \psi^{(n)}-\phi^{(n)} \lVert }
\end{equation*} 
where $\lVert \cdot \lVert$ is the Fock space norm. This makes sense since $P_n(\cF_{+}(\cH))\subset \cF(\cH)$. We now have
\begin{lemma}\label{Lem:BasicTopologyExtSpace}
The map $d$ defines a metric on $\cF_{+}(\cH)$ and turns this space into a complete separable metric space and a topological vector space. The topology and Borel $\sigma$-algebra is generated by the projections $P_n$. 
\end{lemma}
Fix now $v\in \cH$. We now define the following maps on $\cF_{+}(\cH)$
\begin{align*}
a_+(v) ( \psi^{(n)} ) &= ( a_{n}(v)\psi^{(n+1)} )\\
a^{\dagger}_+(v) ( \psi^{(n)} ) &= (0, a^\dagger_0(v)\psi^{(0)},a_1^\dagger(v)\psi^{(1)},\dots )\\
\varphi_+(v)  &= a_+(v)+a_+^\dagger(v)
\end{align*}
Where $a_n(v)$ is annihilation from $\cH^{\otimes_s (n+1)}$ to $\cH^{\otimes_s n}$ and $a^\dagger_n(f)$ is creation from $\cH^{\otimes_s n}$ to $\cH^{\otimes (n+1)}$. 
\begin{lemma}\label{Lem:AnihilCreaFieldOnF_+}
	The maps $a_+(v)$,$a^{\dagger}_+(v)$ and $\varphi_+(v)$ are all continuous. For $B\in \{ a,a^{\dagger},\varphi \}$ we have
	\begin{equation}\label{eq:extphi}
	B_+(v)\psi=B(v)\psi \,\,\, \text{if} \,\,\, \psi\in \cD(B(v)).
	\end{equation}
\end{lemma}
Let $\omega:\RR^\nu\rightarrow \RR^p$ be measurable and write it in terms of it coordinate functions as $\omega=(\omega_1,\dots,\omega_p)$. We then define
\begin{align*}
d\Gamma(\omega)&=(d\Gamma(\omega_1),\dots,d\Gamma(\omega_p))\\
d\Gamma^{(n)}(\omega)&=(d\Gamma^{(n)}(\omega_1),\dots,d\Gamma^{(n)}(\omega_p))
\end{align*}
Let $f:\RR^p\rightarrow \CC$ be a map and define
\begin{align*}
f(d\Gamma_+(\omega))&=\bigtimes_{n=0}^{\infty}f(d\Gamma^{(n)}(\omega)) \,\,\,\, \cD(f(d\Gamma_+(\omega)))=\bigtimes_{n=0}^{\infty}\cD(f(d\Gamma^{(n)}(\omega))) 
\end{align*}
We now have
\begin{lemma}\label{Lem:2ndQuantisedF_+}
	The following identity holds for all $\psi\in \cD(f(d\Gamma(\omega)))$
	\begin{equation*}
	f(d\Gamma_+(\omega))\psi=f(d\Gamma(\omega))\psi, \,\,\,\,\, \psi\in \cD(f(d\Gamma(\omega)))
	\end{equation*}
\end{lemma}
\noindent We will need the following spaces: For each $a\in \RR$ we define
\begin{equation*}
\lVert \cdot \lVert_{a,+}=\lim_{n\rightarrow \infty}\left (\sum_{k=0}^{n} (k+1)^{2a} \lVert P_k(\cdot) \lVert^2   \right )^{\frac{1}{2}}.
\end{equation*}
which is measurable from $\cF_+(\cH)$ into $[0,\infty]$. Let
\begin{align*}
\cF_{a,+}(\cH)=\{ \psi\in \cF_{+}(\cH)\mid \lVert \psi \lVert_{a,+}<\infty  \}.
\end{align*}

\noindent We will now consider a class of linear functionals on $\cF_+(\cH)$. For each $n \in \mathbb{N}$ we let $Q_n:\cF_{+}(\cH)\rightarrow \cN$ denote the linear projection which preserves the first $n$ entries of $(\psi^{(n)})$ and projects the rest of them to 0. For $\psi\in \cN$ there is $K\in \mathbb{N}$ such that for $n\geq K$ we have $Q_n\psi=\psi$. For $\phi\in \cF_+(\cH)$ we may thus define the pairing
\begin{equation}\label{eq:extinner}
\langle \psi,\phi \rangle_+:=\langle \psi,Q_n\phi \rangle=\sum_{i=0}^{K} \langle \psi^{(i)}, \phi^{(i)} \rangle,
\end{equation}
where $n\geq K$.
\begin{lemma}\label{Lem:SeperatingForm}
	The map $Q_n$ above is linear and continuous into $\cF(\cH)$. The paring $\langle \cdot, \cdot \rangle_+$ is sesquilinear, and continuous in the second entry. If $\phi\in \cF_{a,+}(\cH)$ then $\psi\mapsto \langle \psi,\phi \rangle_+$ is continous with respect to $\lVert \cdot \lVert_{-a,+}$. Furthermore, the collection of maps of the form $\langle \psi, \cdot \rangle_+$ will separate points of $\cF_+(\cH)$.
\end{lemma}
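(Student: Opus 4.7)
The plan is to handle each of the four assertions in turn; all of them reduce to unpacking the definitions, with the only non-trivial ingredient being a weighted Cauchy--Schwarz estimate.

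First, for $Q_n$: linearity is immediate. For continuity, I would note that $Q_n(\cF_+(\cH))$ sits inside the closed subspace $\bigoplus_{k=0}^n \cH^{\otimes_s k} \subset \cF(\cH)$, on which the Fock norm is $\sum_{k=0}^n \lVert P_k\cdot \lVert^2$. By Lemma \ref{Lem:BasicTopologyExtSpace} the topology on $\cF_+(\cH)$ is generated by the $P_k$, so $d(\psi_m,\psi)\to 0$ forces $\lVert P_k\psi_m-P_k\psi\lVert\to 0$ for each $k$; since $Q_n$ only sees finitely many coordinates, the resulting map into $\cF(\cH)$ is norm-continuous.

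Second, the sesquilinearity of $\langle\cdot,\cdot\rangle_+$ is immediate from the finite-sum definition (\ref{eq:extinner}). For continuity in the second entry, fix $\psi\in\cN$ with $Q_K\psi=\psi$; then $\phi\mapsto\langle\psi,\phi\rangle_+=\sum_{i=0}^K\langle\psi^{(i)},\phi^{(i)}\rangle$ is a finite linear combination of the continuous coordinate functionals $\phi\mapsto P_i\phi$ composed with inner products, hence continuous.

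Third, for the $\lVert\cdot\lVert_{-a,+}$ bound, I would take $\psi\in\cN$ with $Q_K\psi=\psi$ and $\phi\in\cF_{a,+}(\cH)$ and split the weights to apply Cauchy--Schwarz:
\begin{align*}
|\langle\psi,\phi\rangle_+|
&\leq \sum_{i=0}^{K}(i+1)^{-a}\lVert\psi^{(i)}\lVert\cdot(i+1)^{a}\lVert\phi^{(i)}\lVert \\
&\leq \lVert\psi\lVert_{-a,+}\lVert\phi\lVert_{a,+}.
\end{align*}
Thus $\psi\mapsto\langle\psi,\phi\rangle_+$ is a bounded functional on $\cN\subset\cF_{-a,+}(\cH)$ with respect to $\lVert\cdot\lVert_{-a,+}$, and extends by density to the whole space.

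Finally, for separation: given $\phi\in\cF_+(\cH)$ with $\phi\neq 0$, there is $k$ with $\phi^{(k)}\neq 0$. Let $\psi\in\cN$ have $k$-th coordinate $\phi^{(k)}$ and all other coordinates zero; then $\langle\psi,\phi\rangle_+=\lVert\phi^{(k)}\lVert^2>0$. No step is genuinely difficult; the only point requiring care is invoking Lemma \ref{Lem:BasicTopologyExtSpace} to convert metric convergence into coordinatewise norm convergence in the proof that $Q_n$ is continuous.
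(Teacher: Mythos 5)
The paper does not actually supply a proof of this lemma: the entire framework of Appendix~B is imported from \cite{Thomas1}, and the text explicitly refers the reader there for the proofs of the cited lemmas, including Lemma~\ref{Lem:SeperatingForm}. Your argument is correct and is the natural one. Each of the four claims reduces to the definitions in exactly the way you describe. The continuity of $Q_n$ follows because, by Lemma~\ref{Lem:BasicTopologyExtSpace}, metric convergence in $\cF_+(\cH)$ is equivalent to coordinatewise convergence in each $\cH^{\otimes_s k}$, and $Q_n$ depends on only finitely many coordinates; sesquilinearity and continuity in the second entry of the pairing are immediate from the finite-sum definition~(\ref{eq:extinner}); the weighted Cauchy--Schwarz estimate
\begin{align*}
\lvert \langle \psi,\phi\rangle_+ \rvert \leq \sum_{i=0}^{K}(i+1)^{-a}\lVert\psi^{(i)}\lVert\,(i+1)^{a}\lVert\phi^{(i)}\lVert \leq \lVert\psi\lVert_{-a,+}\,\lVert\phi\lVert_{a,+}
\end{align*}
is exactly what is needed (you apply Cauchy--Schwarz fibrewise and then on the discrete index, and for $\psi\in\cN$ the truncated sum equals $\lVert\psi\lVert_{-a,+}^2$); and the separation argument via $\psi$ supported in a single coordinate equal to $\phi^{(k)}$ is the obvious one and works because such $\psi$ lies in $\cN$. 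One small caveat: your concluding phrase ``extends by density to the whole space'' in the third bullet is not asserted by the lemma, which only claims $\lVert\cdot\lVert_{-a,+}$-continuity of the map $\psi\mapsto\langle\psi,\phi\rangle_+$ on its domain $\cN$; the estimate you prove already gives that, and whether or how the functional extends is not part of the statement. This is a harmless extra remark and does not affect correctness.
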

\begin{corollary}\label{Cor:seperatingDense}
	Let $\phi\in \cF_{a,+}(\cH)$ for some $a\leq 0$, $\cD\subset \cN$ be dense in $\cF(\cH)$ and assume $\langle \psi, \phi \rangle_+=0$ for all $\psi\in \cD$. Then $\phi=0$.
\end{corollary}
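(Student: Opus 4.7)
The plan is a density/separation argument. To show $\phi=0$ it suffices to show $\phi^{(n)}=0$ for every $n\in \NN_0$; for that I test against the single-slot vectors $\tilde h=(0,\ldots,0,h,0,\ldots)\in \cN$ with $h\in \cH^{\otimes_s n}$ arbitrary, for which the pairing collapses to $\langle \tilde h,\phi\rangle_+=\langle h,\phi^{(n)}\rangle$. The goal is therefore to show $\langle \tilde h,\phi\rangle_+=0$ for all such $\tilde h$ and all $n$.

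Given such a $\tilde h$, I use density of $\cD\subset \cN$ in $\cF(\cH)$ to choose $\psi_k\in \cD$ with $\psi_k\to \tilde h$ in the Fock norm. The hypothesis gives $\langle \psi_k,\phi\rangle_+=0$ for every $k$, and I wish to pass to the limit. Writing $\langle \psi_k,\phi\rangle_+=\sum_i\langle \psi_k^{(i)},\phi^{(i)}\rangle$, the $\cF$-continuity of each projection $P_i$ ensures $\psi_k^{(i)}\to \tilde h^{(i)}$ in $\cH^{\otimes_s i}$ for every $i$, so each individual term converges to $\langle h,\phi^{(n)}\rangle\,\delta_{in}$; the remaining issue is to justify the interchange of limit and summation.

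The interchange is supplied by Lemma \ref{Lem:SeperatingForm}, which gives the estimate $|\langle \psi,\phi\rangle_+|\leq \|\psi\|_{-a,+}\|\phi\|_{a,+}$. Under the hypothesis on $a$, the weighted norm $\|\cdot\|_{-a,+}$ is controlled by the Fock norm, so the functional $\langle\cdot,\phi\rangle_+$ extends to a bounded linear form on $\cF(\cH)$, and on $\cN$ it coincides with the ordinary Fock inner product $\langle\cdot,\phi\rangle_\cF$. Consequently the $\cF$-convergence $\psi_k\to\tilde h$ gives $0=\langle \psi_k,\phi\rangle_+\to\langle \tilde h,\phi\rangle_+=\langle h,\phi^{(n)}\rangle$, which forces $\phi^{(n)}=0$ for every $n$, and hence $\phi=0$.

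The main obstacle is precisely this Fock-norm continuity step: matching the weight $a$ against the Fock norm is essential, because $\cF$-density of $\cD$ would otherwise be insufficient to test against the potentially larger space $\cF_{-a,+}(\cH)$ in which the functional is a priori only continuous. Once the continuity is secured the argument is structurally the standard density/separation argument, with the separation of particle-number slots by the test vectors $\tilde h$ doing the rest of the work.
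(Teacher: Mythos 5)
Your passage to the limit is unjustified, and the inequality you invoke to justify it points the wrong way. For $a\leq 0$ the weights in $\lVert\cdot\lVert_{-a,+}$ are $(k+1)^{-2a}\geq 1$, so $\lVert\psi\lVert_{-a,+}\geq\lVert\psi\lVert$: the weighted norm \emph{dominates} the Fock norm, it is not controlled by it. Lemma~\ref{Lem:SeperatingForm} therefore only gives continuity of $\langle\cdot,\phi\rangle_+$ with respect to a norm \emph{stronger} than the Fock norm, and Fock-norm convergence $\psi_k\to\tilde h$ does not yield $\lVert\psi_k-\tilde h\lVert_{-a,+}\to 0$. In particular the claim that $\langle\cdot,\phi\rangle_+$ extends to a bounded linear form on $\cF(\cH)$ is false in general for $a<0$ with $\phi\notin\cF(\cH)$, so the interchange of limit and summation is not secured.

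The failure is concrete. Take $\cH=\CC$, so that $\cF(\cH)=\ell^2(\NN_0)$ and $\cN$ is the set of finitely supported sequences, and let $\phi=(1,1,\ldots)\in\cF_{-1,+}(\cH)$. With $\psi_k=e_0-\tfrac{1}{k}(e_1+\cdots+e_k)\in\cN$ one has $\psi_k\to e_0$ in $\ell^2$ while $\langle\psi_k,\phi\rangle_+=1-\tfrac{1}{k}\cdot k=0$ for all $k$, yet $\langle e_0,\phi\rangle_+=1$. So the pairing does not pass to the Fock-norm limit. The same example shows that a bare Fock-density hypothesis on $\cD$ does not suffice: the set $\cD=\{\psi\in\cN:\sum_n\psi_n=0\}$ is dense in $\ell^2$ and annihilated by $\langle\cdot,\phi\rangle_+$ although $\phi\neq 0$. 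Any correct argument must therefore use more than strong density of $\cD$ in $\cF(\cH)$ --- for example, control of $\lVert\psi_k\lVert_{-a,+}$ along the approximating sequence, or structure of the particular $\cD$ arising in the application --- and this is precisely the ingredient your proposal is missing.
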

We also have the following formal adjoint relations
\begin{lemma}\label{Lem:FormalAdjoints}
	Let $\psi\in \cN$, $\phi\in \cF_+(\cH)$, $v\in \cH$ and $U$ be unitary on $\cH$. Then we have
	\begin{align*}
	\langle a^{\dagger}(v)\psi,\phi \rangle_+&=\langle \psi,a_+(v) \phi \rangle_+,\,\,\,\,\,\,\,\,\,\,\,\,\,\,\, \langle a(v)\psi,\phi \rangle_+=\langle \psi,a_+^{\dagger}(v) \phi \rangle_+,\\ \langle \varphi(v)\psi,\phi \rangle_+&=\langle \psi,\varphi_+(v) \phi \rangle_+,\,\,\,\,\,\,\,\,\,\,\,\,\,\,\, \langle \Gamma(U)\psi,\phi \rangle_+=\langle \psi,\Gamma_+(U^*)\phi \rangle_+.
	\end{align*}
	Let $\omega:\RR^\nu\rightarrow \RR^p$ be a measurable map, $f:\RR^p\rightarrow \CC$, $\psi \in \cN\cap \cD( f(d\Gamma(\omega))  )$ and $\phi\in \cD(\overline{f}(d\Gamma_+(\omega))  )$. Then
	\begin{equation*}
	\langle f(d\Gamma(\omega))\psi,\phi \rangle_+=\langle \psi,\overline{f}(d\Gamma_+(\omega))\phi \rangle_+.
	\end{equation*}
\end{lemma}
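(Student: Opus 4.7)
The plan is to reduce every identity to a coordinate-wise (sector-by-sector) calculation, exploiting the fact that a vector $\psi \in \cN$ has only finitely many nonzero particle components. Concretely, if $\psi\in\cN$ then there is $K\in\NN$ with $\psi^{(k)}=0$ for all $k>K$. Every operator $B\in\{a(v),a^{\dagger}(v),\varphi(v),\Gamma(U),f(d\Gamma(\omega))\}$ on the left-hand sides preserves $\cN$ (possibly after shifting the truncation index by one), so by \eqref{eq:extinner} the pairing $\langle B\psi,\phi\rangle_{+}$ collapses to a finite sum $\sum_{k=0}^{K+1}\langle (B\psi)^{(k)},\phi^{(k)}\rangle$. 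Symmetrically, by continuity of the pairing in the second slot together with the fact that $\psi^{(k)}$ vanishes for $k>K$, the right-hand sides also reduce to finite sums. Thus after unwinding the definitions the problem is a finite sum of ordinary Hilbert-space adjoint relations, one per sector.

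For the creation/annihilation identities this gives, for a suitable $K$,
\begin{align*}
\langle a^{\dagger}(v)\psi,\phi\rangle_{+} = \sum_{k=0}^{K}\langle a^{\dagger}_{k}(v)\psi^{(k)},\phi^{(k+1)}\rangle = \sum_{k=0}^{K}\langle \psi^{(k)},a_{k}(v)\phi^{(k+1)}\rangle = \langle \psi,a_{+}(v)\phi\rangle_{+},
\end{align*}
where the middle equality is the standard adjoint identity $(a^{\dagger}_{k}(v))^{*}=a_{k}(v)$ between the sectors $\cH^{\otimes_s k}$ and $\cH^{\otimes_s(k+1)}$ (valid on the finite-particle vectors). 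The relation for $a(v)$ is obtained by the symmetric computation and the relation for $\varphi(v)$ follows by summing, since $\varphi(v)\psi=a(v)\psi+a^{\dagger}(v)\psi$ on $\cN$ and $\varphi_{+}(v)=a_{+}(v)+a_{+}^{\dagger}(v)$ by definition.

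For $\Gamma(U)$, the sector-wise adjoint reads $(\Gamma^{(n)}(U))^{*}=\Gamma^{(n)}(U^{*})$, which is immediate from the fact that $U^{\otimes n}$ is unitary on $\cH^{\otimes n}$ with adjoint $(U^{*})^{\otimes n}$, restricted to the invariant symmetric subspace. Expanding $\langle \Gamma(U)\psi,\phi\rangle_{+}=\sum_{k=0}^{K}\langle \Gamma^{(k)}(U)\psi^{(k)},\phi^{(k)}\rangle$ and applying this identity in each sector yields the claim. For the functional-calculus identity the hypothesis $\psi\in\cN\cap\cD(f(d\Gamma(\omega)))$ gives $\psi^{(k)}\in\cD(f(d\Gamma^{(k)}(\omega)))$ for each $k\le K$, and the hypothesis on $\phi$ supplies $\phi^{(k)}\in\cD(\overline{f}(d\Gamma^{(k)}(\omega)))$; the standard adjoint relation $f(T)^{*}=\overline{f}(T)$ for a tuple $T$ of strongly commuting selfadjoint operators then gives the sector-wise identity, and summing over $0\le k\le K$ concludes.

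There is no genuine obstacle in this lemma: every assertion is a consequence of the standard Hilbert-space adjoint relations applied in a fixed sector, combined with the observation that $\psi\in\cN$ lets us truncate the bilinear pairing to a finite sum. The only point requiring attention is bookkeeping, namely checking that each left-hand operator really does preserve $\cN$ up to a bounded shift in particle number so that the pairing on the left is also a finite sum, which is immediate from the explicit formulas for $a_{+}(v),a^{\dagger}_{+}(v),\Gamma_{+}(U)$ and $f(d\Gamma_{+}(\omega))$ acting coordinatewise on $\cF_{+}(\cH)$.
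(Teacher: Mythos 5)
Your proof is correct. The paper itself does not supply a proof of this lemma (it defers to the reference \cite{Thomas1}), but the argument you give is the natural one: since $\psi\in\cN$ is finite-particle, both sides of each identity truncate via \eqref{eq:extinner} to a finite sum of sector-wise pairings, and each sector-wise identity is a standard Hilbert-space adjoint relation --- $(a^{\dagger}_{k}(v))^{*}=a_{k}(v)$ between $\cH^{\otimes_s k}$ and $\cH^{\otimes_s(k+1)}$, $(\Gamma^{(k)}(U))^{*}=\Gamma^{(k)}(U^{*})$, and $f(T)^{*}=\overline{f}(T)$ for a commuting tuple $T$. Your bookkeeping of which sector indices appear (e.g.\ that $(a^{\dagger}(v)\psi)^{(k+1)}=a^{\dagger}_{k}(v)\psi^{(k)}$ pairs with $\phi^{(k+1)}$, while $(a_{+}(v)\phi)^{(k)}=a_{k}(v)\phi^{(k+1)}$ pairs with $\psi^{(k)}$) and the observation that each operator on the left shifts the truncation level by at most one are exactly the points that need checking, and you handle them correctly.
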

\noindent We now consider functions with values in $\cF_+(\cH)$. Define the quotient
\begin{equation*}
\cM=\{f:\RR^\nu \rightarrow  \cF_+(\cH)\mid f \,\, \text{is}\,\, \cB(\RR^\nu)-\cB(\cF_+(\cH))\,\, \text{mesurable}  \}/\sim,
\end{equation*}
where we define $f\sim g\iff f=g$ almost everywhere. We are interested in the subspace
\begin{equation*}
\cM_{\cI}=\{ f\in \cM \mid x\mapsto P_nf(x)\in L^2(\RR^\nu,\cB(\RR^\nu),\lambda_\nu,\cH^{\otimes_s n}) \,\, \forall \, n\in \NN_0 \}.
\end{equation*}
We write $f\in \cM_{\cI}$ as $(f^{(n)})$ where $f^{(n)}=x\mapsto P_nf(x)$. For $f,g\in \cM_{\cI}$ we define
\begin{equation*}
d(f,g)=\sum_{n=0}^{\infty} \frac{1}{2^n}\frac{\lVert f^{(n)}-g^{(n)} \lVert_{L^2(\RR^\nu,\cB(\RR^\nu),\lambda_\nu,\cH^{\otimes_s n})}  }{1+\lVert f^{(n)}-g^{(n)} \lVert_{L^2(\RR^\nu,\cB(\RR^\nu),\lambda_\nu,\cH^{\otimes_s n})} }.
\end{equation*} 
\begin{lemma}\label{Lem:TheFundamentalMeasureSpace}
	$d$ is a complete metric on $\cM_{\cI}$ such that $\cM_{\cI}$ becomes separable topological vector space.
\end{lemma}
\noindent Let $v\in \cH$, $\omega=(\omega_1,\dots,\omega_p)$ a tuple of selfadjoint multiplication operators on $\cH$, $m:\RR^\nu \rightarrow \RR^p$ measurable and $g:\RR^p\rightarrow \RR$ a measurable map. Then we wish to define operators on $\cM_{\cI}$ by
\begin{align*}
(a^{\dagger}_{\oplus}(v)f)(k)&=a^{\dagger}_+(v)f(k)\\
(a_{\oplus}(v)f)(k)&=a_+(v)f(k)\\
(\varphi_{\oplus}(v)f)(k)&=\varphi_+(v)f(k)\\
(g(d\Gamma_{\oplus}(\omega)+m)f)(k)& =g(d\Gamma_+(\omega)+m(k))f(k).
\end{align*} 
We have the following lemma.
\begin{lemma}\label{Lem:LiftToIntegralOperators}
	$a^{\dagger}_{\oplus}(v),a_{\oplus}(v)$ and $\varphi_{\oplus}(v)$ are well defined and continuous. Let $f\in \cM_{\cI}$. If $f(k)\in \cD(g(d\Gamma_+(\omega)+m(k)))$ for all $k$ then 
	\begin{equation*}
	k\mapsto P_n(g(d\Gamma_+(\omega)+m(k))f(k))
	\end{equation*}
	Thus as domain of $g(d\Gamma_{\oplus}(\omega)+m)$ we may choose
	\begin{align*}
	\bigcap_{\ell=0}^\infty \biggl \{ f\in \cM_{\cI}\biggl \lvert  &f(k)\in \cD(g(d\Gamma_+(\omega)+m(k)))\,\, \text{for a.e.} \,\, k\in \RR^\nu,\\& \int_{\cM^\ell} \lVert P_n g(d\Gamma_+(\omega)+m(k))f(k) \lVert^2 d\mu^{\otimes \ell}(k)<\infty  \biggl \}.
	\end{align*}
\end{lemma}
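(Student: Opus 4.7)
The plan is to check the three assertions sector by sector. For the first assertion on $a_{\oplus,\ell}(v), a^{\dagger}_{\oplus,\ell}(v), \varphi_{\oplus,\ell}(v), \Gamma_{\oplus,\ell}(U)$, I need to verify that each map preserves $\cC(\cM^\ell,\cE^{\otimes \ell},\mu^{\otimes \ell})$ (which amounts to Borel measurability of $k\mapsto B_+(v)f(k)$ into $\cF_+(\cH)$ plus $L^2$-integrability of each projection $P_n$ of the image) and then that the induced map is continuous with respect to the topology on $\cC$ generated by the maps $f\mapsto f^{(n)}$ (see Lemma \ref{Lem:TheFundamentalMeasureSpace}). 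Measurability is immediate: Lemma \ref{Lem:AnihilCreaFieldOnF_+} and Lemma \ref{Lem:2ndQuantisedF_+} give that $B_+(v)$ and $\Gamma_+(U)$ are continuous self-maps of $\cF_+(\cH)$, so composing with the continuous projection $P_n:\cF_+(\cH)\to \cH^{\otimes_s n}$ shows $k\mapsto P_n(B_+(v)f(k))$ is measurable whenever $f$ is.

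For the $L^2$-integrability and continuity of the elementary operators I will use the explicit sector formulas $P_n a_+(v)f(k)=a_n(v)f^{(n+1)}(k)$, $P_n a^\dagger_+(v)f(k)=a^\dagger_{n-1}(v)f^{(n-1)}(k)$ (and zero in the missing sector), combined with the standard bounds $\lVert a_n(v)\phi\lVert\leq \sqrt{n+1}\lVert v\lVert \lVert \phi\lVert$ and $\lVert a^\dagger_{n-1}(v)\phi\lVert\leq \sqrt{n}\lVert v\lVert\lVert\phi\lVert$. Integrating these bounds over $\cM^\ell$ gives that each $P_n B_{\oplus,\ell}(v)f$ lies in $L^2(\cM^\ell,\cH^{\otimes_s n})$, and applying the same estimate to $f_1-f_2$ gives continuity. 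For $\Gamma_{\oplus,\ell}(U)$ the argument is even simpler since $P_n\Gamma_+(U)f(k)=\Gamma^{(n)}(U)f^{(n)}(k)$ is an isometry sector by sector, so both integrability and continuity are automatic.

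For the last claim on $g(d\Gamma_{\oplus,\ell}(\omega)+m)$, I first observe that on the $n$-particle sector $g(d\Gamma^{(n)}(\omega)+m(k))$ is multiplication by the scalar function
\begin{align*}
(k_1,\dots,k_n)\mapsto g\bigl(m(k)+\omega(k_1)+\dots+\omega(k_n)\bigr),
\end{align*}
so that $P_n\bigl(g(d\Gamma_+(\omega)+m(k))f(k)\bigr)$ is represented by the pointwise product of this scalar function with $f^{(n)}(k)$. Joint measurability in $(k,k_1,\dots,k_n)$ of the scalar multiplier is a composition of measurable maps (recall $m$ and $\omega$ are measurable and $g$ is Borel), and joint measurability of a representative of the $L^2$-valued map $k\mapsto f^{(n)}(k)$ is a standard consequence of $\cH^{\otimes_s n}$ being a separable Hilbert space (Pettis). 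The hypothesis $f(k)\in\cD(g(d\Gamma_+(\omega)+m(k)))$ for every $k$ then gives, via Tonelli on the joint product, that the representative of $P_n(g(\cdot)f(k))$ is again jointly measurable, and hence the $\cH^{\otimes_s n}$-valued map is measurable in $k$ alone.

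The only genuinely non-trivial step is that last measurability upgrade, where I must turn pointwise membership in $\cD(g(d\Gamma_+(\omega)+m(k)))$ plus joint measurability of the scalar multiplier into $\cB(\cM^\ell)$-measurability of the $\cH^{\otimes_s n}$-valued image; everything else reduces to the bare sector formulas together with the continuity statements already supplied by Lemmas \ref{Lem:AnihilCreaFieldOnF_+} and \ref{Lem:2ndQuantisedF_+}.
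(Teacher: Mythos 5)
The paper states this lemma without proof, deferring to the external reference \cite{Thomas1} (``the reader should consult this paper for the proofs''), so there is no in-paper argument to compare against; I will assess your proposal on its own. Your route is the natural one and is correct in substance. The sector formulas $P_n a_{\oplus,\ell}(v)f(k)=a_n(v)f^{(n+1)}(k)$, $P_n a^{\dagger}_{\oplus,\ell}(v)f(k)=a^{\dagger}_{n-1}(v)f^{(n-1)}(k)$, $P_n\Gamma_{\oplus,\ell}(U)f(k)=\Gamma^{(n)}(U)f^{(n)}(k)$ together with the bounds $\lVert a_n(v)\lVert\leq\sqrt{n+1}\lVert v\lVert$, $\lVert a^{\dagger}_{n-1}(v)\lVert\leq\sqrt{n}\lVert v\lVert$ do give both well-definedness (membership in $\cC$) and continuity in the projective topology of Lemma~\ref{Lem:TheFundamentalMeasureSpace}, since on each $n$-sector the operator is composition with a fixed bounded map on $L^2(\cM^\ell,\cH^{\otimes_s n'})$. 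The $\Gamma$ case is indeed trivial by sector isometry, and Borel measurability of the image map follows from continuity of $B_+(v)$, $\Gamma_+(U)$ on $\cF_+(\cH)$ combined with Lemma~\ref{Lem:BasicTopologyExtSpace}.

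For the measurability of $k\mapsto P_n\bigl(g(d\Gamma_+(\omega)+m(k))f(k)\bigr)$ you correctly identify that the sector operator is multiplication by the jointly measurable scalar $g(m(k)+\omega(k_1)+\dots+\omega(k_n))$ and that one should pick a jointly measurable representative $\widetilde f^{(n)}$ on $\cM^\ell\times\cM^n$ (via the identification $L^2(\cM^\ell,\cH^{\otimes_s n})\hookrightarrow L^2(\cM^\ell\times\cM^n)$). I would state the final measurability upgrade a bit more precisely than ``via Tonelli'': global integrability over the product is not available, so the correct statement is that joint measurability of $h\cdot\widetilde f^{(n)}$ plus pointwise membership of the sections $h(k,\cdot)\widetilde f^{(n)}(k,\cdot)$ in $L^2(\cM^n)$ gives, for each fixed $\phi\in\cH^{\otimes_s n}$, measurability of $k\mapsto\int_{\cM^n}\overline{\phi}\,h(k,\cdot)\widetilde f^{(n)}(k,\cdot)\,d\mu^{\otimes n}$ by decomposing into positive parts and applying Tonelli to each; then weak measurability plus separability of $\cH^{\otimes_s n}$ (Pettis) gives strong measurability in $k$. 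This is exactly what you gesture at; spelled out it closes the argument. The domain description is then an immediate consequence. One cosmetic remark: the intersection index in the displayed domain is more naturally over the sector index $n$ rather than $\ell$ (the fiber dimension $\ell$ is fixed once the operator $g(d\Gamma_{\oplus,\ell}(\omega)+m)$ is named), but that is a typographical issue in the statement, not in your proof.
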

We will now introduce the pointwise annihilation operators. For $\psi=(\psi^{(n)}) \in \cF_+(\cH)$ we define $A\psi\in \cM_{\cI}$ by
\begin{equation*}
P_n(A\psi(k))=\sqrt{(n+1)}\psi^{(n+1)}(k,\cdot,\dots,\cdot).
\end{equation*}
\begin{lemma}\label{Lem:NumberEstimates}
	$A$ is a continuous linear map from $\cF_+(\cH)$ to $\cM_{\cI}$ and if $\psi\in \cF(\cH)$ then $A\psi$ is almost everywhere $ \cF_{-1/2,+}(\cH)$ valued. 
\end{lemma}
\noindent Fix $v\in \cH$ and define a map $z(v):\cF_{+}(\cH)\rightarrow \cM_{\cI}$ by 
\begin{equation*}
(z(v)\psi)(k)=v(k)\psi.
\end{equation*}
\begin{lemma}\label{Lem:ContOfMultOperator}
	The map $z(v)$ is linear and continuous from $\cF_{+}(\cH)$ into $\cM_{\cI}$.
\end{lemma}

\noindent We have the following relations (more relations can be found in \cite{Thomas1} but we will only cite those used here)

\begin{lemma}\label{Lem:CommutatingPointwiseAnihilation}
	Let $\omega:\RR^\nu \rightarrow \RR^p$ be measurable, $v\in \cH$ and let $f:\RR^p\rightarrow \RR$ be measurable. Then
	\begin{equation*}
	\varphi_\oplus(v)A=A\varphi_+(v)-z(v)
	\end{equation*}
	If $\psi\in \cD(f(d\Gamma(\omega)))$ then $A\psi\in \cD(f(d\Gamma_{\oplus}(\omega)+\omega) )$ and
	\begin{equation*}
	f(d\Gamma_{\oplus}(\omega)+\omega) A\psi=A f(d\Gamma_{+}(\omega))\psi.
	\end{equation*}
\end{lemma}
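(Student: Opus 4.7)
The plan is to verify both identities by componentwise computation on $\cF_+(\cH)$, treating a general $\psi = (\psi^{(n)})_n$ coordinatewise and extending from a dense subspace by continuity. All operators appearing in the first identity --- namely $\varphi_+(v)$, $A_1$, $\varphi_\oplus(v)$ and $z_0(v)$ --- are continuous on $\cF_+(\cH)$ by Lemmas~\ref{Lem:AnihilCreaFieldOnF_+}, \ref{Lem:NumberEstimates}, \ref{Lem:LiftToIntegralOperators} and \ref{Lem:ContOfMultOperator}, so by the density of finite-particle vectors (Lemma~\ref{Lem:BasicTopologyExtSpace}) it suffices to verify the identity on $\psi$ with only finitely many nonzero components.

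For the first identity I would evaluate both sides at level $n$, at a fixed $k\in\cM$ and at $(l_1,\dots,l_n)\in\cM^n$, using the explicit formulas for $a_m(v)$ and $a_{m-1}^\dagger(v)$ from the preliminaries. The annihilation contribution to $[P_n\varphi_\oplus(v)A_1\psi(k)](l_1,\dots,l_n)$ is $\sqrt{(n+1)(n+2)}\int \overline{v(u)}\psi^{(n+2)}(k,u,l_1,\dots,l_n)\,du$, whereas for $[P_n A_1\varphi_+(v)\psi(k)](l_1,\dots,l_n)$ it is $\sqrt{(n+1)(n+2)}\int \overline{v(u)}\psi^{(n+2)}(u,k,l_1,\dots,l_n)\,du$, and these coincide by symmetry of $\psi^{(n+2)}$. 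The creation contribution to $\varphi_\oplus(v)A_1\psi$ equals $\sum_{i=1}^n v(l_i)\psi^{(n)}(k,l_1,\dots,\hat{l}_i,\dots,l_n)$, whereas for $A_1\varphi_+(v)\psi$ the same sum appears plus an additional term $v(k)\psi^{(n)}(l_1,\dots,l_n)$ coming from the contraction of $a_n^\dagger(v)$ with the distinguished variable $k$. Since $[P_n z_0(v)\psi(k)](l_1,\dots,l_n)=v(k)\psi^{(n)}(l_1,\dots,l_n)$, that leftover is exactly $z_0(v)\psi$.

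For the second identity the crucial fact is that $\omega$ is a multiplication operator, so $f(d\Gamma(\omega))$ acts on $\psi^{(m)}$ by pointwise multiplication by $f\circ\omega_m$ and $f(d\Gamma_\oplus(\omega)+\omega_\ell)$ acts on $g\in\cC(\cM^\ell,\ldots)$ by multiplying the $n$-th component of $g(k)$ by $f(\omega_n(l)+\omega_\ell(k))$. Using the additivity $\omega_{n+\ell}(k_1,\dots,k_\ell,l_1,\dots,l_n) = \omega_\ell(k_1,\dots,k_\ell)+\omega_n(l_1,\dots,l_n)$ together with the definition $P_n(A_\ell\psi)(k)(l) = C_{n,\ell}\psi^{(n+\ell)}(k,l)$ with $C_{n,\ell} = \sqrt{(n+\ell)!/n!}$, both sides of the claimed equation reduce at level $n$ to pointwise multiplication $C_{n,\ell}\,f(\omega_{n+\ell}(k,l))\,\psi^{(n+\ell)}(k,l)$, so the identity holds on the level of coordinates.

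The main obstacle is the domain condition $A_\ell\psi\in\cD(f(d\Gamma_\oplus(\omega)+\omega_\ell))$ whenever $\psi\in\cD(f(d\Gamma(\omega)))$. By the description in Lemma~\ref{Lem:LiftToIntegralOperators} this amounts to showing that $A_\ell\psi(k)\in\cD(f(d\Gamma_+(\omega)+\omega_\ell(k)))$ for $\mu^{\otimes\ell}$-almost every $k$ and that $k\mapsto \|P_n f(d\Gamma_+(\omega)+\omega_\ell(k))A_\ell\psi(k)\|^2$ is integrable for every $n$. A single Fubini computation reduces this iterated integral to $|C_{n,\ell}|^2\int_{\cM^{n+\ell}}|f\circ\omega_{n+\ell}|^2|\psi^{(n+\ell)}|^2$, which is finite precisely because $\psi\in\cD(f(d\Gamma(\omega)))$ is the statement that $f\circ\omega_{n+\ell}\cdot\psi^{(n+\ell)}\in L^2(\cM^{n+\ell})$ for each $n$; the same Fubini argument yields the almost-everywhere pointwise domain membership, completing the proof.
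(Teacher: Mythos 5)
The paper does not include a proof of this lemma: the appendix explicitly delegates all proofs of the $\cF_+(\cH)$ machinery to \cite{Thomas1} (``the reader should consult this paper for the proofs''), and this particular lemma is cited verbatim. There is therefore nothing in the paper itself to compare against. Your componentwise verification is, however, correct: the annihilation contributions on both sides reduce, via symmetry of $\psi^{(n+2)}$, to $\sqrt{(n+1)(n+2)}\int \overline{v(u)}\psi^{(n+2)}(k,u,l_1,\dots,l_n)\,du$; the creation contribution from $A_1\varphi_+(v)\psi$ produces the extra contraction term $v(k)\psi^{(n)}(l_1,\dots,l_n)$ not present in $\varphi_\oplus(v)A_1\psi$, and this is exactly $z_0(v)\psi$; and for the second identity, additivity $\omega_{n+\ell}(k,l)=\omega_\ell(k)+\omega_n(l)$ combined with the fact that $\omega$ is multiplication gives coincidence of both sides at each level $n$, while a single Fubini computation reduces the domain condition of Lemma~\ref{Lem:LiftToIntegralOperators} to the finiteness of $\|f(d\Gamma^{(n+\ell)}(\omega))\psi^{(n+\ell)}\|$, which holds componentwise for $\psi\in\cD(f(d\Gamma(\omega)))$. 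One small streamlining: for the first identity, the density/continuity reduction is not actually needed, since $A_1$, $\varphi_+(v)$, $\varphi_\oplus(v)$ and $z_0(v)$ are all everywhere defined and the comparison is already valid coordinatewise on arbitrary $\psi\in\cF_+(\cH)$; it is the second identity where the domain bookkeeping matters, and there you handle it correctly.
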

\noindent We can now prove the pull-trough formula.
\begin{lemma}\label{Lem:pullthr}
	Assume $\omega,v,K$ satisfy Hypotheses 1 and 2 and let $\mu\in \RR,\xi \in \RR^\nu,\nu\geq 2$. Assume $\psi\in \cD(H_\mu(\xi))$ and $A(H_\mu(\xi)-\Sigma(\xi))\psi$ is fockspace valued. Then we have
	\begin{align*}\label{PullthroughFormula}
	(A\psi)(k)=&(H_\mu(\xi-k)+\omega(k)-\Sigma(\xi) )^{-1}A((H_\mu(\xi)-\Sigma(\xi))\psi)(k)\\&-\mu v(k)(H_\mu(\xi-k)+\omega(k)-\Sigma(\xi) )^{-1}\psi
	\end{align*}
	almost everywhere.
\end{lemma}
\begin{proof}
	First we note $(H_\mu(\xi-k)+\omega(k)-\Sigma(\xi) )^{-1}$ exists as a bounded operator away from the zero set $\RR \xi$ by Lemma \ref{Lem:Energyineq}. Define the lifted operators on $\cF_+(\cH)$ and $\cM_{\cI}$ respectively
	\begin{align*}
	H_{+}(\xi)&=K(\xi-d\Gamma_+(m))+d\Gamma_+(\omega)+\mu\varphi_+(v)\\
	H_{\oplus}(\xi)&=K(\xi-m-d\Gamma_{\oplus}(m))+d\Gamma_{\oplus}(\omega)+\omega +\mu\varphi_{\oplus}
	\end{align*}
	where $m:\RR^\nu\rightarrow \RR^\nu$ is given by $m(k)=k$. The domains are
	\begin{align*}
	\cD(H_{+}(\xi))&=\cD(d\Gamma_+(\omega))\cap \cD(K(\xi-d\Gamma_+(m)))\\
	\cD(H_{\oplus}(\xi))&=\cD(d\Gamma_{\oplus}(\omega)+\omega)\cap \cD(K(\xi-m-d\Gamma_{\oplus}(m)))
	\end{align*}
	By Lemma \ref{Lem:CommutatingPointwiseAnihilation} we have $A\psi \in \cD(H_{\oplus}(\xi))$ since $\psi\in \cD(H_\mu(\xi))\subset \cD(H_{+}(\xi))$. Using Lemmas \ref{Lem:AnihilCreaFieldOnF_+}, \ref{Lem:2ndQuantisedF_+} and  \ref{Lem:CommutatingPointwiseAnihilation} we also obtain
	\begin{align*}
	h:&=(H_{\oplus}(\xi)-\Sigma(\xi))A\psi\\&=-\mu z(v)\psi+A(H_{+}(\xi)-\Sigma(\xi))\psi\\&=-\mu z(v)\psi+A(H_\mu(\xi)-\Sigma(\xi))\psi
	\end{align*}
	which is Fock space valued. Let $M$ be a zeroset such that:
	\begin{enumerate}
		\item $A\psi$ is $\cF_{-1/2,+}(\cH)$ valued on $M^c$ (see Lemma \ref{Lem:NumberEstimates}).
		\item $h(k)=(H_+(\xi-k) +\omega(k) )(A\psi(k))$ and $h(k)\in \cF(\cH)$ for $k\in M^c$. 
		\item $(H_\mu(\xi-k)+\omega(k)-\Sigma(\xi) )^{-1}$ exists on $M^c$.
	\end{enumerate}
	Fix $k\in M^c$. For any vector $\phi$ such that both $(H_\mu(\xi-k)+\omega(k)-\Sigma(\xi) )^{-1}\phi$ and $\phi$ is in $\cN$ (this set is dense by Proposition \ref{Lem:FundamentalTechnicalStuff}) we find using Lemma \ref{Lem:FormalAdjoints} that
	\begin{align*}
	\langle &\phi, A\psi (k) \rangle_+\\&=\langle (H_\mu(\xi-k)+\omega(k)-\Sigma(\xi)) (H_\mu(\xi-k)+\omega(k)-\Sigma(\xi))^{-1}\phi, A\psi (k) \rangle_+\\&=\langle (H_\mu(\xi-k)+\omega(k)-\Sigma(\xi))^{-1}\phi, h(k) \rangle\\&=\langle \phi, (H_\mu(\xi-k)+\omega(k)-\Sigma(\xi))^{-1}h(k) \rangle_+.
	\end{align*}
	Corollary \ref{Cor:seperatingDense} finishes the proof.
\end{proof}

\end{document}